\documentclass[12pt]{article}
\usepackage{amsmath,amssymb,amsthm, amsfonts} 
\usepackage[authoryear,round]{natbib}
\usepackage{hyperref}
\usepackage{graphicx, booktabs, multirow} 
\usepackage{setspace, enumitem}
\usepackage{float}
\usepackage{booktabs}
\usepackage{multirow}
\usepackage{caption}
\usepackage{subcaption}
\usepackage{cleveref}
\hypersetup{hidelinks, colorlinks=true, allcolors=black, pdfstartview=Fit, breaklinks=true}
\usepackage{lmodern, bm, mathrsfs, bbm, xcolor, soul}
\usepackage{appendix, authblk}
\numberwithin{equation}{section}
\allowdisplaybreaks[4]
\newtheorem{lemma}{Lemma}
\newtheorem{assumption}{Assumption}
\newtheorem{definition}{Definition}
\newtheorem{theorem}{Theorem}

\newcommand{\blind}{1}

\begin{document}
\date{} 
\def\spacingset#1{\renewcommand{\baselinestretch}{#1}\small\normalsize}\spacingset{1} 

\if1\blind
{
  \title{\fontsize{20pt}{30pt}\bf Deep Regression for Repeated Measurements under Covariate Shift}
  \author[a]{Yingxuan Wang}
  \author[a]{Xiangyu Xing} 
    \author[a] {Wangli Xu$^*$}
  \affil[a]{Center for Applied Statistics and School of Statistics, Renmin University of China,
Beijing, 100872, China}
  \maketitle
  \vspace{-1cm}
  \def\thefootnote{\fnsymbol{footnote}}
} \fi
\spacingset{1.9}
\if0\blind
{ 
  \bigskip\bigskip\bigskip
  \begin{center}
    {\fontsize{20pt}{30pt}\selectfont\bfseries Deep Regression for Repeated Measurements under Covariate-Shift}
  \end{center}
  \medskip
  \vspace{-0.8cm}
} \fi

\spacingset{1}
\bigskip
\begin{abstract}
This paper studies nonparametric regression with repeated measurements when the response in the target domain is unobservable or costly to collect. We adopt a transfer learning framework that leverages a source domain with observable responses under covariate shift. The target regression function is estimated by correcting the distribution shift via the density ratio. We consider both known and unknown density ratio scenarios, which reflect different data available for nonparametric regression estimation. 
In both cases, we further address two settings: the uniformly bounded density ratio and the unbounded case with finite moment conditions. 
Under the unknown density ratio scenario, both the density ratio and the target regression function are estimated using rectified linear unit (ReLU) feedforward neural networks (FNNs), whereas under the known density ratio scenario, only the target regression function is estimated by ReLU FNNs.
Theoretically, we establish non-asymptotic error bounds for the proposed estimators and prove that they achieve the minimax optimal convergence rate under the repeated measurements setting. 
Notably, we develop a novel approximation theory where the constants of the network parameters depend polynomially, rather than exponentially as in existing works, on the dimension, thereby mitigating the curse of dimensionality.
Consequently, we derive sharper non-asymptotic bounds for the stochastic error. The finite sample performance of the proposed method is demonstrated through numerical simulations and a real data application.
\end{abstract}

{\it Keywords: Repeated measurements, Covariate shift, Deep neural networks}

\spacingset{1.9}

\section{Introduction}
Repeated measurements, also called clustered data or longitudinal data,  have been widely encountered in statistical applications across diverse fields including biomedicine, economics and engineering \citep{frison1992repeated,iosif2014model,Yan02102025}.
For example, in PM2.5 concentrations monitoring data, each station represents a subject and concentration levels are measured at multiple time points.
Measurements are independent across different subjects yet correlated within the same subject, which is referred to as clustered dependence.
The correlation among repeated measurements may render   statistical methods for independent data inapplicable, leading to the development of many  methods for such data.
For instance, 
\citet{xu2016estimating} studied sufficient dimension reduction for repeated measures data. \citet{liu2026fixed} developed a Bayesian-motivated test for high-dimensional linear mixed models.

For repeated measurements data, extensive research has been conducted on modeling the impact of covariates on response variables, provided that both are observable. For the estimation of regression models, various nonparametric approaches have been proposed in the literature. 
For instance, 
\citet{Lin01062000} and \citet{2003Marginal} utilized kernel smoothing methods to investigate nonparametric estimation strategies for the correlation of clustered data.
\citet{hoover1998nonparametric}, \citet{Lv2020Nonparametric} and \citet{Deng2026Semiparametric} investigated the spline-based techniques to account for the relationship of covariate and response variables in longitudinal data.
Recently, due to the superior adaptivity and the ability to mitigate the curse of dimensionality in nonparametric estimation, deep neural networks (DNNs) have been widely researched, particularly the function class of feedforward neural networks (FNNs) with rectified linear unit (ReLU) 
\citep{yarotsky2017error,10.1214/18-AOS1747,jiao2023deep}.
Building on the favorable theoretical properties of DNNs, recent work by \citet{Yan02102025} adapted DNNs to repeated measurements data and established a comprehensive framework using ReLU FNNs, deriving minimax optimal bounds.

The aforementioned studies on repeated measurements data assume that both covariates and response variables are observed. However, in practical applications, most observations of the response variable in the dataset of interest are often unobserved, or costly or time-consuming to acquire, while the covariates are observed.
Studies such as \citet{2017Efficient}, \citet{oliver2019} and \citet{Azriel02102022} indicated that scenarios with unobserved responses are prevalent.
To study the effects of covariates on a response variable, one possible method is to employ transfer learning.
This approach treats the dataset of  interest 
as the target domain and leverages related datasets with observed responses as source domains
to perform statistical inference on the target domain. 
A fundamental assumption of transfer learning is that source and target domains share similar characteristics, with the differences often formalized as types of distribution shift, such as label shift, posterior shift, concept drift, and covariate shift. Under the scenario of independent data, transfer learning studies for cases where the response variable is rarely observed have already been conducted. Among others, 
\citet{pmlr-v80-lipton18a} proposed a black box shift estimation for label shift. 
\citet{cai2021transfer} developed a rate-optimal two-sample weighted K-nearest neighbors estimator under posterior drift.
\citet{Yang2022Concept} proposed a hybrid ensemble approach for concept drift-tolerant transfer learning problem.
\citet{feng2024deep} established nonparametric quantile regression with ReLU FNNs under covariate shift.

Among the various types of dataset shift, covariate shift is one of the most commonly studied scenarios. In this setting, the marginal distribution of covariates in the source domain differs from that in the target domain while the conditional distribution of the response given the covariates remains the same.
A critical approach to correcting covariate shift between the target and source domains is importance weighting via  the density ratio, as demonstrated in existing works like 
\citet{sugiyama2007direct,sugiyama2012density}, \citet{ma2023optimally} and \citet{feng2024deep}. 
However, in scenarios where the response variable is rarely observed, most transfer learning studies addressing covariate shift have been developed for independent data. For repeated measurements where both covariates and response variables are fully observed, existing research has shown that DNNs have superior estimation properties compared to other nonparametric methods, such as kernel or spline-based methods.
Motivated by the frequent absence of response variables in repeated measurements data and the favorable nonparametric properties of DNNs, this paper focuses on transfer learning under covariate shift and proposes a deep transfer learning framework. In this framework, both the density ratio and the target regression function are approximated by ReLU FNNs, enabling adaptive estimation with theoretical guarantees.
Our key contributions are summarized as follows:
\begin{itemize}
    \item [1.] We establish a theoretical framework for deep repeated measurements regression under covariate shift, adopting density ratio correction and considering two scenarios including when the density ratio is unknown and when it is known. 
    For each scenario, we derive non-asymptotic error upper bounds for the estimators under both uniformly bounded and unbounded density ratio settings. In the  unknown density ratio scenario, we split the source data into two disjoint subsets with one used for density ratio estimation and the other for target regression estimation. In the known scenario, we use the entire source data for the target regression estimation. Under the corresponding conditions in different density ratio settings, we prove that the resulting nonparametric estimators in both scenarios achieve the minimax optimal convergence rate, up to logarithmic factors.
    
    \item [2.] We show  that leveraging more source domain information in the known density ratio scenario leads to tighter error bounds for estimating the target regression function, a result that holds under both bounded and unbounded density ratio settings. We also demonstrate that, for the unknown density ratio scenario, optimal error bounds can be achieved by appropriately splitting the data to estimate the density ratio and the nonparametric regression function separately. We further show that under independent observations, i.e., one observation per subject, our error bound achieves the existing minimax rate, identifying it as a special case of our repeated measurements framework.

    \item [3.] We develop a novel  technique for bounding the approximation error where the constants of the network parameters depend polynomially on the covariate dimension,  rather than exponentially in contrast to existing works. Our approach mitigates the curse of dimensionality and allows for a smaller network size while achieving the same approximation rate, providing a corresponding theoretical guarantee. Moreover, we derive sharper non-asymptotic bounds for the stochastic error term, featuring a smaller poly-logarithmic exponent compared with existing literature. The effectiveness of our proposed method is demonstrated through extensive numerical analysis.

\end{itemize}

The remainder of the paper is organized as follows. Section \ref{Sec: Model Estimation} introduces the repeated measurements regression model under covariate shift and presents the estimators under different scenarios. 
This section also describes the function class of ReLU FNNs and the Hölder smoothness framework.
In Section \ref{sec: Theoretical Analysis}, we establish the non-asymptotic error bounds for different estimators and provide the corresponding theoretical guarantees.
Section \ref{sec: Numerical Studies} evaluates the proposed method via simulations and a real data example.
All technical proofs are included in Appendix.

We now introduce the notation used in the paper. We use $\mathbb{N}_{0}$, $\mathbb{N}$ and $\mathbb{R}$ to denote the set of nonnegative integers, positive integers and real numbers, respectively. For two real sequences $a_n$ and $b_n$, we define $a_n\lesssim b_n$ or equivalently $a_n=O(b_n)$ if there exists a constant $C>0$ such that $a_n\leq C{b}_{n}$ for all sufficiently large $n$. The notation $a_n=\Theta(b_n)$ indicates that $a_n\gtrsim b_n$, and $a_n\asymp b_n$ means that both $a_n \lesssim b_n$ and $a_n \gtrsim b_n$ hold.
  For a measurable function $f(\cdot):\Omega\to\mathbb{R}$ and $1\leq p<\infty$, the $L_{p}$ norm with respect to the distribution $P_{\boldsymbol{X}}$ is defined by $\|f\|_{L_{p}(P_{\boldsymbol{X}})}:=\left(\int_\Omega|f(\boldsymbol{X})|^pdP_{\boldsymbol{X}}\right)^{1/p}$ and the supremum norm is $\|f\|_\infty:=\sup_{\boldsymbol{X}\in\Omega}|f(\boldsymbol{X})|$. For a vector $\boldsymbol{\alpha}=(\alpha_1,\ldots,\alpha_d)^\top\in\mathbb{R}^d$, let $\|\boldsymbol{\alpha}\|_0:=\sum_{j=1}^d\mathbf{1}\{\alpha_j\neq0\}$, $\|\boldsymbol{\alpha}\|_1:=\sum_{j=1}^d|\alpha_j|$ and $\|\boldsymbol{\alpha}\|_\infty:=\max_{1\leq j\leq d}|\alpha_j|$, where $\mathbf{1}(\cdot)$ denotes the indicator function. 
The vectorization operator $\text{vec}(\boldsymbol{A})$ denotes the vector formed by stacking the columns of matrix $\boldsymbol{A}$ into a single column vector. For any $x \in \mathbb{R}$, let $\lceil x \rceil$ denotes the smallest integer greater than or equal to $x$.

\section{Model Estimation}
\label{Sec: Model Estimation}

In this paper, we consider repeated measurements data from both target and source domains. In the target domain, the response variable is unobserved, whereas it is fully observed in the source domains. Specifically, the target and source datasets are respectively denoted as $\mathcal{D}^{Q}=\{\boldsymbol{X}_{ij}^{Q}: 1\leq i\leq n^{Q}, 1\leq j\leq m_i^Q\}$ and $\mathcal{D}^{P}=\{(\boldsymbol{X}_{ij}^{P},Y_{ij}^{P}):1\leq i\leq n^{P}, 1\leq j\leq m_i^P\}$, where $\boldsymbol{X}_{ij}^{Q}$ represents the $j$-th covariate vector for subject $i$ in the target data, while $(\boldsymbol{X}_{ij}^{P},Y_{ij}^{P})$ denotes the $j$-th covariate vector and its corresponding response for subject $i$ in the source data. 

Assume that if the response variable $Y_{i j}^Q$ corresponding to $\boldsymbol{X}_{ij}^Q$ were observed, the data would follow the model
\begin{equation} \label{model_Q}
Y_{i j}^Q=f_0(\boldsymbol{X}_{i j}^Q)+f_i(\boldsymbol{X}_{i j}^Q)+\varepsilon_{i j},
\end{equation}
where $f_0(\boldsymbol{X}_{i j}^Q)=\mathbb{E}(Y_{ij}^Q \mid \boldsymbol{X}_{i j}^Q)$ is the regression function of interest, $f_{i}(\cdot)$ are the subject specific stochastic effect functions satisfying $\mathbb{E}(f_i(\boldsymbol{X}_{i j}^Q) \mid \boldsymbol{X}_{i j}^Q)=0$, and $\varepsilon_{i j}$ are independent noise variables. Our primary objective is to estimate $f_0(\cdot)$ in (\ref{model_Q}), and we denote the corresponding estimator by $\hat{f}_0(\cdot)$. If $Y_{i j}^Q$ were observed, a natural approach would be to minimize the following expression
\begin{equation} \label{min_Q}
    \hat{f}_0(\cdot) \in \arg\min_{f \in \mathcal{F}} \frac{1}{N^Q} \sum_{i=1}^{n_{Q}} \sum_{j=1}^{m_i^Q} (Y_{ij}^Q- f(\boldsymbol{X}_{ij}^Q))^2,
\end{equation}
where $N^Q$ is the total number of observations in the target domain, and $\mathcal{F}$ is a suitably chosen function class.

In fact, the response $Y_{i j}^Q$ is unobserved in the target domain. We therefore propose to estimate $\hat{f}_0(\cdot)$ via transfer learning under a covariate shift scenario. 
Let $P_{\boldsymbol{X}}$ and $Q_{\boldsymbol{X}}$ denote the marginal distributions of the covariates in the source and target domains, respectively, and let $P_{Y \mid \boldsymbol{X}}$ and $Q_{Y \mid \boldsymbol{X}}$ denote the corresponding conditional distributions of the response given the covariates. Under the covariate shift scenario, we assume that 
${P}_{\boldsymbol{X}}$ and ${Q}_{\boldsymbol{X}}$ can be different but ${P}_{Y|\boldsymbol{X}}={Q}_{Y|\boldsymbol{X}}$.
In this case, it can be derived that
\begin{equation} \label{den_rati}
\mathbb{E}_{(\boldsymbol{X},Y) \sim P_{\boldsymbol{X},Y}} [ r(\boldsymbol{X})  (Y - f_0(\boldsymbol{X}))^2 ]=\mathbb{E}_{(\boldsymbol{X}, Y) \sim Q_{\boldsymbol{X}, Y}} [(Y - f_0(\boldsymbol{X}))^2], 
\end{equation}
where $r(\boldsymbol{X})$ is the density ratio defined as $r(\boldsymbol{X})=q_{\boldsymbol{X}}(\boldsymbol{X})/p_{\boldsymbol{X}}(\boldsymbol{X})$, with  $p_{\boldsymbol{X}}(\cdot)$ and $q_{\boldsymbol{X}}(\cdot)$ denoting the probability density functions of $P_{\boldsymbol{X}}$ and $Q_{\boldsymbol{X}}$, respectively. Note that minimizing expression (\ref{min_Q}) essentially corresponds to the sample version of $\mathbb{E}_{(\boldsymbol{X}, Y) \sim Q_{\boldsymbol{X}, Y}} [(Y - f_0(\boldsymbol{X}))^2]$, according to  equation in (\ref{den_rati}), it can be replaced by the sample version of $\mathbb{E}_{(\boldsymbol{X},Y) \sim P_{\boldsymbol{X},Y}} [ r(\boldsymbol{X})\\(Y-f_0(\boldsymbol{X}))^2 ]$.
The estimation of $f_0(\cdot)$ is addressed under two distinct scenarios concerning the density ratio function $r(\cdot)$: (i) when it is unknown, and (ii) when it is known. We discuss these two cases separately below.

\begin{itemize}
    \item  \textbf{Scenario I: when  density ratio function $r(\cdot)$ is unknown.}
\end{itemize}

In this scenario, it is necessary to first estimate $r(\cdot)$ and subsequently estimate $f_0(\cdot)$ based on the obtained estimator of $r(\cdot)$. To establish theoretical properties for both estimators, the two functions are estimated using separate datasets. Specifically, 
we partition the source domain data $\mathcal{D}^{P}=\{(\boldsymbol{X}_{ij}^{P},Y_{ij}^{P}):1\leq i\leq n^{P}, 1\leq j\leq m_i^P\}$ into two disjoint subsets: $\mathcal{D}_1^{P}=\{(\boldsymbol{X}_{ij}^{P},Y_{ij}^{P}):1\leq i\leq n_1^{P}, 1\leq j\leq m_i^P\}$ and $\mathcal{D}_2^{P}=\{(\boldsymbol{X}_{ij}^{P},Y_{ij}^{P}):n_1^{P}+1 \leq i\leq n^{P}, 1\leq j\leq m_i^P\}$.
Following \citet{sugiyama2012density}, for any measurable function $v(\cdot)$, 
the density ratio $r(\cdot)$ is the minimizer of the following risk
\begin{equation} 
r(\cdot)=\arg\min_{v}L(v):= \arg\min_{v}\big \{\frac 1 2\,\mathbb{E}_{\boldsymbol{X}\sim P_{\boldsymbol{X}}}\!(v(\boldsymbol{X})^2)
    - \mathbb{E}_{\boldsymbol{X}\sim Q_{\boldsymbol{X}}}\!(v(\boldsymbol{X}))\big\}.
\label{r_least-squares}
\end{equation}
Therefore, an estimator $\hat r(\cdot)$ of the density ratio derived from $\mathcal{D}_2^{P}$ and $\mathcal{D}^{Q}$ can be obtained by
\begin{equation}
    \hat{r}(\cdot)\in\arg\min_{v \in \mathcal{V}}\widehat{L}(v):=
\arg\min_{v \in \mathcal{V}} \Big \{
\frac{1}{2 N_2^P}
\sum_{i=n_1^P+1}^{n^P}\sum_{j=1}^{m_i^{P}} v(\boldsymbol{X}_{ij}^P)^2
 - \frac{1}{N^Q}\sum_{i=1}^{n^Q}\sum_{j=1}^{m_i^{Q}} v(\boldsymbol{X}_{ij}^Q)\Big \}, \label{r_hat}
\end{equation}
where $\mathcal{V}$ is a suitably chosen function class, and $N_2^P$ is the total number of observations of dataset $\mathcal{D}_2^{P}$.
Therefore, the estimator $\hat{f}_{0,\hat{r}}(\cdot)$ for $f_0(\cdot)$,  constructed from dataset  $\mathcal{D}_1^{P}$,
can be obtained via the sample version of $\mathbb{E}_{(\boldsymbol{X},Y) \sim P_{\boldsymbol{X},Y}} [ r(\boldsymbol{X})  (Y - f_0(\boldsymbol{X}))^2 ]$, which is 
 \begin{equation}
\hat{f}_{0,\hat{r}}(\cdot)
\in
\arg\min_{f\in\mathcal{F}}
  \frac{1}{N_1^{P}}\sum_{i=1}^{n_1^{P}}\sum_{j=1}^{m^{P}_{i}}
   \hat{r}(\boldsymbol{X}_{ij}^{P})(Y_{ij}^{P} - f(\boldsymbol{X}_{ij}^{P}))^2,
\label{f_hat_r_hat}
\end{equation}
where, for notational simplicity, we use the same notation $\mathcal{F}$ for suitably chosen function classes throughout, although the specific class may vary, and $N_1^P$ is the total number of observations of dataset $\mathcal{D}_1^{P}$.

\begin{itemize}
    \item  \textbf{Scenario II:  when density ratio  function $r(\cdot)$ is known.}
\end{itemize}

In this scenario, estimating $r(\cdot)$ is not required. Instead of estimating $f_0(\cdot)$ using only the subset $\mathcal{D}_1^P$ of the source data, we may directly employ the full source dataset $\mathcal{D}^P$. Consequently, we obtain
 \begin{equation}
\hat{f}_{0,{r}}(\cdot)
\in
\arg\min_{f\in\mathcal{F}}
  \frac{1}{N^{P}}\sum_{i=1}^{n^{P}}\sum_{j=1}^{m^{P}_{i}}
   {r}(\boldsymbol{X}_{ij}^{P})(Y_{ij}^{P} - f(\boldsymbol{X}_{ij}^{P}))^2,
\label{f_hat_r_hat1}
\end{equation}
where $\mathcal{F}$ is a suitably chosen function class, and $N^P=N_1^P+N_2^P$ is the total number of observations in the source domain.

To estimate $\hat r(\cdot)$ in \eqref{r_hat},  $\hat{f}_{0,\hat{r}} (\cdot)$ in \eqref{f_hat_r_hat} and $\hat{f}_{0, {r}} (\cdot)$ in \eqref{f_hat_r_hat1},  we need to specify the function classes $\mathcal{V}$ and $\mathcal{F}$.
In this paper, we consider the function class of ReLU FNNs. 
We define a ReLU FNN as a function $\phi(\cdot)$ on $\mathbb{R}^d$ constructed recursively.
Let $\mathcal{D}$ denote the number of hidden layers, and let 
$N_l$ for $l=0, 1, \cdots, \mathcal{D}+1$
represent the widths of all layers, 
where $N_0 = d$ and $N_{\mathcal{D}+1} = 1$ are the input and output dimensions, respectively.
We define $\phi_0(\boldsymbol{X})=\boldsymbol{X}$ as the input layer and for $l=0,\dots,\mathcal{D}-1$, the $(l+1)$-th hidden layer is defined as $ \phi_{l+1}(\boldsymbol{x}) = \sigma(\boldsymbol{A}_l \phi_l(\boldsymbol{x}) + \boldsymbol{b}_l)$, 
where $\sigma(\cdot) = \max(0,\cdot)$ is the element-wise ReLU activation function 
and $\boldsymbol{A}_l \in \mathbb{R}^{N_{l+1}\times N_l}$ is the weight matrix and $\boldsymbol{b}_l \in \mathbb{R}^{N_{l+1}}$ is the bias vector. The final output of the network is given by
\begin{equation}
\phi(\boldsymbol{X})=\boldsymbol{A}_{\mathcal{D}}\phi_{\mathcal{D}}(\boldsymbol{X})+\boldsymbol{b}_{\mathcal{D}}, \quad 
\boldsymbol{A}_{\mathcal{D}}\in \mathbb{R}^{N_{\mathcal{D}+1}\times N_{\mathcal{D}}} ~\text{and} ~\boldsymbol{b}_{\mathcal{D}} \in \mathbb{R}^{N_{\mathcal{D}+1}},
\label{phi}
\end{equation} 
the set of all learnable parameters is denoted by $\boldsymbol{\theta} = \{(\boldsymbol{A}_l, \boldsymbol{b}_l)\}_{l=0}^{\mathcal{D}}$.
To quantify the complexity of the network, we denote that the width $\mathcal{W} = \max\left\{N_1, \dots, N_{\mathcal{D}}\right\}$ is the maximum width among all hidden layers, and the size $\mathcal{S} =\|\boldsymbol{\theta}\|_0 = \sum_{l=0}^{\mathcal{D}} \left(\|\text{vec}(\boldsymbol{A}_l)\|_0 + \|\boldsymbol{b}_l\|_0\right)$ is the number of non-zero elements in $\boldsymbol{\theta}$. The weight bound $\mathcal{B}$ is the maximum absolute value of all elements in $\boldsymbol{\theta}$, that is, $\|\boldsymbol{\theta}\|_\infty = \max_{l\in \{0, \dots, \mathcal{D}\}}
\max\{\|\text{vec}(\boldsymbol{A}_l)\|_\infty, \|\boldsymbol{b}_l\|_\infty\}\leq \mathcal{B}$. 

Based on these definitions, for the function $\phi(\cdot): \mathbb{R}^{d} \to \mathbb{R}$ defined in \eqref{phi}, the function class of ReLU FNNs with $\mathcal{W}$, $\mathcal{D}$, $\mathcal{S}$, and $\mathcal{B}$ is defined as
\begin{equation}
\label{FNN_d_1}
    \mathcal{F}_{d,1}(\mathcal{W},\mathcal{D},\mathcal{S},\mathcal{B}) = \bigl \{ \phi(\cdot): \phi(\cdot) \,  \text{satisfying } 
        \max_{l} N_l \leq \mathcal{W}, 
    \|\boldsymbol{\theta}\|_0 \leq \mathcal{S},  \|\boldsymbol{\theta}\|_\infty \leq \mathcal{B}
    \bigl\}.
\end{equation} 
The sparsity constraint $\mathcal{S}$ permits the weight matrices and bias vectors to contain zero entries. This structural sparsity effectively reduces the model complexity, allowing the derivation of sharper theoretical bounds.

To formalize the smoothness of the regression function $f(\cdot)$, we work with the Hölder class defined in the following definition.

\begin{definition}[H\"older Class $\mathcal{H}^\zeta(\text{[0,1]}^d, B)$]
For $\Omega = [0,1]^d$, the H\"older class is defined as:
\begin{equation} \label{holder}
\mathcal{H}^\zeta([0,1]^d, B) := \bigl\{ f: \Omega \to \mathbb{R} \,\big|\, \max_{\|\boldsymbol{\alpha}\|_1 \le t}\|\partial^{\boldsymbol{\alpha}} f\|_{\infty}  \leq B,\ \max_{\|\boldsymbol{\alpha}\|_1= t}\sup_{\substack{\boldsymbol{x},\boldsymbol{y}\in\Omega\\ \boldsymbol{x}\neq \boldsymbol{y}}} 
\frac{|\partial^{\boldsymbol{\alpha}} f(\boldsymbol{x}) - \partial^{\boldsymbol{\alpha}} f(\boldsymbol{y})|}{\|\boldsymbol{x} - \boldsymbol{y}\|_{\infty}^\sigma} \leq B \bigr\},
\end{equation}
where $t \in \mathbb{N}_0$, $\sigma \in (0,1]$, $\zeta = t + \sigma$, $\boldsymbol{\alpha} \in \mathbb{N}_0^d$ and \(B > 0\).
\end{definition}

Throughout the paper, we assume that the true regression function
$f_0(\cdot)$ belongs to $\mathcal{H}^\zeta([0,1]^d,B)$ for some $\zeta>0$
and $B>0$.

\section{Theoretical Analysis} \label{sec: Theoretical Analysis}
In this section, we establish non-asymptotic error upper bounds for the density ratio estimator $\hat{r}(\cdot)$, the estimators $\hat{f}_{0,\hat{r}}(\cdot)$, and $\hat{f}_{0,r}(\cdot)$ under different conditions.

\subsection{When density ratio function $r(\cdot)$ is unknown}
We investigate the theoretical properties of the estimators when the true density ratio $r(\cdot)$ is unknown. Specifically, we first establish a non-asymptotic bound for the mean squared error of the density ratio estimator $\hat{r}(\cdot)$ and then analyze final estimator $\hat{f}_{0,\hat{r}}(\cdot)$. In the following, we introduce some technical assumptions for theoretical analysis under different scenarios. For notational simplicity, we denote $q_{\boldsymbol{X}}(\boldsymbol{X})$ and $p_{\boldsymbol{X}}(\boldsymbol{X})$ as $q_{\boldsymbol{X}}$ and $p_{\boldsymbol{X}}$, respectively.

\begin{assumption}\label{assumption rsupp}
The support of the target density $q_{\boldsymbol{X}}$ is contained in that of the source density $p_{\boldsymbol{X}}$, denoted as $\mathrm{supp}(q_{\boldsymbol{X}}) \subseteq \mathrm{supp}(p_{\boldsymbol{X}})$.
\end{assumption}
\begin{assumption}
    \label{assumption rb} The density ratio $r(\cdot)$ is uniformly bounded, i.e., $\Gamma:=\operatorname*{sup}_{\boldsymbol{X}\in\mathcal{X}}r(\boldsymbol{X})<\infty.$
\end{assumption}
\begin{assumption}
    \label{assumption rholder}
The density ratio $r(\cdot)$ is H\"older continuous that $r(\cdot) \in \mathcal{H}^\alpha([0,1]^d, \Gamma)$ for some constant $\alpha>0$.
\end{assumption}
\begin{assumption}
    \label{assumption rb0}
    The density ratio $r(\cdot)$ has a strictly positive lower bound, that is, $M:=\inf_{\boldsymbol{X}\in\mathcal{X}}r(\boldsymbol{X})>0$.
\end{assumption}
\begin{assumption}
    \label{assumption rub} There exists $\delta>0$ such that the density ratio $r(\cdot)$ is unbounded but has a finite $(\delta+2)$-th moment with respect to $P_{\boldsymbol{X}}$, namely, $\kappa_{\delta}:=\mathbb{E}_{\boldsymbol{X}\sim P_{\boldsymbol{X}}}[r^{\delta+2}(\boldsymbol{X})]<\infty$.
\end{assumption}

Assumption \ref{assumption rsupp} 
requires that the support of the target probability density function is contained in that of the source, which is widely adopted in covariate shift works \citep{ma2023optimally,feng2023towards}  to ensure that the density ratio is well-defined.
Assumption \ref{assumption rb} requires that the density ratio is uniformly upper bounded by $\Gamma$. Assumption \ref{assumption rholder} 
imposes a standard smoothness condition on the density ratio function which plays a crucial role in establishing consistency for the estimator. Assumption \ref{assumption rb0} requires that the density ratio is uniformly bounded away from zero. 
Assumption \ref{assumption rub}
requires the density ratio is unbounded but has a finite $(\delta+2)$-th moment, relaxing the strict uniformly bounded condition.

In the following, to investigate the theoretical properties of $\hat{f}_{0,\hat{r}}(\cdot)$ in \eqref{f_hat_r_hat}, we begin by analyzing the density ratio estimator $\hat{r}(\cdot)$ under two scenarios where the true density ratio $r(\cdot)$ is uniformly bounded by $\Gamma>0$ or is unbounded but has a finite $(\delta+2)$-th moment.
We first establish a non-asymptotic error bound for $\hat{r}(\cdot)$ under  uniformly bounded case. 
Following  \citet{jiao2023deep}, 
we decompose the mean squared error $\mathbb{E}[\|\hat{r}-r\|_{L_2(P_{\boldsymbol{X}})}^2]$ into a stochastic error bound and an approximation error, i.e. $\mathbb{E}[\|\hat{r}-r\|_{L_2(P_{\boldsymbol{X}})}^2] \lesssim  \mathbb{E}[L(r)-2\widehat{L}(\hat{r})+L(\hat{r})]+ \inf_{v\in \mathcal{V}}\|r-v\|_{L_2(P_{\mathbf{X}})}^2$, where $L(\cdot)$ and $\hat L(\cdot)$ are defined in (\ref{r_least-squares}) and (\ref{r_hat}) respectively.
Theorem \ref{Theorem rstochastic} below presents the stochastic error bound.
\begin{theorem}
    \label{Theorem rstochastic}
    Suppose that Assumptions \ref{assumption rsupp}, \ref{assumption rb} and \ref{assumption rholder} are satisfied. Let $\mathcal{V}$ in $\widehat{L}(\cdot)$ be the function class in \eqref{FNN_d_1} with the depth $\mathcal{D}$ and the size $\mathcal{S}$, and assume that $\mathcal{V}$ is uniformly bounded by $\Gamma$. Then, the stochastic error term $\mathbb{E}[L(r)-2\widehat{L}(\hat{r})+L(\hat{r})]$ satisfies 
    \begin{center}
        $\mathbb{E}[L(r)-2\widehat{L}(\hat{r})+L(\hat{r})]\lesssim \Gamma(\Gamma+1)\mathcal{DS}\log\mathcal{S}\log N_r/N_r$,
    \end{center}
    where $1/N_r=1/N_2^P+1/N^Q$.
\end{theorem}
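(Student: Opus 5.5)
We follow the symmetrization and ghost-sample argument of \citet{jiao2023deep}, adapted to clustered data. Introduce an independent ghost copy $\mathcal{D}_2^{P\prime},\mathcal{D}^{Q\prime}$ with the same cluster structure, and write $\widehat{L}'$ for the empirical risk built on it. Since $\hat r$ depends only on the original sample, $\mathbb{E}[L(\hat r)]=\mathbb{E}[\widehat{L}'(\hat r)]$, and $L(r)=\mathbb{E}\widehat{L}(r)\ge \mathbb{E}\widehat{L}(\hat r)$ by the definition of $\hat r$ in \eqref{r_hat}.

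A useful identity is $L(v)-L(r)=\tfrac12\|v-r\|^2_{L_2(P_{\boldsymbol{X}})}$, which follows from $\mathbb{E}_Q v=\mathbb{E}_P[rv]$. Write $g_v:=\ell(v)-\ell(r)$, where $\ell(v)$ is the per-observation loss $\tfrac12 v^2$ on the $P$-sample and $-v$ on the $Q$-sample. The stochastic term is then bounded by
\[
\mathbb{E}\sup_{v\in\mathcal V}\Big\{\big(L(v)-L(r)\big)-2\big(\widehat L(v)-\widehat L(r)\big)\Big\}.
\]
The key point is that each centered summand has variance controlled by its mean. On the $P$ part, $\tfrac12(v^2-r^2)=\tfrac12(v-r)(v+r)$, so its second moment is at most $\Gamma^2\|v-r\|^2_{L_2(P)}$. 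On the $Q$ part, $\mathbb{E}_Q(v-r)^2=\mathbb{E}_P[r(v-r)^2]\le\Gamma\|v-r\|^2_{L_2(P)}$. Together these give a Bernstein-type condition $\mathrm{Var}(g_v)\lesssim \Gamma(\Gamma+1)\,\mathbb{E}g_v$, and every $g_v$ is bounded by $O(\Gamma(\Gamma+1))$.

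To handle within-cluster dependence, group the data by subject. Each subject $i$ contributes an independent block $G_i(v)=\sum_j g_v(\boldsymbol X_{ij})$. By Cauchy--Schwarz the block variance is at most $m_i\sum_j\mathrm{Var}$, so the variance-to-mean relation carries over with a factor of the cluster size. This factor is absorbed into the constant under bounded $m_i$, or else into the effective sample sizes through $N_2^P$ and $N^Q$. Applying Bernstein's inequality to the independent blocks, a union bound over a sup-norm $\epsilon$-cover of $\mathcal V$ with $\epsilon=1/N_r$, and then integrating the tail probability, we obtain
\[
\mathbb{E}[\cdots]\lesssim \Gamma(\Gamma+1)\frac{\log \mathcal N(1/N_r,\mathcal V,\|\cdot\|_\infty)}{N_r}+\frac{\Gamma}{N_r}.
\]
The two samples are handled separately: the $P$ term contributes at rate $1/N_2^P$ and the $Q$ term at rate $1/N^Q$, and their sum gives $1/N_r=1/N_2^P+1/N^Q$.

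It remains to bound the entropy. We use the pseudo-dimension bound $\mathrm{Pdim}(\mathcal V)\lesssim\mathcal D\mathcal S\log\mathcal S$ (Bartlett et al.) together with the covering bound $\log\mathcal N(\epsilon,\mathcal V,\|\cdot\|_{\infty,\text{sample}})\lesssim \mathrm{Pdim}\,\log(\Gamma N_r/\epsilon)$. This yields exactly $\Gamma(\Gamma+1)\mathcal D\mathcal S\log\mathcal S\log N_r/N_r$.

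The main obstacle is the dependence step. The blocks must satisfy the Bernstein condition with the variance factor coming from cluster sizes, without losing more than constants. The factor $2$ in front of $\widehat L$ is what lets the offset absorb the variance term and produce the fast $1/N_r$ rate rather than $1/\sqrt{N_r}$. If the cluster sizes are unbounded, the plan would instead be to state the bound in terms of $\max_i m_i$ or an effective sample size, and then check that this matches the stated form.
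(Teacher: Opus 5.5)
Your architecture matches the paper's: the ghost sample, reduction to the offset supremum $\sup_v\{(L(v)-L(r))-2(\widehat L(v)-\widehat L(r))\}$, the variance-to-mean bound from the $P$- and $Q$-pieces, Bernstein with a union bound over a cover, tail integration, and an entropy bound of order $\mathcal{DS}\log\mathcal S\log N_r$.

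The gap is the step you flag as the main obstacle. Grouping by subject and using $\mathrm{Var}(G_i)\le m_i\sum_j\mathrm{Var}(g_v(\boldsymbol X_{ij}))$ makes the $P$-part variance at most $\max_i m_i\,\mathrm{Var}/N_2^P$ and its range at most $\max_i m_i\,\Gamma^2/N_2^P$, and similarly for the $Q$-part. Bernstein then gives only the tail $\exp\{-N_r t/(c\,\Gamma(\Gamma+1)\max_i m_i)\}$, with $\max_i m_i$ taken over both samples. The resulting bound is $\max_i m_i\cdot\Gamma(\Gamma+1)\mathcal{DS}\log\mathcal S\log N_r/N_r$. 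With $m_i\equiv m$, $m/N_r=(n^P-n_1^P)^{-1}+(n^Q)^{-1}$, so this is a rate in the number of subjects, not observations. The factor $\max_i m_i$ cannot be absorbed into $N_2^P$ or $N^Q$, and it is not a constant in the repeated-measurements regime where $m_i$ may grow. So the proposal, as written, does not establish the stated bound.

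The missing observation is that no blocking is needed. $\widehat L$ depends only on covariates. In this model the within-subject correlation enters only through the random effect $f_i(\cdot)$ in the responses. The covariates $\boldsymbol X_{ij}^P$ and $\boldsymbol X_{ij}^Q$ are independent draws from $P_{\boldsymbol X}$ and $Q_{\boldsymbol X}$ across both $i$ and $j$, and independent across domains. The paper uses this implicitly rather than stating it, so you should make it explicit.

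Given this, $A=A_1+A_2$ is an average of $N_2^P+N^Q$ independent terms, and the paper applies Bernstein directly with
$\sigma^2=\mathrm{Var}(\tilde\ell_P)/N_2^P+\mathrm{Var}(\tilde\ell_Q)/N^Q\le(2\Gamma^2/N_2^P+2\Gamma/N^Q)\,\mathbb E[A]$
and range $b=4\Gamma^2/N_2^P+4\Gamma/N^Q$. This yields the tail $\exp\{-N_r t/(32\Gamma(\Gamma+1))\}$ with the full $N_r$. If the covariates were genuinely dependent within subjects, your $\max_i m_i$ bound would be the honest conclusion, and the statement in terms of $N_r$ would not follow from this argument.

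A minor point: your claim $L(r)=\mathbb E\widehat L(r)\ge\mathbb E\widehat L(\hat r)$ requires $r\in\mathcal V$ and is false in general. You never use it, though, since the reduction to the offset supremum needs only $\mathbb E\widehat L(r)=L(r)$.
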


Using Bernstein inequality, Theorem \ref{Theorem rstochastic} provides a generic upper bound on the stochastic error in terms of the network depth $\mathcal{D}$, the network size $\mathcal{S}$ and the effective sample size $N_r$, for any candidate function class $\mathcal{V}$ in \eqref{FNN_d_1} that is uniformly bounded by $\Gamma$. To bound the approximation error term, we apply a general ReLU FNN approximation result for Hölder functions in Theorem \ref{Theorem approx}.

\begin{theorem}
    \label{Theorem approx}
    Suppose that $f(\cdot) \in \mathcal{H}^\zeta([0,1]^d, B)$. For any $\varepsilon \in (0,1)$, 
there exists a ReLU FNN $\phi(\cdot)\in \mathcal{F}_{d,1}$ defined in \eqref{FNN_d_1} with the depth $\mathcal{D}=O(\log(1/\varepsilon))$, size $\mathcal{S}=O(\varepsilon^{-d/\zeta}\log(1/\varepsilon))$ and weights bound $\mathcal{B}=O(\varepsilon^{-d/\zeta})$, such that
\begin{center}
    $\|f - \phi\|_{\infty} \leq \varepsilon$.
\end{center}
\end{theorem}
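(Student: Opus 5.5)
The proof follows the classical Yarotsky-type route: a local Taylor approximation on a uniform grid, localized by a ReLU partition of unity, with every product realized by the standard ReLU multiplication network. The point is to keep the $d$-dependence in the hidden constants polynomial.

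\textbf{Step 1: partition of unity.} Fix $K=\lceil c\,\varepsilon^{-1/\zeta}\rceil$ and build $\psi_{\boldsymbol{m}}(\boldsymbol{x})=\prod_{k=1}^d h(3K(x_k-m_k/K))$ for $\boldsymbol{m}\in\{0,\dots,K\}^d$. Here $h$ is the trapezoidal hat function, which is exactly a combination of four ReLUs. These functions satisfy $\sum_{\boldsymbol{m}}\psi_{\boldsymbol{m}}\equiv1$ on $[0,1]^d$, and each $\psi_{\boldsymbol{m}}$ is supported on a cube of side $O(1/K)$.

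\textbf{Step 2: local Taylor approximation.} Set $P_{\boldsymbol{m}}(\boldsymbol{x})=\sum_{\|\boldsymbol{\alpha}\|_1\le t}\partial^{\boldsymbol{\alpha}}f(\boldsymbol{m}/K)(\boldsymbol{x}-\boldsymbol{m}/K)^{\boldsymbol{\alpha}}/\boldsymbol{\alpha}!$. By the Hölder condition in \eqref{holder},
\[
\Bigl|f-\sum_{\boldsymbol{m}}\psi_{\boldsymbol{m}}P_{\boldsymbol{m}}\Bigr|\le 2^d\frac{d^t}{t!}BK^{-\zeta}.
\]
The factor $2^d$ comes from overlapping supports. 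To avoid an exponential constant here, I would instead use a partition where only $d+1$ pieces overlap, for instance via a shifted grid or the Kuhn simplicial triangulation with piecewise-linear hat functions, which ReLU represents exactly. The overlap factor becomes $d+1$. Choosing $K\asymp (\mathrm{poly}(d)B/\varepsilon)^{1/\zeta}$ makes this error at most $\varepsilon/2$.

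\textbf{Step 3: product networks.} Each term $\psi_{\boldsymbol{m}}(\boldsymbol{x})(\boldsymbol{x}-\boldsymbol{m}/K)^{\boldsymbol{\alpha}}$ is a product of at most $d+t$ factors bounded by $1$. I approximate it with Yarotsky's multiplication network $\tilde\times$, whose accuracy is $\eta$ at depth and size $O(\log(1/\eta))$. I compose these in a binary tree of depth $\lceil\log_2(d+t)\rceil$, which gives total error $O((d+t)^2\eta)$ and size $O((d+t)\log(1/\eta))$ per monomial.

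\textbf{Step 4: assembly and bookkeeping.} Sum over the $(K+1)^d$ grid points and the $\binom{d+t}{t}$ multi-indices. The Taylor coefficients are bounded by $B$, so the accumulated product error is at most $(d+1)\binom{d+t}{t}B(d+t)^2\eta$, using that only $O(d)$ terms are active at any point. Taking $\eta=\varepsilon/\mathrm{poly}(d)$ makes this at most $\varepsilon/2$. The resulting network has
\begin{itemize}
\item depth $O(\log(1/\varepsilon))$, since the parallel subnetworks share depth;
\item size $O(K^d\log(1/\varepsilon))=O(\varepsilon^{-d/\zeta}\log(1/\varepsilon))$;
\item weights bounded by $O(K^{\max(1,t)})$, together with the coefficients.
\end{itemize}
The weight bound is $O(\varepsilon^{-d/\zeta})$, because $K\le\varepsilon^{-1/\zeta}$ up to constants and $d\ge1$. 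Any constants exceeding this can be absorbed by depth-scaling tricks, which replace a large weight by repeated multiplication across layers.

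\textbf{Main obstacle.} The hardest part is keeping the constants in $\mathcal{D},\mathcal{S},\mathcal{B}$ polynomial in $d$. Two sources threaten this:
\begin{itemize}
\item the overlap count of the partition of unity, which is $2^d$ for tensor hats;
\item the number of Taylor monomials, $\binom{d+t}{t}\le(d+t)^t$, which is polynomial for fixed $\zeta$.
\end{itemize}
I would handle the first with the simplicial (Kuhn) partition, checking carefully that each simplex hat function is an exact ReLU network of size $O(d)$. If that fails, the fallback is Lu–Shen–Yang–Zhang style piecewise-constant bit extraction, which trades constants for depth.
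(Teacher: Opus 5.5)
Your proposal is essentially the paper's proof. The paper likewise uses the Freudenthal (Kuhn) triangulation with its barycentric hat partition of unity, so that only $d+1$ pieces overlap; it takes local Taylor polynomials at the grid vertices and uses Yarotsky-type product networks for $\psi_{\boldsymbol{v}}(\boldsymbol{x})(\boldsymbol{x}-\boldsymbol{v})^{\boldsymbol{\alpha}}$, and your binary-tree bookkeeping just makes explicit what the paper imports from Feng et al.\ and Yarotsky. One slip is in your weight bound: $O(K^{\max(1,t)})=O(\varepsilon^{-\max(1,t)/\zeta})$ is not $O(\varepsilon^{-d/\zeta})$ when $t>d$. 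No such weights are needed, though, because the multiplication networks act on $[-1,1]$, the Taylor coefficients are bounded by $B$, and the only large weights are the $O(K)$ scalings inside the hat functions.
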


It is noteworthy that we employ a novel simplicial partition of unity, which involves only $d+1$ overlapping local regions at any given point, ensuring that the constants in the upper bound of the network parameters $\mathcal{D}$ and $\mathcal{S}$ depend on $d+1$, as shown in \eqref{eq:N,delte} of Appendix \ref{AppendixA.1.2}. 
In contrast, many existing works \citep{yarotsky2017error,10.1214/19-AOS1875,feng2024deep} utilize a partition of unity based on tensor products to investigate the approximation error, where the constants in the network parameters depend on $2^d$. 
Consequantly, our construction attains the same upper bound $\varepsilon$ with a smaller network size and simultaneously mitigates the curse of dimensionality. According to Theorem \ref{Theorem approx}, if $r(\cdot)\in \mathcal{H}^\zeta([0,1]^d, B)$ and $v(\cdot)\in \mathcal{F}_{d,1}$, the approximation error satisfies $\inf_{v\in \mathcal{V}}\|r-v\|_{L_2(P_{\mathbf{X}})}^2\leq \varepsilon^2$. Since the size satisfies  $\mathcal{S}=O(\varepsilon^{-d/\zeta}\log(1/\varepsilon))$, $\varepsilon^2$ yields a rate of order $O(\mathcal{S}^{-2\zeta/d})$ up to logarithmic factors. The following theorem establishes an upper bound for the mean squared error of $\hat{r}(\cdot)$

\begin{theorem}
    \label{Theorem 3rb}
    Suppose that Assumptions \ref{assumption rsupp}, \ref{assumption rb} and \ref{assumption rholder} are satisfied, and let $\mathcal{V}$ be the function class in \eqref{FNN_d_1} bounded by $\Gamma$ with the depth $\mathcal{D}=O(\log N_r)$, the weights $\mathcal{B}=O(N_r^{d/(2\alpha+d))})$ and the size $\mathcal{S}=O(N_r^{d/(2\alpha+d)}\log N_r)$, where 
    $1/N_r=1/N_2^P+1/N^Q$. 
     Then, the mean squared error of the estimator $\hat{r}(\cdot)$ in \eqref{r_hat} satisfies
\begin{center}
$\mathbb{E}[\|\hat{r}-r\|_{L_2(P_{\boldsymbol{X}})}^2]\lesssim \Gamma(\Gamma+1)N_r^{-{2\alpha}/(d+2\alpha)}(\log N_r)^5$.
\nonumber
\end{center}
\end{theorem}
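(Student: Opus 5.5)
The plan is to combine the error decomposition stated before Theorem \ref{Theorem rstochastic} with the stochastic bound of Theorem \ref{Theorem rstochastic} and the approximation bound of Theorem \ref{Theorem approx}, and then balance the two terms. First I will verify the decomposition. Since $L(v)-L(r)=\tfrac12\|v-r\|_{L_2(P_{\boldsymbol X})}^2$ for every $v$, because $\mathbb{E}_{Q}v=\mathbb{E}_P(rv)$ by Assumption \ref{assumption rsupp}, we can write
\begin{equation*}
\tfrac12\|\hat r-r\|_{L_2(P_{\boldsymbol X})}^2=L(\hat r)-L(r)=\bigl[L(r)-2\widehat L(\hat r)+L(\hat r)\bigr]+2\bigl[\widehat L(\hat r)-L(r)\bigr].
\end{equation*}
For any $v\in\mathcal V$ we have $\widehat L(\hat r)\le \widehat L(v)$ by the definition of $\hat r$ in \eqref{r_hat}. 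Also $\mathbb{E}\widehat L(v)=L(v)$ for deterministic $v$, since the empirical averages are unbiased even under within-subject dependence. Taking expectations therefore gives
\begin{equation*}
\mathbb{E}\|\hat r-r\|_{L_2(P_{\boldsymbol X})}^2\le 2\,\mathbb{E}[L(r)-2\widehat L(\hat r)+L(\hat r)]+2\inf_{v\in\mathcal V}\|v-r\|_{L_2(P_{\boldsymbol X})}^2 .
\end{equation*}

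Next I bound the approximation term. By Assumption \ref{assumption rholder}, $r\in\mathcal H^\alpha([0,1]^d,\Gamma)$. Theorem \ref{Theorem approx} with $\zeta=\alpha$ then gives a network $\phi$ with $\|r-\phi\|_\infty\le\varepsilon$, depth $O(\log(1/\varepsilon))$, size $O(\varepsilon^{-d/\alpha}\log(1/\varepsilon))$ and weights $O(\varepsilon^{-d/\alpha})$. To make sure $\phi$ belongs to the class $\mathcal V$, which is bounded by $\Gamma$, I replace $\phi$ by its truncation $\min\{\max\{\phi,0\},\Gamma\}$. This costs two extra ReLU layers, can only reduce the sup-norm error because $0\le r\le\Gamma$, and changes depth and size by constants only. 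Choosing $\varepsilon=N_r^{-\alpha/(2\alpha+d)}$ yields $\mathcal D=O(\log N_r)$, $\mathcal S=O(N_r^{d/(2\alpha+d)}\log N_r)$ and $\mathcal B=O(N_r^{d/(2\alpha+d)})$, which matches the network specification in the statement. With this choice the approximation error is at most $\varepsilon^2=N_r^{-2\alpha/(2\alpha+d)}$.

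For the stochastic term, I plug this $\mathcal D,\mathcal S$ into Theorem \ref{Theorem rstochastic}. This gives
\begin{equation*}
\Gamma(\Gamma+1)\,\mathcal D\mathcal S\log\mathcal S\,\frac{\log N_r}{N_r}\lesssim \Gamma(\Gamma+1)\,\log N_r\cdot N_r^{d/(2\alpha+d)}\log N_r\cdot\log N_r\cdot\frac{\log N_r}{N_r}.
\end{equation*}
Here I use $\log\mathcal S\lesssim\log N_r$. The right-hand side equals $\Gamma(\Gamma+1)N_r^{-2\alpha/(2\alpha+d)}(\log N_r)^4$. Adding the approximation term, which is smaller and can be absorbed using $\Gamma\ge$ a constant or simply into $\lesssim$, gives the claimed bound; it in fact holds with a power at most $5$ of $\log N_r$. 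The extra logarithm allows slack, for instance if the class with weight bound $\mathcal B$ requires an extra $\log\mathcal B\lesssim\log N_r$ factor in a covering-number argument.

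The main obstacle is bookkeeping rather than any new idea. Two points need care. The first is checking that the truncated approximant genuinely lies in the class $\mathcal V$ used in Theorem \ref{Theorem rstochastic}, with the same depth, size and weight orders. The second is checking that Theorem \ref{Theorem rstochastic} applies with the random clustered samples; that theorem already accounts for the dependence, so here it is simply invoked. I would also double-check that the constants hidden in $\lesssim$ do not depend on $N_r$; their dependence on $d$ is polynomial, per Theorem \ref{Theorem approx}.
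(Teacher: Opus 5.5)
Your proposal is correct and follows the paper's proof essentially step for step. You use the same decomposition $\mathbb{E}\|\hat r-r\|_{L_2(P_{\boldsymbol X})}^2\le 2\,\mathbb{E}[L(r)-2\widehat L(\hat r)+L(\hat r)]+2\inf_{v\in\mathcal V}\|v-r\|_{L_2(P_{\boldsymbol X})}^2$, then bound the stochastic term with Theorem \ref{Theorem rstochastic} and the approximation term with Theorem \ref{Theorem approx} at $\varepsilon=N_r^{-\alpha/(2\alpha+d)}$. Your extra bookkeeping is sound and tightens points the paper passes over: clipping the approximant to $[0,\Gamma]$ so it lies in $\mathcal V$, noting that the stochastic term actually yields $(\log N_r)^4$, and absorbing the approximation term, which works since $\Gamma\ge\mathbb{E}_{\boldsymbol X\sim P_{\boldsymbol X}}[r(\boldsymbol X)]=1$.
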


Theorem \ref{Theorem 3rb} establishes a non-asymptotic error bound for the density ratio estimator $\hat{r}(\cdot)$ in the uniformly bounded case. The bound is obtained by plugging the stochastic and approximation bounds from the results of Theorems \ref{Theorem rstochastic} and \ref{Theorem approx} into the error decomposition. Specifically, under the network parameter scaling specified in Theorem \ref{Theorem 3rb}, we substitute the corresponding choices of $\mathcal{D}$ and $\mathcal{S}$ into Theorem \ref{Theorem rstochastic} to control the stochastic term and then we apply Theorem \ref{Theorem approx} with $\varepsilon=(N_r)^{-\alpha/(2\alpha+d)}$, which implies $\inf_{v\in \mathcal{V}}\|r-v\|_{L_2(P_{\mathbf{X}})}^2 \lesssim N_r^{-{2\alpha}/(d+2\alpha)}\log N_r$. Combining these two bounds yields a convergence rate of order $O((N_r)^{-2\alpha/(d+2\alpha)})$ up to logarithmic factors, which achieves the minimax optimal. We introduce the following two assumptions to establish the theoretical properties for the estimators of $f_0(\cdot)$. 

\begin{assumption}\label{assumption f0B1}
    The mean function satisfies $\|f_{0}\|_\infty \leq B_1$ for some positive constant $B_1$.
\end{assumption}
\begin{assumption}\label{assumption fiepsilon}
    The subject specific stochastic effect functions $f_{i}(\cdot)$ are continuous, and there exists nonnegative numbers $B_{2}$ and $B_{3}$, such that for $1\leq i\leq n^{P},1\leq j\leq m^{P}_{i}$,
    \begin{center}
    $\mathbb{E}[\exp(|f_{i}(\boldsymbol{X}_{ij}^{P})/B_{2}|)]\leq1,\quad\mathbb{E}[\exp(|\varepsilon_{i j}/B_{3}|)]\leq1$.
    \end{center}
\end{assumption}

Assumption \ref{assumption f0B1} is a general and mild boundedness condition. To satisfy this assumption, the candidate functions are often truncated. 
Assumption \ref{assumption fiepsilon} requires that the random components have sub-exponential tails, which is a common requirement for establishing concentration inequalities.
The following theorem establishes a non-asymptotic error bound for the estimator $\hat{f}_{0,\hat{r}}(\cdot)$ defined in \eqref{f_hat_r_hat}.

\begin{theorem}
    \label{Theorem Ef^_0_r^}
    Suppose that Assumptions \ref{assumption rsupp}
    -\ref{assumption rb0}, \ref{assumption f0B1} and \ref{assumption fiepsilon} are satisfied, and let $\mathcal{F}$ be the function class in \eqref{FNN_d_1} bounded by $B_1$ with the depth $\mathcal{D}=O(\log N_1^P)$, the weights $\mathcal{B}=O((N_1^P)^{d/(d+2\zeta)})$ and the size $\mathcal{S}=O((N_1^P)^{d/(2\zeta+d)}(\log N_1^P)^{-5d/(2\zeta+d)})$. Then, the mean squared error of the estimator $\hat{f}_{0,\hat{r}}(\cdot)$ defined in \eqref{f_hat_r_hat} with the density ratio estimator $\hat{r}(\cdot)$ in \eqref{r_hat} satisfies
\begin{align*}
\mathbb{E}[\|\hat{f}_{0,\hat{r}}-f_0\|_{L_2(Q_{\boldsymbol{X}})}^2]\lesssim &\Gamma((n_1^P)^{-1}+(N_1^P)^{-2\zeta/(2\zeta+d)}(\log N_1^P)^{10\zeta/(2\zeta+d)})\\
& +\Gamma(\Gamma+1)(N_r)^{-2\alpha/(d+2\alpha)}(\log N_r)^{5}/M.
\end{align*}
Moreover, if $N_r\gtrsim((\Gamma+1)/M)^{(d+2\alpha)/2\alpha}(N_1^P)^{\zeta(d+2\alpha)/\alpha(d+2\zeta)}$, then we have
\begin{center} 
$\mathbb{E}[\|\hat{f}_{0,\hat{r}}-f_0\|_{L_2(Q_{\boldsymbol{X}})}^2] \lesssim \Gamma((n_1^P)^{-1}+(N_1^P)^{-2\zeta/(2\zeta+d)}(\log N_1^P)^{10\zeta/(2\zeta+d)})$.
\end{center}
\end{theorem}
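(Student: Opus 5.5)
Conditionally on $\mathcal{D}_2^P\cup\mathcal{D}^Q$, I treat $\hat r$ as a fixed weight function; this is legitimate because $\mathcal{D}_1^P$ is independent of the data used to build $\hat r$. I would also truncate $\hat r$ to $[M,\Gamma]$ where needed, which changes nothing essential since $M\le r\le\Gamma$.

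The first step is to express the target error through a weighted source error. By Assumptions \ref{assumption rb} and \ref{assumption rb0},
\[
\|\hat f_{0,\hat r}-f_0\|_{L_2(Q_{\boldsymbol X})}^2=\mathbb{E}_{P}[r(\hat f_{0,\hat r}-f_0)^2].
\]
I define the population weighted risk $R_{\hat r}(f)=\mathbb{E}_P[\hat r(\boldsymbol X)(Y-f(\boldsymbol X))^2]$. Let $\tilde f=\arg\min_f R_{\hat r}(f)$. Since $\mathbb{E}(Y\mid\boldsymbol X)=f_0$ and $\hat r>0$, we get $\tilde f=f_0$ and $R_{\hat r}(f)-R_{\hat r}(f_0)=\mathbb{E}_P[\hat r(f-f_0)^2]$. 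Writing $r=\hat r+(r-\hat r)$ gives
\[
\mathbb{E}_P[r(f-f_0)^2]\le \mathbb{E}_P[\hat r(f-f_0)^2]+\mathbb{E}_P[|r-\hat r|(f-f_0)^2].
\]
For the cross term I would rather not use a crude $\|f-f_0\|_\infty$ bound, because it only gives $\|r-\hat r\|_{L_2}$ and not its square. Instead I would use $r/\hat r\le 1+|r-\hat r|/\hat r$ together with $\hat r\ge M$, and then apply Cauchy–Schwarz and AM–GM:
\[
\mathbb{E}_P[|r-\hat r|(f-f_0)^2]\le \tfrac12\mathbb{E}_P[r(f-f_0)^2]+C\,\mathbb{E}_P[(r-\hat r)^2(f-f_0)^2/\hat r].
\]
With $\|f-f_0\|_\infty\le 2B_1$, the last term is at most $C B_1^2\|r-\hat r\|^2_{L_2(P)}/M$. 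Absorbing the $\tfrac12$ term on the left and invoking Theorem \ref{Theorem 3rb} then produces the second summand $\Gamma(\Gamma+1)N_r^{-2\alpha/(d+2\alpha)}(\log N_r)^5/M$.

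The second step bounds $\mathbb{E}\,\mathbb{E}_P[\hat r(\hat f_{0,\hat r}-f_0)^2]$, which is the excess weighted risk. I would use the decomposition of \citet{jiao2023deep} that was applied for $\hat r$: it splits into twice the approximation error $\inf_{f\in\mathcal F}\mathbb{E}_P[\hat r(f-f_0)^2]\le\Gamma\varepsilon^2$ plus a stochastic term $\mathbb{E}[R_{\hat r}(f_0)-2\widehat R_{\hat r}(\hat f)+R_{\hat r}(\hat f)]$. For the approximation error I take $\varepsilon=(N_1^P)^{-\zeta/(2\zeta+d)}(\log N_1^P)^{5\zeta/(2\zeta+d)}$ in Theorem \ref{Theorem approx}. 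The given choice of $\mathcal S$ matches this $\varepsilon$ up to the log factor absorbed into $\varepsilon$, so the approximation error is $\Gamma(N_1^P)^{-2\zeta/(2\zeta+d)}(\log N_1^P)^{10\zeta/(2\zeta+d)}$.

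The main obstacle is the stochastic term under clustered dependence. Write $Y-f=(f_0-f)+f_i+\varepsilon_{ij}$. The noise $\varepsilon_{ij}$ and the independent-across-subjects parts are handled by a Bernstein inequality over subjects, combined with a covering-number bound $\log\mathcal N\lesssim\mathcal{DS}\log\mathcal S\log N$. Sub-exponential tails (Assumption \ref{assumption fiepsilon}) are handled by truncation at level $\log N_1^P$. This yields a bound of order $\Gamma\mathcal{DS}\log\mathcal S\,(\log N_1^P)^{c}/N_1^P$, which the chosen $\mathcal{D},\mathcal{S}$ turn into the same rate as the approximation error. The within-subject correlation comes from the random effect $f_i$. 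The cross term $\frac{2}{N_1^P}\sum_i\sum_j\hat r f_i(\boldsymbol X_{ij})(f-f_0)(\boldsymbol X_{ij})$ has variance governed by subject-level sums of $m_i$ correlated terms. Following \citet{Yan02102025}, I would center per subject and bound the variance by $\sum_i m_i^2/(N_1^P)^2$, which after balancing contributes the parametric $\Gamma/n_1^P$ term. I expect the delicate part to be keeping the logarithmic exponent at exactly $10\zeta/(2\zeta+d)$ while doing this. Finally, the condition on $N_r$ makes the density-ratio term dominated by $(N_1^P)^{-2\zeta/(2\zeta+d)}$, which gives the second display.
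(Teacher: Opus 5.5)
Your proposal follows essentially the same route as the paper's proof. It splits $r=\hat r+(r-\hat r)$ and bounds the cross term by AM--GM, feeding into Theorem \ref{Theorem 3rb}; it then applies a \citet{jiao2023deep}-type decomposition of the $\hat r$-weighted excess risk into the approximation error from Theorem \ref{Theorem approx}, a truncated-noise Bernstein/covering-number term, and a random-effect term handled as in \citet{Yan02102025}.

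There are two minor deviations. First, you integrate out $Y$ before handling the cross term, so that it is exactly $\mathbb{E}_P[(r-\hat r)(\hat f_{0,\hat r}-f_0)^2]$; this is cleaner than the paper's AM--GM on $((Y-\hat f_{0,\hat r})^2-(Y-f_0)^2)^2$, which tacitly needs conditional moment bounds on $Y$. Second, truncating $\hat r$ at $M$ is unnecessary and changes the estimator. If you put $r$ instead of $\hat r$ in the denominator, $|r-\hat r|\le \tfrac12 r+\tfrac12 (r-\hat r)^2/r$, which uses only Assumption \ref{assumption rb0}, exactly as the paper does.
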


The result reflects the trade-off between the estimator $\hat{f}_{0,\hat{r}}(\cdot)$ and the density ratio estimator $\hat{r}(\cdot)$. In particular, the term $\Gamma((n_1^P)^{-1}+(N_1^P)^{-2\zeta/(2\zeta+d)}(\log N_1^P)^{10\zeta/(2\zeta+d)})$ captures the error incurred by estimating $f_0(\cdot)$, whereas $\Gamma(\Gamma+1)(N_r)^{-2\alpha/(d+2\alpha)}(\log N_r)^{5}/M$ accounts for the error resulted from estimating the density ratio $r(\cdot)$.
When $N_r$ is sufficiently large, specifically $N_r\gtrsim((\Gamma+1)/M)^{(d+2\alpha)/2\alpha}(N_1^P)^{\zeta(d+2\alpha)/\alpha(d+2\zeta)}$, the error bound simplifies to $\Gamma((n_1^P)^{-1}+(N_1^P)^{-2\zeta/(2\zeta+d)}(\log N_1^P)^{10\zeta/(2\zeta+d)})$.
Compared with \citet{Yan02102025}, our repeated measurement analysis yields a smaller poly-logarithmic exponent.
Consequently, under the above condition on $N_r$, our estimator achieves the minimax optimal rate of $O((n_1^P)^{-1}+(N_1^P)^{-2\zeta/(2\zeta+d)})$ up to logarithmic factors.

We now relax the uniformly bounded assumption on $r(\cdot)$ and extend our analysis to the setting where $r(\cdot)$ is unbounded but satisfies a finite $(\delta+2)$-th moment condition.
In this setting, the true $r(\cdot)$ may take extremely large values at certain points, causing the estimator to be highly sensitive to these points.
To control this instability and ensure theoretical robustness, we employ a truncation mechanism for the density ratio estimator.
For a fixed truncation level $\xi>0$, we define the truncated  density ratio estimator as
\begin{equation}
\hat{r}_\xi(\cdot)
\in \arg\min_{v \in T_\xi \mathcal{V}} \Big \{
\frac{1}{2 N_2^P}
\sum_{i=n_1^P+1}^{n^P}\sum_{j=1}^{m_i^{P}} v(\boldsymbol{X}_{ij}^P)^2
 - \frac{1}{N^Q}\sum_{i=1}^{n^Q}\sum_{j=1}^{m_i^{Q}} v(\boldsymbol{X}_{ij}^Q)\Big \}, 
\label{hat_r_xi} 
\end{equation}
where $T_\xi \mathcal{V} = \{T_\xi v: v \in \mathcal{V}\}$ denotes the truncated function class with the clipping operator $T_\xi v := \min\{v, \xi\}$. Note that $T_\xi\mathcal{V}$ can be naturally implemented by appending an activation layer $\sigma_\xi(\cdot)=\min\{\cdot,\xi\}$ to the network output, i.e., $T_\xi v := \sigma_\xi \circ v$. The following theorem establishes the mean squard error bound for $\hat{r}_\xi(\cdot)$.

\begin{theorem}
    \label{Theorem r_xi_hat}
    Suppose that Assumptions \ref{assumption rsupp} and \ref{assumption rub}
    are satisfied, and there exists some constant $\alpha>0$ such that $T_{\xi}r\in\mathcal{H}^\alpha([0,1]^d,\xi)$. Let $\mathcal{V}$ be the function class in \eqref{FNN_d_1} with the depth $\mathcal{D}=O(\log N_r)$, the weights $\mathcal{B}=O((N_r)^{d/(d+(4/\delta+2)\alpha)})$ and the size $\mathcal{S}=O((N_r)^{d/(d+(4/\delta+2)\alpha)}\log N_r)$. Setting $\xi\asymp(\kappa_{\delta}N_r/\mathcal{DS}\log \mathcal{S}\log N_r)^{1/(2+\delta)}$, the mean squared error of the estimator $\hat{r}_{\xi}(\cdot)$ in \eqref{hat_r_xi} satisfies
    \begin{center}
    $\mathbb{E}[\|\hat{r}_{\xi}-r\|_{L_2(P_{\boldsymbol{X}})}^2]\lesssim \kappa_{\delta}^{2/(2+\delta)}(N_r)^{-2\alpha/(d+(2+4/\delta)\alpha)}(\log N_r)^5$.
    \end{center}
\end{theorem}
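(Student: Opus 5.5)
The proof splits the error into three pieces by introducing the truncated ratio $r_\xi:=T_\xi r$:
\[
\|\hat r_\xi - r\|_{L_2(P_{\boldsymbol{X}})}^2 \le 2\|\hat r_\xi - r_\xi\|_{L_2(P_{\boldsymbol{X}})}^2 + 2\|r_\xi - r\|_{L_2(P_{\boldsymbol{X}})}^2 .
\]
The second term is a pure truncation bias. On the event $\{r>\xi\}$ we have $(r-\xi)^2\le r^2\le r^{2+\delta}\xi^{-\delta}$. Hence
\[
\|r_\xi-r\|_{L_2(P_{\boldsymbol{X}})}^2\le \kappa_\delta\,\xi^{-\delta},
\]
by Assumption \ref{assumption rub}.

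For the first term we view $\hat r_\xi$ as a least-squares density ratio estimator over the class $T_\xi\mathcal{V}$, which is uniformly bounded by $\xi$. Since $L(v)-L(r)=\tfrac12\|v-r\|_{L_2(P_{\boldsymbol{X}})}^2$, we have, for any $v$,
\[
\|v-r_\xi\|^2\le 2\|v-r\|^2+2\|r-r_\xi\|^2 .
\]
Following the decomposition used for Theorem \ref{Theorem 3rb}, this gives
\[
\mathbb{E}\|\hat r_\xi-r\|^2\lesssim \mathbb{E}[L(r)-2\widehat L(\hat r_\xi)+L(\hat r_\xi)]+\inf_{v\in T_\xi\mathcal{V}}\|v-r\|^2 .
\]
The stochastic term is handled by rerunning the proof of Theorem \ref{Theorem rstochastic} with $\Gamma$ replaced by $\xi$. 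The Bernstein and covering-number argument only uses the uniform bound on the class, and appending $\sigma_\xi$ adds one layer without changing $\mathcal{D}$ or $\mathcal{S}$ in order. The only point to check is that the unboundedness of $r$ enters the variance only through $\mathbb{E}_P[r^2]\le \kappa_\delta^{2/(2+\delta)}$, which is finite. This yields a stochastic bound of order $\xi(\xi+1)\,\mathcal{DS}\log\mathcal{S}\log N_r/N_r$, possibly with an extra additive $\kappa_\delta\xi^{-\delta}$ coming from the part of $r$ above the truncation level.

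For the approximation term, we apply Theorem \ref{Theorem approx} to $T_\xi r\in\mathcal{H}^\alpha([0,1]^d,\xi)$. This gives $\phi$ with $\|T_\xi r-\phi\|_\infty\le \xi\varepsilon$ after rescaling by $\xi$. Then $T_\xi\phi$ is also within $\xi\varepsilon$ of $T_\xi r$, because truncation is $1$-Lipschitz. Therefore
\[
\inf_{v\in T_\xi\mathcal{V}}\|v-r\|^2\lesssim \xi^2\varepsilon^2+\kappa_\delta\xi^{-\delta}.
\]

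Collecting terms, with $A:=\mathcal{DS}\log\mathcal{S}\log N_r/N_r$,
\[
\mathbb{E}\|\hat r_\xi-r\|^2\lesssim \xi^2 A+\kappa_\delta\xi^{-\delta}+\xi^2\varepsilon^2 .
\]
Balancing the first two terms gives exactly the prescribed $\xi\asymp(\kappa_\delta/A)^{1/(2+\delta)}$, at which both equal $\kappa_\delta^{2/(2+\delta)}A^{\delta/(2+\delta)}$.

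Next we insert $\mathcal{D}\asymp\log N_r$ and $\mathcal{S}\asymp \varepsilon^{-d/\alpha}\log N_r$, so that $A\asymp \varepsilon^{-d/\alpha}(\log N_r)^3/N_r$. We then choose $\varepsilon$ so that the approximation term $\xi^2\varepsilon^2$ matches $\kappa_\delta^{2/(2+\delta)}A^{\delta/(2+\delta)}$. Up to logarithms this reduces to $\varepsilon^2\asymp A$. Writing $\varepsilon=N_r^{-a}$, the exponents must satisfy
\[
\frac{\delta}{2+\delta}\,(1-ad/\alpha)= 2a-\frac{2}{2+\delta}\,(1-ad/\alpha),
\]
which should produce $a=\alpha/(d+(2+4/\delta)\alpha)$. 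This corresponds to $\mathcal{S}\asymp N_r^{d/(d+(4/\delta+2)\alpha)}\log N_r$ as stated, and to the final rate $N_r^{-2\alpha/(d+(2+4/\delta)\alpha)}$. The logarithmic factors combine to at most $(\log N_r)^5$, as in Theorem \ref{Theorem 3rb}.

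The main obstacle is the stochastic bound with a truncated class when the target $r$ itself is unbounded. The symmetrization and Bernstein argument of Theorem \ref{Theorem rstochastic} needs the variance of the loss difference to be controlled by its mean. The clean route is to compare against $r_\xi$ rather than $r$, so that every function involved is bounded by $\xi$. The cost of this comparison, namely the cross term $\mathbb{E}_P[(r-r_\xi)(v-r_\xi)]$, is then bounded by $\xi\,\mathbb{E}_P[r\mathbf{1}\{r>\xi\}]\le\kappa_\delta\xi^{-\delta}$ using Markov's inequality. If this works, the remaining steps are routine bookkeeping.
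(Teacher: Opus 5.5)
The gap is in your last step. Your exponent equation $\tfrac{\delta}{2+\delta}(1-ad/\alpha)=2a-\tfrac{2}{2+\delta}(1-ad/\alpha)$ simplifies to $1-ad/\alpha=2a$, i.e.\ $a=\alpha/(2\alpha+d)$, not $\alpha/(d+(2+4/\delta)\alpha)$. This is forced by your own correct remark that matching $\xi^2\varepsilon^2$ with $\kappa_\delta^{2/(2+\delta)}A^{\delta/(2+\delta)}$ means $\varepsilon^2\asymp A$.

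Consequently your bound $\xi^2A+\kappa_\delta\xi^{-\delta}+\xi^2\varepsilon^2$ gives at best $\kappa_\delta^{2/(2+\delta)}N_r^{-2\alpha/((1+2/\delta)d+(2+4/\delta)\alpha)}$ up to logarithms, with $\mathcal{S}\asymp N_r^{d/(2\alpha+d)}\log N_r$. That is strictly slower than the claimed $N_r^{-2\alpha/(d+(2+4/\delta)\alpha)}$.

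With the size actually prescribed in the statement, $\varepsilon=N_r^{-\alpha/D}$ with $D:=d+(2+4/\delta)\alpha$, the prescribed truncation level satisfies $\xi^2\asymp\kappa_\delta^{2/(2+\delta)}N_r^{4\alpha/(\delta D)}$ up to logarithms. Your approximation term $\xi^2\varepsilon^2$ is then of order $N_r^{-2\alpha(1-2/\delta)/D}$. That exceeds the target by the polynomial factor $N_r^{4\alpha/(\delta D)}$ and does not even vanish when $\delta\le 2$. So the proposal does not establish the statement.

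The paper reaches its exponent differently. After writing $(II)\le(A)+(B)$, it says $(A)$ is controlled by Theorem~\ref{Theorem approx} and then combines only $\xi^2\mathcal{DS}\log\mathcal{S}\log N_r/N_r+16\kappa_\delta\xi^{-\delta}$. In effect it takes $(A)\lesssim\varepsilon^2$ with a constant independent of the H\"older radius. Balancing $\varepsilon^2$ (rather than $\xi^2\varepsilon^2$) against $A^{\delta/(2+\delta)}$ is exactly what produces $d+(2+4/\delta)\alpha$.

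Your rescaling is the accurate reading of Theorem~\ref{Theorem approx}, whose construction uses a grid with $N\propto(B/\varepsilon)^{1/\zeta}$. With $B=\xi$ the factor $\xi^2$ is genuinely present, so more careful bookkeeping will not close the gap. The stated exponent would need an approximation bound for $T_\xi r$ uniform in $\xi$, and the hypothesis $T_\xi r\in\mathcal{H}^\alpha([0,1]^d,\xi)$ does not supply one.

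On the stochastic term, by contrast, your recentring at $r_\xi$ is the right move and is more careful than the paper. The paper reuses the bound of Theorem~\ref{Theorem rstochastic} even though that proof needs $r\le\Gamma$. With $r$ unbounded the summands $\tfrac12(v^2-r^2)$ are unbounded, and finiteness of $\mathbb{E}_{P_{\boldsymbol{X}}}[r^2]$ is not what Bernstein's inequality needs.

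The recentring in fact costs nothing, for two reasons:
\begin{itemize}
\item The expected stochastic term centred at $r$ equals the one centred at $r_\xi$ minus $\tfrac12\|r_\xi-r\|^2_{L_2(P_{\boldsymbol{X}})}$.
\item Since $v\le\xi$ for $v\in T_\xi\mathcal{V}$, the cross term $\mathbb{E}_{P_{\boldsymbol{X}}}[(r-r_\xi)(r_\xi-v)]$ is nonnegative. This gives $L(v)-L(r_\xi)\ge\tfrac12\|v-r_\xi\|^2_{L_2(P_{\boldsymbol{X}})}$ and $\mathbb{E}_{Q_{\boldsymbol{X}}}[(v-r_\xi)^2]\le2\xi\,(L(v)-L(r_\xi))$.
\end{itemize}
The argument of Theorem~\ref{Theorem rstochastic} then goes through with $\Gamma$ replaced by $\xi$ and no additive $\kappa_\delta\xi^{-\delta}$. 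This requires $T_\xi\mathcal{V}$ to be bounded below by $-\xi$. Both you and the paper take that for granted, although $T_\xi$ clips only from above.
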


Theorem \ref{Theorem r_xi_hat} provides a non-asymptotic error bound for the truncated density ratio estimator $\hat{r}_{\xi}(\cdot)$ in the unbounded scenario. As $\delta\to\infty$, the upper bound achieves the minimax optimal convergence rate $O((N_r)^{-2\alpha/(d+2\alpha)})$ up to logarithmic factors, recovering the minimax optimal rate derived in the uniformly bounded scenario. Based on the properties of $\hat{r}_{\xi}(\cdot)$, we next present the non-asymptotic error bound for $\hat{f}_{0,\hat{r}_{\xi}}(\cdot)$, which is obtained by substituting the truncated estimator $\hat{r}_{\xi}(\cdot)$ for $\hat{r}(\cdot)$ in \eqref{f_hat_r_hat}.

\begin{theorem}
    \label{Theorem Ef^_0_r^_zeta}
    Suppose that Assumptions \ref{assumption rsupp} and \ref{assumption rb0}-\ref{assumption fiepsilon} are satisfied and there exists some constant $\alpha>0$ such that $T_{\xi}r\in\mathcal{H}^\alpha([0,1]^d,\xi)$. Let $\mathcal{F}$ be the function class in \eqref{FNN_d_1} bounded by $B_1$ with the depth $\mathcal{D}=O(\log N_1^P)$, the weights $\mathcal{B}=O((N_1^P)^{{d}/(d+(2/(\delta+1)+2)\zeta)})$ and the size $\mathcal{S}=O((N_1^P)^{{d}/(d+(2/(\delta+1)+2)\zeta)}(\log N_1^P)^{-{5d}/(2\zeta+d}))$. Setting $\xi=O(({\mathcal{DS}\log\mathcal{S}(\log N_1^P)^{3}}/{N_1^P}+\\1/n_1^P)^{-{1}/(\delta+2)})$, then the mean squared error of the estimator $\hat{f}_{0,\hat{r}_{\xi}}(\cdot)$ obtained from \eqref{f_hat_r_hat} by replacing $\hat{r}(\cdot)$ with the truncated density ratio estimator $\hat{r}_{\xi}(\cdot)$ in \eqref{hat_r_xi} satisfies
\begin{align*}
    \mathbb{E}[\|\hat{f}_{0,\hat{r}_{\xi}}-f_0\|_{L_2(Q_{\boldsymbol{X}})}^2]& \lesssim (n_1^P)^{-(\delta+1)/(\delta+2)}+(N_1^P)^{-2\zeta/(d+(2/(\delta+1)+2)\zeta)}(\log N_1^P)^{{10\zeta}/(2\zeta+d)}\\
    & +\kappa_{\delta}^{2/(2+\delta)}(N_r)^{-{2\alpha}/(d+(2+4/\delta)\alpha)}(\log N_r)^5/M.
\end{align*}
Moreover, if $N_r\gtrsim M^{-\eta_{1}}{\kappa_{\delta}}^{\eta_2}(N_1^P)^{\eta_3}$ with $\eta_1=\{d\delta+2(\delta+2)\alpha\}/\{2\alpha\delta\}$, $\eta_2=\{\delta d+2(\delta+2)\alpha\}/\{\alpha\delta(\delta+2)\}$ and $\eta_3=\{2\zeta\alpha(\delta+2)(\delta+1)+d\zeta\delta(\delta+1)\}/\{2\zeta\alpha\delta(\delta+2)+d\alpha\delta(\delta+1)\}$, then we have
\begin{center} $\mathbb{E}[\|\hat{f}_{0,\hat{r}_{\xi}}-f_{0}\|_{L_{2}(Q_{X})}^{2}]\lesssim (n_1^P)^{-(\delta+1)/(\delta+2)}+(N_1^P)^{-2\zeta/(d+(2/(\delta+1)+2)\zeta)}(\log N_1^P)^{{10\zeta}/(2\zeta+d)}$.
\end{center}
\end{theorem}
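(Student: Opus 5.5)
Condition on the density-ratio sample $\mathcal{D}_2^P\cup\mathcal{D}^Q$, so $\hat r_\xi$ is a fixed function bounded by $\xi$ and independent of $\mathcal{D}_1^P$. Decompose the target risk around the ideal truncated weight $T_\xi r$ and then pass to $r$. For any $f$ with $\|f\|_\infty\le B_1$, Assumption~\ref{assumption rb0} gives $\|f-f_0\|_{L_2(Q_{\boldsymbol X})}^2=\mathbb{E}_P[r(f-f_0)^2]$. We split this as
\[
\mathbb{E}_P[\hat r_\xi (f-f_0)^2]+\mathbb{E}_P[(r-\hat r_\xi)(f-f_0)^2].
\]
The second term is at most $4B_1^2\|r-\hat r_\xi\|_{L_1(P_{\boldsymbol X})}$. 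We will instead control it relative to the main term: since $r\ge M$, we have $(f-f_0)^2\le r(f-f_0)^2/M$. Applying Cauchy--Schwarz and absorbing, as in the proof of Theorem~\ref{Theorem Ef^_0_r^}, leaves the additive contribution $\|\hat r_\xi-r\|_{L_2(P)}^2/M$. Taking expectation over the second sample and invoking Theorem~\ref{Theorem r_xi_hat} yields the last term $\kappa_\delta^{2/(2+\delta)}N_r^{-2\alpha/(d+(2+4/\delta)\alpha)}(\log N_r)^5/M$.

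For the first term, treat $w:=\hat r_\xi\le\xi$ as a known bounded weight and run the repeated-measurements weighted least-squares analysis used for Theorem~\ref{Theorem Ef^_0_r^}. The error splits into three pieces:
\begin{itemize}
\item the weighted approximation error $\xi\inf_{f\in\mathcal F}\|f-f_0\|_\infty^2$, which Theorem~\ref{Theorem approx} bounds by $\xi\,\mathcal S^{-2\zeta/d}$ up to logs;
\item the stochastic error, where the sub-exponential tails of $f_i$ and $\varepsilon_{ij}$ (Assumption~\ref{assumption fiepsilon}) are handled by truncation at level $\log N_1^P$, giving order $\xi(\mathcal{DS}\log\mathcal S(\log N_1^P)^3/N_1^P)$;
\item the within-subject correlation term $\xi/n_1^P$, arising from the variance of subject-level sums.
\end{itemize}
Because $\xi$ multiplies all three pieces, using the bounded weight $\hat r_\xi$ instead of the unbounded $r$ costs a truncation bias $\mathbb{E}_P[(r-T_\xi r)(f-f_0)^2]\lesssim \mathbb{E}_P[r\mathbf 1\{r>\xi\}]\le\kappa_\delta\xi^{-(\delta+1)}$ by Markov's inequality on the $(\delta+2)$-th moment. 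This bias has to be accounted for, either inside the $\|\hat r_\xi-r\|$ term or separately.

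Balancing gives $\xi(\mathcal{DS}\log\mathcal S(\log N_1^P)^3/N_1^P+1/n_1^P)+\xi^{-(\delta+1)}$, optimized at the prescribed $\xi\asymp(\cdots)^{-1/(\delta+2)}$. This produces $(\cdot)^{(\delta+1)/(\delta+2)}$, which splits into $(n_1^P)^{-(\delta+1)/(\delta+2)}$ and a network-size part. Adding the approximation term $\xi\mathcal S^{-2\zeta/d}$ and choosing $\mathcal S\asymp (N_1^P)^{d/(d+(2/(\delta+1)+2)\zeta)}$ up to the stated log power equalizes $\xi\mathcal S^{-2\zeta/d}$ with $(\mathcal S/N_1^P)^{(\delta+1)/(\delta+2)}$. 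A direct exponent computation then gives the rate $(N_1^P)^{-2\zeta/(d+(2/(\delta+1)+2)\zeta)}$, with the log exponent $10\zeta/(2\zeta+d)$ inherited from the $(\log)^{-5d/(2\zeta+d)}$ factor in $\mathcal S$.

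For the ``moreover'' part, solve
\[
\kappa_\delta^{2/(2+\delta)}N_r^{-2\alpha/(d+(2+4/\delta)\alpha)}/M\lesssim (N_1^P)^{-2\zeta/(d+(2/(\delta+1)+2)\zeta)}
\]
for $N_r$, ignoring logs. Raising both sides to the power $(d\delta+2(\delta+2)\alpha)/(2\alpha\delta)$ gives the exponents $\eta_1,\eta_2,\eta_3$; this is routine algebra.

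The main obstacle is the first step: keeping the error from $\hat r_\xi$ additive rather than multiplied by $\xi$ or $B_1^2$. The concern is that a crude $L_1$ bound would give only $\|\hat r_\xi-r\|_{L_2}$, not its square. I would first try the lower bound $r\ge M$ together with the inequality $ab\le a^2/(4c)+cb^2$ to produce the $1/M$ factor. If that fails, I would fall back on comparing the empirical weighted minimizer against $T_\xi r$ with the weight perturbation entering the basic inequality.
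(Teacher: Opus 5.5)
Your route is the paper's. The split $\|f-f_0\|_{L_2(Q_{\boldsymbol X})}^2=\mathbb{E}_P[\hat r_\xi(f-f_0)^2]+\mathbb{E}_P[(r-\hat r_\xi)(f-f_0)^2]$ is its (I)+(II) after conditioning on $\boldsymbol X$. The step you flag as the main obstacle is fine: $\mathbb{E}_P[|r-\hat r_\xi|(f-f_0)^2]\le 2B_1\|r-\hat r_\xi\|_{L_2(P_{\boldsymbol X})}\|f-f_0\|_{L_2(Q_{\boldsymbol X})}/\sqrt M\le 2B_1^2\|r-\hat r_\xi\|^2_{L_2(P_{\boldsymbol X})}/M+\tfrac12\|f-f_0\|^2_{L_2(Q_{\boldsymbol X})}$, which has the same form as the paper's bound on (I).

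\textbf{First gap: the tuning of $\xi$.} In your decomposition $r=\hat r_\xi+(r-\hat r_\xi)$ is an identity, so no separate bias $\kappa_\delta\xi^{-(\delta+1)}$ ever appears. Truncation enters only through $\|\hat r_\xi-r\|^2_{L_2(P_{\boldsymbol X})}\ge\|T_\xi r-r\|^2_{L_2(P_{\boldsymbol X})}$ (because $\hat r_\xi\le\xi$). The moment condition controls this only at order $\kappa_\delta\xi^{-\delta}$, and it is divided by $M$. Moreover, Theorem~\ref{Theorem r_xi_hat} gives its rate only at its own level $\xi\asymp(\kappa_\delta N_r/\mathcal{DS}\log\mathcal S\log N_r)^{1/(2+\delta)}$. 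It does not apply at the prescribed $\xi\asymp A^{-1/(\delta+2)}$, where $A:=\mathcal{DS}\log\mathcal S(\log N_1^P)^3/N_1^P+1/n_1^P$. At the prescribed $\xi$ your bound therefore carries a term of order $\kappa_\delta A^{\delta/(\delta+2)}/M$. This exceeds the target $A^{(\delta+1)/(\delta+2)}$ and does not shrink as $N_r\to\infty$, so neither display follows. The paper's proof has the same hole: it ends with $\xi A$ plus the Theorem~\ref{Theorem r_xi_hat} term and simply sets $\xi$, with nothing penalizing small $\xi$.

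Your ``separately'' option is the fix, but it must be built into the decomposition. Write $r-w=(r-T_\xi r)+(T_\xi r-w)$ for the regression weight $w$. Bound the first piece by $4B_1^2\mathbb{E}_P[(r-\xi)_+]\le 4B_1^2\kappa_\delta\xi^{-(\delta+1)}$, as in the paper's proof of Theorem~\ref{Theorem fhatr_xi}, with no $1/M$. Use the $r\ge M$ absorption only on $T_\xi r-w$. To keep Theorem~\ref{Theorem r_xi_hat} usable, decouple the levels: estimate $\hat r_{\xi'}$ at that theorem's level $\xi'$ and let $w$ be $\hat r_{\xi'}$ clipped to $[0,\xi]$. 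Clipping is $1$-Lipschitz and maps $r$ to $T_\xi r$, so $|T_\xi r-w|\le|r-\hat r_{\xi'}|$. With a single level you would instead need a new bound on $\mathbb{E}\|\hat r_\xi-T_\xi r\|^2_{L_2(P_{\boldsymbol X})}$, obtainable from $\|v-r\|^2\ge\|v-T_\xi r\|^2+\|T_\xi r-r\|^2$ for $v\le\xi$. In that case the last term and the $\eta_j$ would change.

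\textbf{Second gap: the exponent computation.} Take $\xi\asymp(\mathcal S/N_1^P)^{-1/(\delta+2)}$, ignoring logs and $1/n_1^P$. Then $\xi\mathcal S^{-2\zeta/d}=(\mathcal S/N_1^P)^{(\delta+1)/(\delta+2)}$ reduces to $\mathcal S^{-2\zeta/d}=\mathcal S/N_1^P$, i.e.\ $\mathcal S\asymp(N_1^P)^{d/(d+2\zeta)}$. That gives only the slower rate $(N_1^P)^{-2\zeta(\delta+1)/((\delta+2)(d+2\zeta))}$. At the stated $\mathcal S$, instead, $\xi\mathcal S^{-2\zeta/d}\asymp(N_1^P)^{-2\zeta\delta/((\delta+1)(d+(2+2/(\delta+1))\zeta))}$ up to logs, which dominates the claimed rate. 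The stated $\mathcal S$ is exactly the one that equalizes the \emph{unweighted} $\mathcal S^{-2\zeta/d}$ with $(\mathcal S/N_1^P)^{(\delta+1)/(\delta+2)}$.

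So you must keep $\xi$ off the approximation error. In the basic inequality, compare with the sup-norm approximant $f^\dagger$ of Theorem~\ref{Theorem approx} and bound $\mathbb{E}_P[w(f^\dagger-f_0)^2]\le\|f^\dagger-f_0\|_\infty^2\,\mathbb{E}_P[w]$. Here $\mathbb{E}_P[w]\le 1+\|\hat r_{\xi'}-r\|_{L_2(P_{\boldsymbol X})}$ has bounded expectation. The paper likewise carries $\xi\inf_{f}\|f-f_0\|^2_{L_2(P_{\boldsymbol X})}$ in its bound on (II) and then silently drops it.

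With both repairs the total is, up to constants, $\xi A+\kappa_\delta\xi^{-(\delta+1)}+\mathcal S^{-2\zeta/d}+\mathbb{E}\|\hat r_{\xi'}-r\|^2_{L_2(P_{\boldsymbol X})}/M$. At the stated $\xi$ and $\mathcal S$ this yields both displays, and your $\eta_j$ algebra for the second one is then correct.
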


Theorem \ref{Theorem Ef^_0_r^_zeta} provides a non-asymptotic errer upper bound for the estimator $\hat{f}_{0,\hat{r}_{\xi}}(\cdot)$ in the unbounded scenario. The upper bound consists of two terms, $(n_1^P)^{-(\delta+1)/(\delta+2)}+(N_1^P)^{-2\zeta/(d+(2/(\delta+1)+2)\zeta)}(\log N_1^P)^{{10\zeta}/(2\zeta+d)}$ captures the error due to estimating $f_0(\cdot)$, while $\kappa_{\delta}^{2/(2+\delta)}(N_r)^{-{2\alpha}/(d+(2+4/\delta)\alpha)}(\log N_r)^5/M$ accounts for the error resulted from estimating the density ratio $r(\cdot)$.
Moreover, Theorem \ref{Theorem Ef^_0_r^_zeta} implies that even when the density ratio is unbounded, if $N_r\gtrsim M^{-\eta_{1}}{\kappa_{\delta}}^{\eta_2}(N_1^P)^{\eta_3}$ and as $\delta\to\infty$, the truncated estimator $\hat{f}_{0,\hat{r}_{\xi}}(\cdot)$ attains the same minimax optimal rate $O((n_1^P)^{-1}+(N_1^P)^{-2\zeta/(2\zeta+d)})$ up to logarithmic factors as in the uniformly bounded case.

\subsection{When density ratio function $r(\cdot)$ is known}

We next investigate the non-asymptotic error bounds under the assumption that the density ratio $r(\cdot)$ is known. Recall that in the unknown density ratio scenario, we partition the source domain data into two disjoint subsets to construct $\hat{r}(\cdot)$ and to ensure the independence between the density ratio estimator and the estimator of $f_0(\cdot)$. In contrast, when $r(\cdot)$ is known, we no longer need to estimate $r(\cdot)$ so that we can utilize the entire source data to estimate $f_0(\cdot)$, thereby using more information and typically achieving tighter error bounds.
The following Theorems \ref{Theorem fhatr} and \ref{Theorem fhatr_xi} establish the results for the estimators $\hat{f}_{0,r}(\cdot)$ and $\hat{f}_{0,r_{\xi}}(\cdot)$ in the cases of uniformly bounded and unbounded density ratio, respectively. 

\begin{theorem}
    \label{Theorem fhatr}
    Suppose that Assumptions \ref{assumption rsupp}, \ref{assumption rb}, \ref{assumption f0B1} and \ref{assumption fiepsilon} are satisfied, and let $\mathcal{F}$ be the function class in \eqref{FNN_d_1} bounded by $B_1\geq1$ with the depth $\mathcal{D}=O(\log N^P)$, the weights $\mathcal{B}=O((N^P)^{d/(d+2\zeta)})$ and the size $\mathcal{S}=O((N^P)^{d/(2\zeta+d)}(\log N^P)^{-5d/(2\zeta+d)})$. Then, the mean squared error of the estimator $\hat{f}_{0,r}(\cdot)$ in \eqref{f_hat_r_hat1} satisfies
    \begin{center}
        $\mathbb{E}[\|\hat{f}_{0,r}-f_0\|_{L_2(Q_{\boldsymbol{X}})}^2] \lesssim \Gamma\big((n^P)^{-1}+(N^P)^{-2\zeta/(2\zeta+d)}(\log N^P)^{10\zeta/(2\zeta+d)}\big)$.
    \end{center}
\end{theorem}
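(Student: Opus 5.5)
We prove Theorem~\ref{Theorem fhatr} by reducing it to a weighted least-squares problem on the source domain and then following the repeated-measurements analysis used for Theorem~\ref{Theorem Ef^_0_r^}, with $r$ known.

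\textbf{Step 1: change of measure.} By Assumption~\ref{assumption rsupp} and the definition of $r$, for any $f$,
\[
\|f-f_0\|_{L_2(Q_{\boldsymbol{X}})}^2=\mathbb{E}_{P}[r(\boldsymbol{X})(f-f_0)^2(\boldsymbol{X})].
\]
Define the population weighted risk $R(f)=\mathbb{E}_{P}[r(\boldsymbol{X})(Y-f(\boldsymbol{X}))^2]$ and its empirical analogue $\widehat{R}(f)$ as in \eqref{f_hat_r_hat1}. Because $\mathbb{E}(Y\mid\boldsymbol{X})=f_0(\boldsymbol{X})$ under $P$, we have $R(f)-R(f_0)=\|f-f_0\|^2_{L_2(Q_{\boldsymbol{X}})}$.

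\textbf{Step 2: error decomposition.} Following \citet{jiao2023deep}, and using that $\hat f_{0,r}$ minimizes $\widehat{R}$ over $\mathcal{F}$, we split the target quantity as
\[
\mathbb{E}\|\hat f_{0,r}-f_0\|^2_{L_2(Q_{\boldsymbol{X}})}\le \mathbb{E}\big[R(\hat f_{0,r})-R(f_0)-2(\widehat R(\hat f_{0,r})-\widehat R(f_0))\big]+2\inf_{f\in\mathcal{F}}\|f-f_0\|^2_{L_2(Q_{\boldsymbol{X}})}.
\]

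\textbf{Step 3: approximation term.} Since $r\le\Gamma$, this term is at most $2\Gamma\inf_{f\in\mathcal{F}}\|f-f_0\|_\infty^2$. Theorem~\ref{Theorem approx} with $\varepsilon\asymp (N^P)^{-\zeta/(2\zeta+d)}(\log N^P)^{5\zeta/(2\zeta+d)}$ produces the prescribed $\mathcal{D}$, $\mathcal{S}$ and $\mathcal{B}$. The resulting bound is $\Gamma (N^P)^{-2\zeta/(2\zeta+d)}(\log N^P)^{10\zeta/(2\zeta+d)}$.

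\textbf{Step 4: stochastic term, main obstacle.} Write $Y_{ij}-f_0(\boldsymbol{X}_{ij})=f_i(\boldsymbol{X}_{ij})+\varepsilon_{ij}$. The responses are not bounded, so I would first truncate them: on the event $\max_{ij}|Y_{ij}|\le B_1+(B_2+B_3)\log N^P$, the weighted losses $r(Y-f)^2$ are bounded by $\Gamma(\log N^P)^2$ up to constants. Assumption~\ref{assumption fiepsilon} makes the complement contribute only $O(1/N^P)$.

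On this event I would treat subjects as the independent units and set
\[
Z_i(f)=\sum_{j}\big[r(\boldsymbol{X}_{ij})\{(Y_{ij}-f(\boldsymbol{X}_{ij}))^2-(Y_{ij}-f_0(\boldsymbol{X}_{ij}))^2\}\big]\big/N^P .
\]
Its mean is $m_i\|f-f_0\|^2_{L_2(Q_{\boldsymbol{X}})}/N^P$. The key step is the variance bound, which separates two contributions:
\begin{itemize}
\item within-subject correlation, via the shared $f_i$, contributes a term of order $\Gamma\sum_i m_i^2/(N^P)^2\cdot\|f-f_0\|^2_{L_2(Q_{\boldsymbol{X}})}$-type cross terms plus $\Gamma/n^P$-type terms;
\item the independent parts contribute $\Gamma\|f-f_0\|^2_{L_2(Q_{\boldsymbol{X}})}/N^P$.
\end{itemize}
Here $r^2\le\Gamma r$ is used so that the variance is controlled by the $Q$-norm. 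I would then apply Bernstein's inequality for each element of an $L_\infty$ cover of $\mathcal{F}$ at scale $1/N^P$, whose log-covering number is $\lesssim\mathcal{D}\mathcal{S}\log\mathcal{S}\log N^P$. A peeling/ratio argument (the ``$-2\widehat R$'' form) absorbs the variance into the mean.

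This yields a stochastic term of order
\[
\Gamma\big\{(n^P)^{-1}+\mathcal{D}\mathcal{S}\log\mathcal{S}(\log N^P)^3/N^P\big\}.
\]
The $(n^P)^{-1}$ term comes from the cluster-level noise $f_i$, which cannot be averaged below the number of subjects. Getting the correct $\Gamma$ factor (not $\Gamma^2$) and the log exponent is the delicate part.

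\textbf{Step 5: balancing.} Substituting $\mathcal{D}=O(\log N^P)$ and $\mathcal{S}=O((N^P)^{d/(2\zeta+d)}(\log N^P)^{-5d/(2\zeta+d)})$, the stochastic term matches the approximation term: both are of order $(N^P)^{-2\zeta/(2\zeta+d)}(\log N^P)^{10\zeta/(2\zeta+d)}$. Adding the $(n^P)^{-1}$ term gives the stated bound.
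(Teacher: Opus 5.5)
Your Steps 1--3 and 5 follow the paper's route: change of measure, the $L_r-2\widehat L_r$ decomposition with a $2\Gamma\inf_{f\in\mathcal F}\|f-f_0\|^2$ approximation term, Theorem~\ref{Theorem approx}, and balancing. Step~4, however, does not deliver the bound you claim.

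If subjects are the independent units and you run Bernstein plus a union bound over a cover of $\mathcal{F}$, both Bernstein parameters are subject-level. On your event, $|Z_i(f)|\lesssim \Gamma m_i(\log N^P)^2/N^P$. For $j\neq k$, the shared $f_i$ gives $\mathrm{Cov}(\ell_{ij},\ell_{ik})=4\,\mathbb{E}\langle f_0-f,f_i\rangle_{L_2(Q_{\boldsymbol{X}})}^2$, so the variance-to-mean ratio is of order $\Gamma\sum_i m_i^2/(N^P)^2$. The ratio/peeling argument then returns a bound of order $\Gamma(\log N^P)^2\log\mathcal{N}_{\mathcal{F}}\bigl(\max_i m_i/N^P+\sum_i m_i^2/(N^P)^2\bigr)$. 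For balanced designs this is $\Gamma\mathcal{DS}\log\mathcal{S}(\log N^P)^3/n^P$: the complexity is divided by $n^P$, not $N^P$, and balancing yields only $(n^P)^{-2\zeta/(2\zeta+d)}$.

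This cannot be fixed by better bookkeeping. Your argument never uses that the $\boldsymbol{X}_{ij}$ are independent across $j$ within a subject; this is implicit in the paper and used in its observation-level Bernstein steps. Without it the claimed rate is false: if $\boldsymbol{X}_{ij}=\boldsymbol{X}_{i1}$ for all $j$ and $f_i$ is nondegenerate, you effectively have $n^P$ i.i.d.\ observations with non-vanishing noise.

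The missing idea is to keep the cluster-level component out of the union bound. Substitute $Y=f_0+f_i+\varepsilon$ and set $h=f_0-f$, as in the paper's split into (A) and (B).

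The part without $f_i$ is $rh^2+2rh\varepsilon^{\tau}$, with $\varepsilon$ truncated at $\tau=2B_3\log N^P$. Truncate variables individually and control the bias; conditioning on a global event $\{\max_{ij}|Y_{ij}|\le\cdot\}$ destroys $\mathbb{E}[f_i(\boldsymbol{X})\mid\boldsymbol{X}]=0$. This part is a sum of $N^P$ independent terms, and observation-level Bernstein gives $\Gamma(\log N^P)^2\log\mathcal{N}_{\mathcal{F}}/N^P$.

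For the cross term $2(N^P)^{-1}\sum_{i,j}r(\boldsymbol{X}_{ij})f_i(\boldsymbol{X}_{ij})h(\boldsymbol{X}_{ij})$, the paper imports the bound from the proof of Theorem~2.1 in \citet{Yan02102025}. That bound has an $(n^P)^{-1}$ term free of $\log\mathcal{N}_{\mathcal{F}}$ plus a $\log\mathcal{N}_{\mathcal{F}}/N^P$ term, which reflects the following separation.
\begin{itemize}
\item Conditional on $\{f_i\}$, the cross term has mean $2\langle h,\bar f\rangle_{L_2(Q_{\boldsymbol{X}})}$ with $\bar f=(N^P)^{-1}\sum_i m_i f_i$. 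Cauchy--Schwarz gives $2\langle h,\bar f\rangle\le\|h\|^2_{L_2(Q_{\boldsymbol{X}})}/16+16\|\bar f\|^2_{L_2(Q_{\boldsymbol{X}})}$ simultaneously for all $h$, with $\mathbb{E}\|\bar f\|^2_{L_2(Q_{\boldsymbol{X}})}\lesssim\Gamma B_2^2\sum_i m_i^2/(N^P)^2$, which is $\asymp\Gamma/n^P$ when balanced. No covering number enters.
\item The conditionally centered remainder has summands that are conditionally independent over all $(i,j)$ given $\{f_i\}$. Observation-level Bernstein, after truncating $f_i$ at level $\asymp B_2\log n^P$, gives $\Gamma B_2^2(\log n^P)^2(\log\mathcal{N}_{\mathcal{F}}+\log N^P)/N^P$.
\end{itemize}
This is how $(n^P)^{-1}$ enters additively instead of multiplying $\log\mathcal{N}_{\mathcal{F}}$.
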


Under the unbounded case, we introduce the truncated density ratio function below. For the same truncation level $\xi$ used in \eqref{hat_r_xi}, the truncated density ratio function $r_{\xi}(\cdot)$ is defined as
\begin{equation}
    r_{\xi}(\cdot) := \min\{r(\cdot), \xi\}
    \label{r_xi}
\end{equation}

\begin{theorem}
    \label{Theorem fhatr_xi}
    Suppose that Assumptions \ref{assumption rsupp}, \ref{assumption rub}, \ref{assumption f0B1} and \ref{assumption fiepsilon} are satisfied, and let $\mathcal{F}$ be the function class in \eqref{FNN_d_1}  bounded by $B_1\geq1$ with the depth $\mathcal{D}=O(\log N^P)$, the weights $\mathcal{B}=O((N^P)^{d/(d+(2/(\delta+1)+2)\zeta)})$ and the size $\mathcal{S}=O((N^P)^{d/(d+(2/(\delta+1)+2)\zeta)}(\log N^P)^{-5d/(2\zeta+d)})$. Let $\xi$ be chosen as $\xi=(1/(n^P\kappa_{\delta})+\mathcal{DS}\log\mathcal{S}(\log N^P)^{3}/(N^P\kappa_{\delta}))^{-1/(2+\delta)}$, then the mean squared error of the estimator $\hat{f}_{0,r_{\xi}}(\cdot)$ obtained from \eqref{f_hat_r_hat1} by replacing $r(\cdot)$ with the truncated density ratio $r_{\xi}(\cdot)$ in \eqref{r_xi} satisfies
    \begin{center}
    $\mathbb{E}\|\hat{f}_{0,r_{\xi}}-f_0\|_{L_2(Q_{\boldsymbol{X}})}^2\lesssim \kappa_{\delta}^{1/(2+\delta)}\big((n^P)^{-(\delta+1)/(\delta+2)}+(N^P)^{-2\zeta/(d+(2/(\delta+1)+2)\zeta)}(\log N^P)^{10\zeta/(2\zeta+d)}\big)$.
    \end{center}
\end{theorem}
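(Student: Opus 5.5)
We plan to reduce Theorem \ref{Theorem fhatr_xi} to a weighted regression with bounded weight $r_\xi\le\xi$, treated exactly as in Theorem \ref{Theorem fhatr} with $\Gamma$ replaced by $\xi$, plus a bias term from truncation controlled by the moment $\kappa_\delta$. Write $\mathcal{E}(f)=\mathbb{E}_{P}[r(\boldsymbol{X})(f(\boldsymbol{X})-f_0(\boldsymbol{X}))^2]=\|f-f_0\|^2_{L_2(Q_{\boldsymbol{X}})}$. Split
\[
\|\hat f_{0,r_\xi}-f_0\|^2_{L_2(Q_{\boldsymbol{X}})}=\mathbb{E}_P[r_\xi(\hat f_{0,r_\xi}-f_0)^2]+\mathbb{E}_P[(r-r_\xi)(\hat f_{0,r_\xi}-f_0)^2].
\]
Since $\|\hat f_{0,r_\xi}\|_\infty,\|f_0\|_\infty\le B_1$, the second term is at most $4B_1^2\,\mathbb{E}_P[r\mathbf{1}\{r>\xi\}]\le 4B_1^2\kappa_\delta\xi^{-(1+\delta)}$ by Markov. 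This term does not depend on the data.

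For the first term, we run the argument of Theorem \ref{Theorem fhatr} with the known weight $r_\xi\in[0,\xi]$. Because $r_\xi$ is a fixed bounded function, the weighted least squares problem \eqref{f_hat_r_hat1} with $r_\xi$ has population minimizer still $f_0$: $\mathbb{E}[Y-f_0(\boldsymbol{X})\mid\boldsymbol{X}]=0$ under $P$, and $f_i,\varepsilon_{ij}$ have conditional mean zero. The decomposition into a stochastic error and an approximation error therefore goes through unchanged. There are two departures from Theorem \ref{Theorem fhatr}. First, the constant $\Gamma$ is replaced by $\xi$, coming from the Bernstein and covering arguments for the clustered sum with sub-exponential $f_i,\varepsilon_{ij}$. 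Second, the within-subject correlation term, which produced $\Gamma/n^P$, now gives $\xi/n^P$. The approximation term is $\inf_{f\in\mathcal F}\mathbb{E}_P[r_\xi(f-f_0)^2]\le \mathbb{E}_P[r_\xi]\|f-f_0\|_\infty^2\le\|f-f_0\|_\infty^2$, which Theorem \ref{Theorem approx} bounds by $\mathcal{S}^{-2\zeta/d}$ up to logarithmic factors. This yields
\[
\mathbb{E}_P[r_\xi(\hat f_{0,r_\xi}-f_0)^2]\lesssim \xi\Big(\frac1{n^P}+\frac{\mathcal{DS}\log\mathcal S(\log N^P)^3}{N^P}\Big)+\mathcal{S}^{-2\zeta/d}(\log)^{c}.
\]

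Next we balance terms. Set $A:=1/n^P+\mathcal{DS}\log\mathcal S(\log N^P)^3/N^P$. We then need to balance $\xi A$ against $\kappa_\delta\xi^{-(1+\delta)}$. The stated choice $\xi=(A/\kappa_\delta)^{-1/(2+\delta)}$ gives both terms of order $\kappa_\delta^{1/(2+\delta)}A^{(1+\delta)/(2+\delta)}$. The $1/n^P$ part contributes $(n^P)^{-(\delta+1)/(\delta+2)}$, using subadditivity of $x\mapsto x^{(1+\delta)/(2+\delta)}$. For the remaining part we take $\mathcal S\asymp (N^P)^{d/(d+(2/(\delta+1)+2)\zeta)}$ up to logs. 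Then $(\mathcal S/N^P)^{(1+\delta)/(2+\delta)}$ and the approximation term $\mathcal S^{-2\zeta/d}$ are both of order $(N^P)^{-2\zeta/(d+(2/(\delta+1)+2)\zeta)}$. To check this, solve $(1-s)(1+\delta)/(2+\delta)=2\zeta s/d$ for the exponent $s$ of $\mathcal{S}$; this gives $s=d/(d+2\zeta(2+\delta)/(1+\delta))$, which matches the stated exponent. The $\log$ factors are absorbed by the $(\log N^P)^{-5d/(2\zeta+d)}$ in $\mathcal S$, as in Theorem \ref{Theorem fhatr}, giving the exponent $10\zeta/(2\zeta+d)$.

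The main obstacle is the second step: verifying that the stochastic error bound of Theorem \ref{Theorem fhatr} scales linearly in the weight bound $\xi$, and not as $\xi^2$, under clustered dependence. A $\xi^2$ scaling would ruin the balance. To get linear scaling, we plan to use Bernstein's inequality with the variance proxy $\mathrm{Var}(r_\xi g)\le \xi\,\mathbb{E}_P[r_\xi g^2]$ (for bounded $g$), which exploits $\mathbb{E}_P r_\xi\le 1$. We would then handle the unbounded sub-exponential noise by truncating at level $\log N^P$, which produces the extra $(\log N^P)^3$.
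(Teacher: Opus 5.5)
Your proposal is correct and follows essentially the same route as the paper. Both arguments split $r=r_\xi+(r-r_\xi)$; your weighted-$L_2$ form is equivalent to the paper's terms (I) and (II) because $\mathbb{E}_P[Y-f_0(\boldsymbol{X})\mid\boldsymbol{X}]=0$. Both then bound the truncation bias by Markov as $O(\kappa_\delta\xi^{-(\delta+1)})$, rerun the Theorem~\ref{Theorem fhatr} argument with $\Gamma$ replaced by $\xi$ (the stochastic error stays linear in $\xi$), and balance with the stated $\xi$. Your bound on the approximation term via $\mathbb{E}_P r_\xi\le 1$, with no factor $\xi$, is slightly sharper than the paper's, and it is absorbed into the final rate because Jensen gives $\kappa_\delta\ge(\mathbb{E}_P r)^{\delta+2}=1$.
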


Theorem \ref{Theorem fhatr} shows that under the uniformly bounded density ratio case, the estimator $\hat{f}_{0,r}(\cdot)$ also achieves the minimax optimal rate of $O((n^P)^{-1}+(N^P)^{-2\zeta/(2\zeta+d)})$ up to logarithmic factors.
Furthermore, Theorem \ref{Theorem fhatr_xi} extends this result to the unbounded case and establishes a non-asymptotic error bound for the truncated estimator $\hat{f}_{0,r_{\xi}}(\cdot)$ by selecting the truncation level as $\xi=(1/(n^P\kappa_{\delta})+\mathcal{DS}\log\mathcal{S}(\log N^P)^{3}/(N^P\kappa_{\delta}))^{-1/(2+\delta)}$ appropriately. Notably, as $\delta \to \infty$, the estimator $\hat{f}_{0,r_{\xi}}(\cdot)$ can also achieve the minimax optimal rate.

By comparing the results of Theorems \ref{Theorem Ef^_0_r^} and \ref{Theorem fhatr} as well as Theorems \ref{Theorem Ef^_0_r^_zeta} and \ref{Theorem fhatr_xi}, we observe that when the density ratio is unknown, the non-asymptotic error bound contains an additional term $O((N_r)^{-2\alpha/(d+2\alpha)}(\log N_r)^{5}/M)$ arising from estimating $r(\cdot)$. In addition, to ensure independence between the estimators of $r(\cdot)$ and $f_0(\cdot)$, we employ sample splitting, which reduces the effective sample size for estimating $f_0(\cdot)$ from $n^P$ to $n_1^P$. Consequently, in both the bounded and unbounded density ratio cases, the non-asymptotic error bounds under known $r(\cdot)$ are tighter than their corresponding bounds under unknown $r(\cdot)$.

Moreover, if each subject contributes only one observation, i.e., $m_i^P=m_i^Q=1$ so that $N^P=n^P$ and $N^Q=n^Q$, our dataset reduces to the independent observations setting. In this setting, our non-asymptotic error bounds are consistent with the minimax optimal rates for independent data up to logarithmic factors. To illustrate, we consider the case in Theorem \ref{Theorem fhatr} where the density ratio is known and uniformly bounded. According to Theorem \ref{Theorem fhatr}, the resulting error upper bound is of order $O((n^P)^{-2\zeta/(2\zeta+d)})$ up to logarithmic factors, which matches the existing minimax rate in the independent data setting \citep{10.1214/19-AOS1875, 10.1214/20-AOS2034}. Thus, the independent data setting is recovered as a special case of our analysis.

\section{Numerical Studies}
\label{sec: Numerical Studies}
\subsection{Simulations}

In this section, we evaluate the numerical performance of proposed estimators, including the estimators $\hat{f}_{0,\hat{r}}(\cdot)$ and $\hat{f}_{0,r}(\cdot)$ in \eqref{f_hat_r_hat} and \eqref{f_hat_r_hat1}, respectively. For comparison, we also consider a naive estimator $\hat{f}_0^{P}(\cdot)$, which is trained on all source domain data while ignoring the covariate shift between the source and target domains, defined as follows
\begin{equation} \label{hatf0P}
    \hat{f}_0^{P}(\cdot) \in \arg\min_{f \in \mathcal{F}} \frac{1}{N^P} \sum_{i=1}^{n^{P}} \sum_{j=1}^{m_i^P} (Y_{ij}^P- f(\boldsymbol{X}_{ij}^P))^2,
\end{equation}
where $N^P$ is the total number of observations in the source domain, and $\mathcal{F}$ is the function class in \eqref{FNN_d_1} bounded by $B_1$ with the depth $\mathcal{D}=O(\log N^P)$, the weights $\mathcal{B}=O((N^P)^{d/(d+2\zeta)})$ and the size $\mathcal{S}=O((N^P)^{d/(d+2\zeta)}(\log N^P)^{-5d/(d+2\zeta)})$. Under this choice, the estimator $\hat{f}_0^{P}(\cdot)$ for the target domain model achieves the minimax error rate when the density ratio is bounded, and achieves the sub-optimal rate when the density ratio is unbounded but has a finite second moment. The corresponding theoretical results in Theorem \ref{Theorem fPbounded} and \ref{Theorem fPunbounded} with proofs provided in \ref{f0P}. 

In the simulations, we denote the unknown ratio estimator $\hat{f}_{0,\hat{r}}(\cdot)$, the known ratio estimator $\hat{f}_{0,r}(\cdot)$, and the naive estimator $\hat{f}_0^{P}(\cdot)$ as URE, KRE, and NE, respectively. For each estimator, we consider both the bounded and unbounded density ratio settings. The implementation details for URE, KRE, and NE are provided below.
\begin{itemize}
    \item  
    \textit{URE for $\hat{f}_{0,\hat{r}}(\cdot)$:}
    For the estimation of $\hat{r}(\cdot)$ in \eqref{r_hat}, we specify $\mathcal{V}$ as the class of fully connected feedforward neural network with two hidden layers, using ReLU activations in the hidden layers and a Softplus activation in the output layer to remain nonnegativity of the estimated density ratio. The network is trained using Adam with an initial learning rate of $10^{-3}$, decayed by a factor of 0.5. 
    For $\hat{f}_{0,\hat{r}}(\cdot)$ in \eqref{f_hat_r_hat}, we specify $\mathcal{F}$ as the class of fully connected feedforward neural network with three hidden layers and ReLU activations. Training is carried out by stochastic gradient descent with nesterov momentum 0.9 and an initial learning rate of 0.01, which is decayed by a factor of 0.5.
    Training runs for at most 200 epochs with early stopping.
    
    \item \textit{KRE for $\hat{f}_{0,r}(\cdot)$:} 
    For the estimation of $\hat{f}_{0,r}(\cdot)$ in \eqref{f_hat_r_hat1}, we use the same network architecture, hyperparameter settings, and function class $\mathcal{F}$ as those used for $\hat{f}_{0,\hat{r}}(\cdot)$ in \eqref{f_hat_r_hat}.
    
    \item \textit{NE for $\hat{f}_0^{P}(\cdot)$:} 
    For the estimation of $\hat{f}_0^{P}(\cdot)$ in \eqref{hatf0P}, we use the same network architecture, hyperparameter settings, and function class $\mathcal{F}$ as those used in KRE.

\end{itemize}

For simplicity, we assume that the number of repeated measurements is the same for all subjects across all domains, that is, $m_i^P=m_i^Q=m$. 
A validation dataset $\mathcal{D}^{V}=\{(\boldsymbol{X}_{ij}^{V},Y_{ij}^{V}):1\leq i\leq n^{V}, 1\leq j\leq m_i^V\}$ for model selection is independently sampled from the source distribution, where $n^{V}=\lceil0.2n^{P} \rceil$ and $m_{i}^{V}=m$.
For the estimator URE, we set $n_{1}^{P}=n^{P}/2$, where the first $n_1^{P}$ subjects are used for target regression estimation and the other for density ratio estimation. In our study, we fix $n^Q=10000$ and report the mean squard error between $f_0(\cdot)$ and the estimators URE, KRE and NE together with their corresponding standard errors over 50 Monte Carlo replications under different scenarios, respectively. 

Recalling the model setting $Y_{i j}^Q=f_0(\boldsymbol{X}_{i j}^Q)+f_i(\boldsymbol{X}_{i j}^Q)+\varepsilon_{i j}$ in \eqref{model_Q}. We first consider the following low-dimensional case.

 \noindent \textbf{\textit {Case 1:}} We set the functions $f_0(\cdot)$ and $f_i(\cdot)$ as $f_0(\boldsymbol{X}_{ij}^{Q})=\sin(12\pi\sum_{l=1}^dlX_{ijl}/d(d+1))$ and $f_i(\boldsymbol{X}_{ij}^{Q})=\sum_{k=2}^\infty\sum_{l=1}^d\sqrt3\lambda_{ikl}(\sin(k\pi X_{ijl}^{Q})+\cos(k\pi X_{ijl}^{Q}))/\sqrt{d}k$, respectively.

In Case 1, 
$\boldsymbol{X}_{ij}^{Q}=(X_{ij1},\ldots,X_{ijd})^\top\in[0,1]^d$ with $d=3$.
We set $\boldsymbol{X}_{ij}^Q=\Phi(\boldsymbol{Z}_{ij}^Q)$, where $\boldsymbol{Z}_{ij}^Q\sim N(\mu_Q\mathbf{1}_d,\sigma_{Q}^{2}\mathbf{I}_d)$, and $\Phi(\boldsymbol{Z}_{ij}^Q)$ denotes applying the cumulative distribution function of the standard normal distribution componentwise to $\boldsymbol{Z}_{ij}^Q$.
Similarly, we let $\boldsymbol{X}_{ij}^P=\Phi(\boldsymbol{Z}_{ij}^P)$ and $\boldsymbol{Z}_{ij}^P\sim N(\mu_{P}\mathbf{1}_d,\sigma_P^{2}\mathbf{I}_d)$.
It is easy to show that the density ratio $r(\cdot)$ is uniformly bounded when $\sigma_P^{2}\geq\sigma_Q^{2}$, and unbounded but has a finite second moment when $\sigma_Q^{2}\geq\sigma_P^{2}\geq\sigma_Q^{2}/2$. In particular, we use $\mu_{P}=0,\sigma_P^{2}=0.4,\mu_Q=0.5,\sigma_Q^{2}=0.3$ for the bounded density ratio setting and $\mu_{P}=0,\sigma_P^{2}=0.3,\mu_{Q}=1,\sigma_Q^{2}=0.5$ for the setting in which the density ratio is unbounded but has a finite second moment. The noise terms are independently generated from $\varepsilon_{ij}\sim N(0,0.01^2)$ and the coefficients $\lambda_{ikl}$ are independently drawn from $N(0,0.1^2)$.

\begin{table}[!htbp]
\centering
\caption{Prediction mean squared errors with standard errors for Case 1. Here all entries are scaled by $10^{-1}$.}
\label{table:d=3}
\begin{tabular}{c c c c c c}
\toprule
& & \multicolumn{2}{c}{bounded $r(\cdot)$} & \multicolumn{2}{c}{unbounded $r(\cdot)$} \\
\cmidrule(lr){3-4}\cmidrule(lr){5-6}
$m$ & Method & $n^P=500$ & $n^P=1000$
            & $n^P=500$ & $n^P=1000$ \\
\midrule
\multirow{3}{*}{25}
& NE   & 0.041 (0.028) & 0.030 (0.032) & 0.280 (0.251) & 0.108 (0.080) \\
& KRE  & 0.024 (0.009) & 0.017 (0.008) & 0.065 (0.023) & 0.064 (0.022)\\
& URE  & 0.383 (0.294) & 0.087 (0.028) & 0.076 (0.033) & 0.066 (0.030)\\
\midrule
\multirow{3}{*}{50}
& NE   & 0.031 (0.020) & 0.021 (0.014) & 0.117 (0.091) & 0.060 (0.026)\\
& KRE  & 0.020 (0.011) & 0.012 (0.005) & 0.063 (0.021) & 0.056 (0.014)\\
& URE  & 0.030 (0.013) & 0.019 (0.008) & 0.065 (0.026) & 0.058 (0.024)\\
\midrule
\multirow{3}{*}{75}
& NE   & 0.022 (0.013) & 0.016 (0.006) & 0.071 (0.038) & 0.051 (0.022)\\
& KRE  & 0.016 (0.007) & 0.011 (0.005) & 0.058 (0.015) & 0.050 (0.012)\\
& URE  & 0.022 (0.011) & 0.015 (0.005) & 0.059 (0.031) & 0.050 (0.015)\\
\bottomrule
\end{tabular}
\end{table}

Table \ref{table:d=3} presents the prediction mean squared error (MSE) and corresponding standard errors, scaled by $10^{-1}$, for the three estimators NE, KRE and URE under Case 1 across varying $n^P$ and $m$. For a generic estimator $\hat{f}(\cdot)$, the prediction MSE is defined as $\text{MSE}(\hat{f}) = \sum_{i=1}^{n^Q}\sum_{j=1}^m (\hat{f}(\boldsymbol{X}^Q_{ij}) - f_0(\boldsymbol{X}^Q_{ij}))^2/(n^Qm)$ . Overall, the MSEs of all three estimators decrease as both $n^P$ and $m$ increase, aligning with theoretical expectations of consistency. Under the bounded density ratio setting, when $m=25$, the KRE estimator demonstrates the best performance, whereas URE yields the largest MSE, even underperforming the NE estimator. 
This relative inefficiency of URE on the one hand may be due to the inherent sample splitting mechanism, leading to a noticeable loss of estimation accuracy when $m$ is relatively small. On the other hand, the substantial overlap in covariate support between the source and target domains significantly mitigates the impact of covariate shift, allowing the NE to maintain predictive accuracy even without accounting for the shift.
As $m$ increases to 50 and 75, KRE consistently maintains the lowest MSE, while the MSEs of NE and URE converge to comparable levels. However, URE achieves strictly smaller standard errors compared to NE under these larger $m$ settings, indicating that it provides more stable estimates while maintaining comparable prediction accuracy.

The results under the unbounded density ratio setting further highlight the advantages of the proposed methods. As expected, the KRE, which utilizes the true known density ratio, consistently yields the minimum MSE across all combinations of $n^P$ and $m$. 
Furthermore, in contrast to the bounded setting, URE uniformly outperforms the NE estimator in terms of MSE. 
By effectively correcting for the covariate shift, both the proposed KRE and URE methods provide substantially more accurate and robust predictions than the NE method.

Next to further evaluate the predictive performances of the estimators, Case 2 considers an underlying low-dimensional manifold structure with $d=10$. The covariates $\boldsymbol{X}_{ij}^Q$ and $\boldsymbol{X}_{ij}^P$ are generated following the identical procedure described in Case 1.
Under this manifold structure, the $f_0(\cdot)$ and $f_i(\cdot)$ are specified as follows.

\noindent
\textbf{\textit {Case 2:}} 
We set
$f_0(\boldsymbol{X}^{Q}_{ij})=\sin(2\pi \sum_{l=1}^{5}X^{Q}_{ijl}/5)\cos(2\pi \sum_{l=6}^{10}X^{Q}_{ijl}/5)$ and
\vspace{-0.5cm}
\begin{align*}
f_i(\boldsymbol{X}_{ij}^{Q})=&\sum_{k=2}^\infty\sqrt{3}\{\lambda_{ik1}[ \sin(k\pi (\sum_{l=1}^5 X^Q_{ijl}/5))+\cos(k\pi (\sum_{l=1}^5X^Q_{ijl}/5))]\\
&+\lambda_{ik2}[\sin(k\pi (\sum_{l=6}^{10}X^Q_{ijl}/5))+\cos(k\pi (\sum_{l=6}^{10}X^Q_{ijl}/5))]\}/\sqrt{2}k.
\end{align*}

Table 2 summarizes the simulation results for Case 2.
Similar to the results in Case 1, the prediction MSEs of all estimators decrease as $n^P$ and $m$ increase, confirming the numerical consistency of the methods. 
Notably, the absolute values of the MSEs are larger than most of those in Case 1, especially under the unbounded density ratio setting, due to the increased complexity and dimensionality of the model.
\begin{table}[!htbp]
\centering
\caption{Prediction mean squared errors with their standard errors for Case 2.}
\label{table:d=10}
\begin{tabular}{c c c c c c}
\toprule
& & \multicolumn{2}{c}{bounded $r(\cdot)$} & \multicolumn{2}{c}{unbounded $r(\cdot)$} \\
\cmidrule(lr){3-4}\cmidrule(lr){5-6}
$m$ & Method & $n^P=500$ & $n^P=1000$
            & $n^P=500$ & $n^P=1000$ \\
\midrule
\multirow{3}{*}{25}
& NE   & 0.078 (0.018) & 0.046 (0.016) & 1.194 (0.160) & 0.929 (0.164) \\
& KRE  & 0.032 (0.010) & 0.026 (0.011) & 0.595 (0.124) & 0.416 (0.120) \\
& URE  & 0.048 (0.014) & 0.043 (0.015) & 0.625 (0.122) & 0.548 (0.087) \\
\midrule
\multirow{3}{*}{50}
& NE   & 0.045 (0.015) & 0.028 (0.008) & 0.900 (0.134) & 0.712 (0.099) \\
& KRE  & 0.025 (0.011) & 0.015 (0.011) & 0.455 (0.154) & 0.310 (0.091) \\
& URE  & 0.041 (0.013) & 0.026 (0.005) & 0.468 (0.104) & 0.431 (0.058) \\
\midrule
\multirow{3}{*}{75}
& NE   & 0.037 (0.013) & 0.023 (0.007) & 0.789 (0.109) & 0.645 (0.117) \\
& KRE  & 0.018 (0.010) & 0.007 (0.008) & 0.351 (0.105) & 0.231 (0.062) \\
& URE  & 0.027 (0.006) & 0.023 (0.004) & 0.448 (0.083) & 0.401 (0.053) \\
\bottomrule
\end{tabular}
\end{table}

Under the bounded setting, the URE estimator demonstrates a more apparent advantage over the NE estimator compared to the low-dimensional setting. Unlike Case 1, where NE was competitive at $m=25$, URE now uniformly outperforms NE even with a small $m$. This suggests that as the dimensionality increases, the advantage of correcting for covariate shift via URE outweighs the efficiency loss from sample splitting more rapidly. As $m$ increases to 50 and 75, KRE continues to attain the lowest MSE, while URE maintains its lead over NE with consistently smaller standard errors.
Under the unbounded $r(\cdot)$ setting, KRE uniformly attains the lowest MSE and URE consistently outperforms NE. 
Although the MSE of all estimators decreases monotonically as $n^P$ and $m$ increase, the relative performance disparity between NE and the other two estimators persists and is significantly larger than that in Case 1. This highlights the necessity of correcting for the covariate shift.

\subsection{Real Data Analysis}
We evaluate the proposed methods using an air quality dataset obtained from \citet{zheng2024dynamic}. The dataset consists of observations from 94 monitoring stations located in Beijing, Shanghai, Shandong, Hebei, and Tianjin. We treat the 20 stations in Beijing as the target domain and the remaining stations in the other regions as the source domain. Each station is viewed as a subject, with hourly measurements of air pollutant concentrations and meteorological variables collected over time. We take the logarithm of the PM2.5 concentration as the response variable. The eight covariates consist of the concentrations of SO$_2$, NO$_2$, O$_3$ and CO together with the wind speed, humidity, dew point temperature, and air pressure.

For each station, measurements are continuously recorded over 72 hours, and the final 24 hours fall within the period of the Beijing orange air pollution alert, which affects all stations in Beijing. To mitigate the impact of the alert and make the assumption ${P}_{Y|\boldsymbol{X}}={Q}_{Y|\boldsymbol{X}}$ more plausible, we retain only the first 48 hours of observations for each station. To handle missing data, we delete hourly observations with missing values. In the target domain, we further remove the entire station if any variable is missing. After preprocessing, 11 stations remain in the target domain and 64 in the source domain. 

Since our methods are designed to address covariate shift, we first examine whether the source and target domains differ in their covariate distributions. Figure \ref{fig:covariate_shift_hist} shows the normalized histograms and smoothed empirical density curves of each covariate across the two domains. 
Because the distributions of SO$_2$, O$_3$, wind speed and CO have substantial right skewness, we apply the $h(x)=\log(1+x)$ transformation to these four covariates in order to facilitate a clearer comparison of their distributions across the two domains. 
As shown in Figure \ref{fig:covariate_shift_hist}, across the source and target domains, covariates SO$_2$, air pressure, dew point temperature, O$_3$, NO$_2$, and wind speed exhibit distinct distributional differences, while the distributions of humidit and CO are similar, suggesting that the two domains do not arise from the same covariate distribution.
As the density ratio is unknown in practice, we next evaluate the predictive performance of the NE and URE estimators on the target domain.
 \begin{figure}[htbp]
    \centering
    \includegraphics[width=\textwidth]{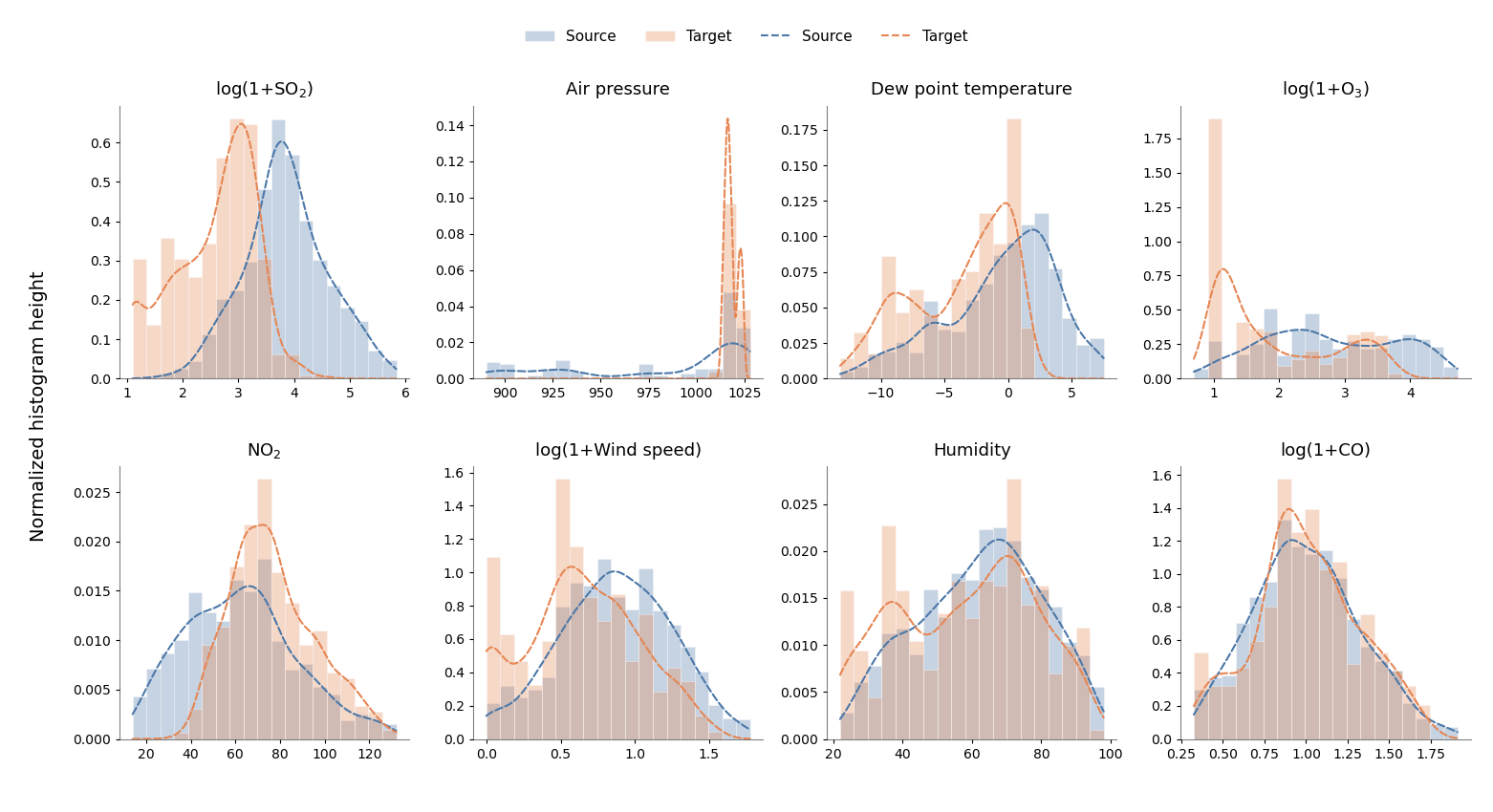}
    \caption{Normalized histograms and smoothed empirical density curves of the eight covariates in the source and target domains. 
    }
    \label{fig:covariate_shift_hist}
\end{figure}
\begin{figure}[htbp]
    \centering
    \begin{subfigure}[t]{0.37\textwidth}
        \centering
        \includegraphics[width=\textwidth]{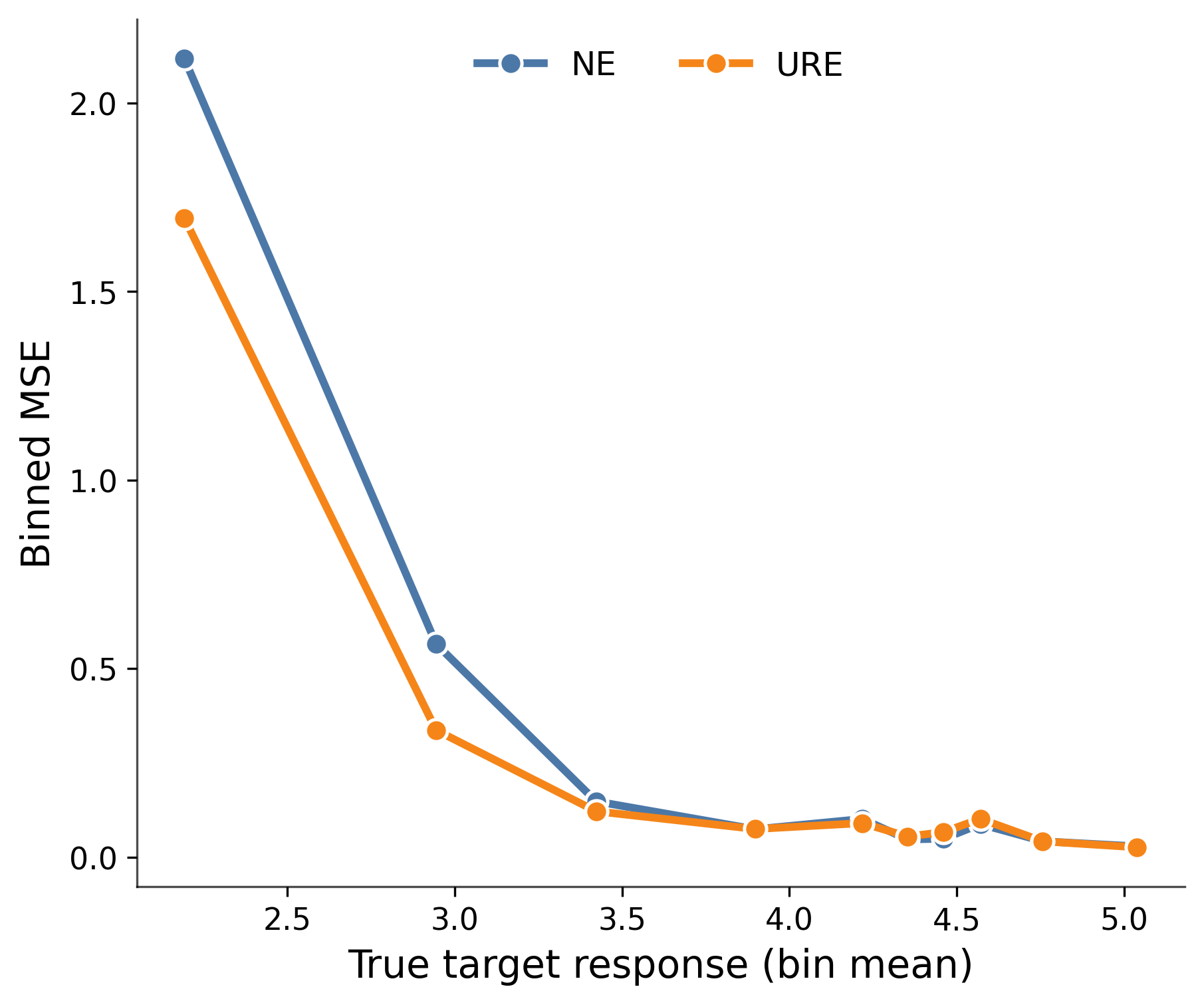}
        \caption{}
        \label{fig:binned_mse}
    \end{subfigure}
    \hspace{0.02\textwidth}
    \begin{subfigure}[t]{0.37\textwidth}
        \centering
        \includegraphics[width=\textwidth]{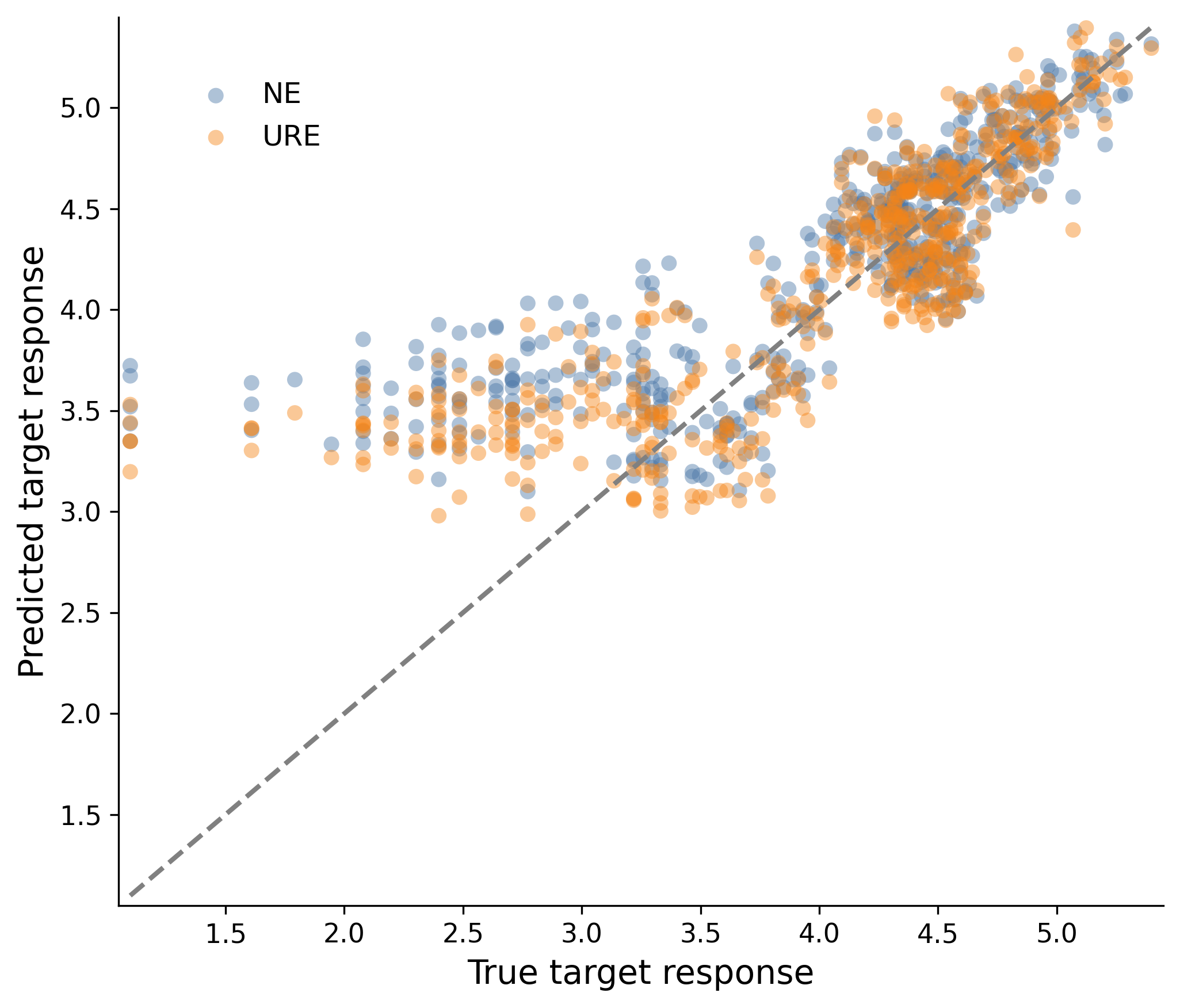}
        \caption{}
        \label{fig:true_pred_overlay}
    \end{subfigure}
    \caption{Predictive performance of the estimators NE and URE on the target domain. (a) Binned MSE with respect to the mean true target response in each bin. (b) True versus predicted responses for the two estimators.} 
    \label{fig:NE_URE_compare}
\end{figure}

Figure \ref{fig:NE_URE_compare} summarizes the predictive performance of the estimators NE and URE on the target domain. 
In Figure \ref{fig:NE_URE_compare}(\subref{fig:binned_mse}), we report the binned mean squared error (binned MSE) with respect to the true target response within each bin. Specifically, the samples are first partitioned into ten bins according to the quantiles of the true responses. 
The x-axis represents the average true response within each bin. For a given bin, the corresponding binned MSE is computed as the mean squared error between the predicted and true responses over all samples assigned to that bin. As shown in Figure \ref{fig:NE_URE_compare}(\subref{fig:binned_mse}), the estimator URE achieves a smaller binned MSE than NE over most of the target response range, with the improvement being particularly significant at lower target responses. 
Figure \ref{fig:NE_URE_compare}(\subref{fig:true_pred_overlay}) further illustrates predictive performance through a scatter plot of the true responses versus the predicted responses for both estimators. The dashed line means the predicted response is equal to the true response so that points lying closer to this line indicate more accurate predictions. As shown in Figure \ref{fig:NE_URE_compare}(\subref{fig:true_pred_overlay}), points predicted by estimator URE are generally more concentrated around the dashed line than those of NE, especially in the low target response region. 
Overall, these findings are consistent with the simulation results and show the advantange of correcting for the covariate shift in prediction of target domain.

\section{Conclusion and Discussion}
We have developed a deep transfer learning framework for repeated measurements regression under covariate shift when target domain responses are unavailable or expensive to collect. The proposed approach used density ratio weighting to transfer information from a labeled source domain to an unlabeled target domain, with both the density ratio and the target regression function estimated by ReLU FNNs. For both unknown and known density
ratio, and for both uniformly bounded and unbounded density ratio settings, we
established non-asymptotic error bounds and showed that the proposed estimators could achieve
the minimax optimal convergence rate up to logarithmic factors under suitable conditions.
We also recovered the independent data problem as a special case of the repeated measurements framework. 
There remain possible promising directions for future research, including the incorporation of multiple source datasets, the use of partially observed responses in the target domain and the development of more adaptive approaches for handling unbounded density ratios.

\clearpage

\bibliographystyle{apalike_copy}
\bibliography{./Deep_Regression_for_Repeated_Measurements_under_Covariate_Shift}
\newpage
\appendix

\label{Appendix}
\renewcommand{\appendixname}{Appendix}
\renewcommand{\thesection}{\appendixname~\Alph{section}}
\renewcommand{\thesubsection}{\Alph{section}.\arabic{subsection}}
\numberwithin{equation}{section}
\renewcommand{\theequation}{\Alph{section}.\arabic{equation}}
\label{Appendix} 
\section{Overview}

This appendix provides detailed proofs
of the main theoretical results and additional technical results.
The remaining material is organized as follows. 

\begin{itemize}

    \item \ref{proofs} includes the proofs for the density ratio estimators and the target regression estimators based on density ratio in Section \ref{sec: Theoretical Analysis}. 
    Section \ref{Proofr} studies the unknown density ratio estimation and proves Theorems \ref{Theorem rstochastic}, \ref{Theorem approx}, \ref{Theorem 3rb} and \ref{Theorem r_xi_hat}. 
    Sections \ref{ProofURE} and \ref{ProofKRE} contain the proofs of Theorems \ref{Theorem Ef^_0_r^} and \ref{Theorem Ef^_0_r^_zeta}, and Theorems \ref{Theorem fhatr} and \ref{Theorem fhatr_xi}, respectively, which establish the non-asymptotic error bounds for the target regression estimators when the density ratio is unknown and known. 
    
    \item \ref{f0P} develops a theoretical analysis of the naive estimator $\hat{f}_{0}^{P}(\cdot)$ in \eqref{hatf0P}. 
    It provides non-asymptotic error bounds under both bounded density ratio and finite second moment assumptions, together with the corresponding proofs.
\end{itemize}

\section{Proof of the Main Results in Section \ref{sec: Theoretical Analysis}}\label{proofs}

We first introduce the definition of the covering number, which quantifies the complexity of the candidate function class and will play a key role in the subsequent proofs.

\begin{definition}[Covering Number] For any $\epsilon >0$, the $\epsilon$-covering number of a function class $\mathcal{G}$ with respect to the metric $\|\cdot\|_{L_{2}(P_{\boldsymbol{X}})}$ is defined as
\begin{equation*}
    \mathcal{N}{ (\epsilon,\mathcal{G},\|\cdot\|_{L_{2}(P_{\boldsymbol{X}})})}:=\min\big\{m\in\mathbb{N}:\exists g_1,\ldots,g_m\in\mathcal{G}\mathrm{~s.t.~}\forall g\in\mathcal{G},\min_{1\leq j\leq m}\|g-g_j\|_{L_2(P_{\boldsymbol{X}})}\leq\epsilon\big\}.
\end{equation*}
\end{definition}\label{Covering number}
\subsection{Proofs for Unknown Density Ratio Estimation}\label{Proofr}

In this section, we present the proofs of Theorems \ref{Theorem rstochastic}, \ref{Theorem approx}, \ref{Theorem 3rb} and \ref{Theorem r_xi_hat}, which are used
to establish the theoretical properties of density ratio estimation. Theorem \ref{Theorem rstochastic} provides a stochastic error bound for the density ratio estimator when density ratio is uniformly bounded, while Theorem \ref{Theorem approx} establishes the approximation error bound for ReLU FNNs over H\"older function classes. Based on these two results, Theorem \ref{Theorem 3rb} derives a non-asymptotic upper bound for the density ratio estimator under a uniformly bounded density ratio setting. Theorem \ref{Theorem r_xi_hat} further extends the analysis to the unbounded density ratio setting.
\subsubsection{Proof of Theorem \ref{Theorem rstochastic}}
\begin{proof}
 Define $\tilde{\ell}_{P}(v;\boldsymbol{X}):=0.5v^2(\boldsymbol{X})-0.5r^2(\boldsymbol{X})$ and $\tilde{\ell}_{Q}(v;\boldsymbol{X}):=v(\boldsymbol{X})-r(\boldsymbol{X})$, then the stochastic error term $\mathbb{E}[L(r)-2\widehat{L}(\hat{r})+L(\hat{r})]$ can be written as
\begin{align*}
    &\mathbb{E}[L(r)-2\widehat{L}(\hat{r})+L(\hat{r})]=\mathbb{E}[L(\hat{r})-L(r)-2(\widehat{L}(\hat{r})-\widehat{L}(r))]\\
    =&\mathbb{E}[\tilde{\ell}_{P}(\hat{r};\boldsymbol{X}_{ij}^{P})-\tilde{\ell}_{Q}(\hat{r};\boldsymbol{X}_{ij}^{Q})]-2\mathbb{E}[\sum_{i=n_1^{P}+1}^{n^P}\sum_{j=1}^{m_{i}^{P}}\tilde{\ell}_{P}(\hat{r};\boldsymbol{X}_{ij}^{P})/N_{2}^{P}-\sum_{i=1}^{n^Q}\sum_{j=1}^{m_i^Q}\tilde{\ell}_{Q}(\hat{r};\boldsymbol{X}_{ij}^{Q})/N^{Q}].
\end{align*}
Let $\mathcal{D}^{P^{\prime}}_{2}:=\{(\boldsymbol{X}_{ij}^{P^{\prime}},Y_{ij}^{P^{\prime}}):n_1^P+1\leq i\leq n^{P}, 1\leq j\leq m_i^P\}$ and $\mathcal{D}^{Q^{\prime}}:=\{\boldsymbol{X}_{ij}^{Q^{\prime}}: 1\leq i\leq n^{Q}, 1\leq j\leq m_i^{Q}\}$ be independent copies of $\mathcal{D}^{P}_{2}$ and $\mathcal{D}^{Q}$, respectively. Let $G_{P}(v;\boldsymbol{X}_{ij}^{P}):=\mathbb{E}_{\mathcal{D}_{2}^{P^{\prime}}}[\tilde{\ell}_{P}(v;\boldsymbol{X}_{ij}^{P^{\prime}})-2\tilde{\ell}_{P}(v;\boldsymbol{X}_{ij}^{P})]$ and $G_{Q}(v;\boldsymbol{X}_{ij}^{Q}):=\mathbb{E}_{\mathcal{D}^{Q^{\prime}}}[\tilde{\ell}_{Q}(v;\boldsymbol{X}_{ij}^{Q^{\prime}})-2\tilde{\ell}_{Q}(v;\boldsymbol{X}_{ij}^{Q})]$. Then we have
\begin{equation}
    \mathbb{E}[L(r)-2\widehat{L}(\hat{r})+L(\hat{r})]=\mathbb{E}[\sum_{i=n_1^{P}+1}^{n^P}\sum_{j=1}^{m_{i}^{P}}G_{P}(\hat{r};\boldsymbol{X}_{ij}^{P})/N_{2}^{P}-\sum_{i=1}^{n^Q}\sum_{j=1}^{m_i^Q}G_{Q}(\hat{r};\boldsymbol{X}_{ij}^{Q})/N^{Q}].
    \label{T1Gresult}
\end{equation}
Let $\mathcal{N}_{\mathcal{V}}:=\mathcal{N}(\epsilon,\mathcal{V},\|\cdot\|_{L_{2}(P_{\boldsymbol{X}})})$ be the $\epsilon$-covering number of $\mathcal{V}$ with respect to $\|\cdot\|_{L_2(P_X)}$. Accordingly, there exists an $\epsilon$-cover $\mathcal{C}_{\mathcal{V}}=\{v_1,\ldots,v_{\mathcal{N}_{\mathcal{V}}}\}\subset\mathcal{V}$. Then it follows that for any $v\in\mathcal{V}$, there exists a $\tilde{v}\in\mathcal{C}_{\mathcal{V}}$ such that 
\begin{center}
    $\mathbb{E}[\tilde{\ell}_{P}(v;\boldsymbol{X}_{i j}^{P})-\tilde{\ell}_{P}(\tilde{v};\boldsymbol{X}_{i j}^{P})]\leq \Gamma\epsilon,~\mathbb{E}[\tilde{\ell}_{Q}(v;\boldsymbol{X}_{i j}^{Q})-\tilde{\ell}_{Q}(\tilde{v};\boldsymbol{X}_{i j}^{Q})]\leq \Gamma\epsilon,$
\end{center}
\begin{center}
    $\mathbb{E}[G_{P}(v;\boldsymbol{X}_{ij}^{P})-G_{P}(\tilde{v};\boldsymbol{X}_{ij}^{P})]\leq3\Gamma\epsilon,~ \mathbb{E}[G_{Q}(v;\boldsymbol{X}_{ij}^{Q})-G_{Q}(\tilde{v};\boldsymbol{X}_{ij}^{Q})]\leq3\Gamma\epsilon.$
\end{center}

It is easy to show that $|\tilde{\ell}_{P}(\tilde{v};\boldsymbol{X}_{ij}^{P})-\mathbb{E}[\tilde{\ell}_{P}(\tilde{v};\boldsymbol{X}_{ij}^{P})]|\leq4\Gamma^2$, and $|\tilde{\ell}_{Q}(\tilde{v};\boldsymbol{X}_{ij}^{Q})-\mathbb{E}[\tilde{\ell}_{Q}(\tilde{v};\boldsymbol{X}_{ij}^{Q})]|\\\leq4\Gamma$. Denote by $b:=4\Gamma^2/N_2^P+4\Gamma/N^Q$, $A_1:=\sum_{i=n_1^P+1}^{n^P}\sum_{j=1}^{m_i^P}\tilde{\ell}_{P}(\tilde{v};\boldsymbol{X}_{ij}^{P})/N_2^{P}$, $A_2:=-\sum_{i=1}^{n^Q}\sum_{j=1}^{m_i^Q}\tilde{\ell}_{Q}(\tilde{v};\boldsymbol{X}_{ij}^{Q})/N^{Q}$, $A=A_1+A_2$ and $\sigma^2:=Var(A)$, then we have
\begin{align*}
    \sigma^2&=Var(\tilde{\ell}_{P}(\tilde{v};\boldsymbol{X}_{ij}^{P}))/N_2^{P}+Var(\tilde{\ell}_{Q}(\tilde{v};\boldsymbol{X}_{ij}^{Q}))/N^{Q}\\
    &\leq\mathbb{E}[(\tilde{\ell}_{P}(\tilde{v};\boldsymbol{X}_{ij}^{P}))^{2}]/N_2^{P}+\mathbb{E}[(\tilde{\ell}_{Q}(\tilde{v};\boldsymbol{X}_{ij}^{Q}))^{2}]/N^{Q}\\
    &=\mathbb{E}[(\tilde{v}(\boldsymbol{X}_{ij}^P)-r(\boldsymbol{X}_{ij}^P))^{2}(\tilde{v}(\boldsymbol{X}_{ij}^P)+r(\boldsymbol{X}_{ij}^P))^{2}]/4N_2^{P}+\mathbb{E}[(\tilde{v}(\boldsymbol{X}_{ij}^Q)-r(\boldsymbol{X}_{ij}^Q))^2]/N^{Q}\\
    & \leq(\Gamma^2/N_2^P+\Gamma/N^Q)\mathbb{E}[(\tilde{v}(\boldsymbol{X}_{ij}^P)-r(\boldsymbol{X}_{ij}^P))^2]\\
    & =(2\Gamma^2/N_2^P+2\Gamma/N^Q)\mathbb{E}[\tilde{\ell}_{P}(\tilde{v};\boldsymbol{X}_{ij}^P)-\tilde{\ell}_{Q}(\tilde{v};\boldsymbol{X}_{ij}^Q)]=(2\Gamma^2/N_2^P+2\Gamma/N^Q)\mathbb{E}[A].
\end{align*}
Thus, we have $\mathbb{E}[A]\geq \sigma^2N_{2}^{P}N^{Q}/(2\Gamma(N_2^P+\Gamma N^Q))$. 
Let $u:=t/2+\sigma^2N_{2}^{P}N^{Q}/(4\Gamma(N_2^P+\Gamma N^Q))$ with $t>0$, it is easy to show that $\sigma^2/u\leq4\Gamma^2/N_2^{P}+4\Gamma/N^{Q}$ and  $u\geq t/2$. By Bernstein concentration inequality, we have
\begin{align*}
 &\mathbb{P}(\sum_{i=n_1^P+1}^{n^P}\sum_{j=1}^{m_i^P}G_{P}(\tilde{v};\boldsymbol{X}_{ij}^{P})/N_{2}^{P}-\sum_{i=1}^{n^Q}\sum_{j=1}^{m_i^Q}G_{Q}(\tilde{v};\boldsymbol{X}_{ij}^{Q})/N^{Q}>t)\\
 &~~~\leq\mathbb{P}(\mathbb{E}[A]-A>t/2+\sigma^2N_{2}^{P}N^{Q}/(2\Gamma(N_2^P+\Gamma N^Q)))=\mathbb{P}\left(\mathbb{E}[A]-A>u\right) \\
  &~~~\leq\exp(-u^2/2(\sigma^2+bu))\leq\exp\left(-N_ru/(16\Gamma(\Gamma+1))\right) \\
  &~~~\leq\exp(-N_rt/(32\Gamma(\Gamma+1))),
\end{align*}
where $1/N_r=1/N_2^P+1/N^Q$. Setting $a=6\Gamma\epsilon+32\Gamma(\Gamma+1)\log\mathcal{N}_{\mathcal{V}}/N_r$ with $\epsilon=(\Gamma+1)/N_r$, we have
    \begin{align}
    &\mathbb{E}[L(r)-2\widehat{L}(\hat{r})+L(\hat{r})] =\mathbb{E}[\sum_{i=n_1^{P}+1}^{n^P}\sum_{j=1}^{m_{i}^{P}}G_{P}(\hat{r};\boldsymbol{X}_{ij}^{P})/N_{2}^{P}-\sum_{i=1}^{n^Q}\sum_{j=1}^{m_{i}^{Q}}G_{Q}(\hat{r};\boldsymbol{X}_{ij}^{Q})/N^{Q}]\nonumber\\
     \leq&\mathbb{E}[\max_{v\in\mathcal{V}}\{\sum_{i=n_1^{P}+1}^{n^P}\sum_{j=1}^{m_{i}^{P}}G_{P}(v;\boldsymbol{X}_{ij}^{P})/N_{2}^{P}-\sum_{i=1}^{n^Q}\sum_{j=1}^{m_{i}^{Q}}G_{Q}(v;\boldsymbol{X}_{ij}^{Q})/N^{Q}\}]\nonumber\\
     \leq&\mathbb{E}[\max_{\tilde{v}\in\mathcal{C}_{\mathcal{V}}}\{\sum_{i=n_1^{P}+1}^{n^P}\sum_{j=1}^{m_{i}^{P}}G_{P}(\tilde{v};\boldsymbol{X}_{ij}^{P})/N_{2}^{P}-\sum_{i=1}^{n^Q}\sum_{j=1}^{m_{i}^{Q}}G_{Q}(\tilde{v};\boldsymbol{X}_{ij}^{Q})/N^{Q}\}+6\Gamma\epsilon]\nonumber\\
     \leq& a+\int_{a}^{\infty}\mathcal{N}_{\mathcal{V}}\exp(-N_r(t-6\Gamma\epsilon)/(32\Gamma(\Gamma+1)))\mathrm{d}t\nonumber \\
     \leq & a+32\mathcal{N}_{\mathcal{V}}\Gamma(\Gamma+1)\exp(-N_r(a-6\Gamma\epsilon)/(32\Gamma(\Gamma+1)))/N_r\nonumber\\
     =&6\Gamma\epsilon+32\Gamma(\Gamma+1)(\log\mathcal{N}_{\mathcal{V}}+1)/N_r \leq 32\Gamma(\Gamma+1)(\log\mathcal{N}_{\mathcal{V}}+2)/N_r.
    \label{T1IlogN}
\end{align}
Applying Theorem 9.4 in \citet{gyorfi2002distribution}, we have
\begin{equation}
    \log\mathcal{N}((\Gamma+1)/N_r,\mathcal{V},\|\cdot\|_{L_{2}(P_{\boldsymbol{X}})})\lesssim\mathrm{VCdim}(\mathcal{V})\left(\log 2eN_{r}+\log\log 3eN_{r}\right)+\log3.
    \label{logN_VC}
\end{equation}
By Theorem 6 of \citet{bartlett2019nearly}, the VC dimension of a ReLU network with  depth $\mathcal{D}$ and  size $\mathcal{S}$ satisfies
\begin{equation}
    \mathrm{VCdim}\lesssim \mathcal{D}\mathcal{S}\log(\mathcal{S}).
    \label{VC_LW}
\end{equation}
Hence, combining \eqref{T1IlogN}, \eqref{logN_VC} and \eqref{VC_LW}, we obtain
\begin{equation}
    \mathbb{E}[L(r)-2\widehat{L}(\hat{r})+L(\hat{r})]\lesssim\Gamma(\Gamma+1)\mathcal{DS}\log\mathcal{S}\log N_r/N_r. 
    \label{T1(I)result}
\end{equation}
Thus we complete the proof.
\end{proof}

\subsubsection{Proof of Theorem \ref{Theorem approx}}\label{AppendixA.1.2}
Before giving the proof of Theorem \ref{Theorem approx}, we first introduce Lemma \ref{Lemma 3}, whose proof follows arguments similar to Lemma A.8 of \citet{PETERSEN2018296}.
It primarily illustrates the relationship between the remainder term of the Taylor expansion of $f$ at $\boldsymbol{x_0}$ and the Hölder smoothness with respect to the infinity norm $\|\boldsymbol{x} - \boldsymbol{x_0}\|_\infty$. 
\begin{lemma} 
\label{Lemma 3} 
For any $\boldsymbol{x},\boldsymbol{x_0} \in [0, 1]^d$, $f \in \mathcal{H}^\zeta([0,1]^d, B)$ and $\zeta$ serves as the Hölder smoothness index, the following inequality holds:
\begin{equation*}
    | f(\boldsymbol{x}) - \sum_{\|\boldsymbol{\alpha}\|_1 \leq t} \partial^{\boldsymbol{\alpha}} f(\boldsymbol{x_0}) (\boldsymbol{x} - \boldsymbol{x_0})^{\boldsymbol{\alpha}}/\boldsymbol{\alpha}! | \leq Bd^t \|\boldsymbol{x} - \boldsymbol{x_0}\|_{\infty}^{\zeta}.
\end{equation*}
\end{lemma}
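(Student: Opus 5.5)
The plan is to use Taylor's theorem with Lagrange remainder of order $t-1$ along the segment from $\boldsymbol{x_0}$ to $\boldsymbol{x}$, and then compare the order-$t$ terms at the intermediate point with those at $\boldsymbol{x_0}$ using the Hölder condition on the $t$-th derivatives. Set $\boldsymbol{h}=\boldsymbol{x}-\boldsymbol{x_0}$ and $g(s)=f(\boldsymbol{x_0}+s\boldsymbol{h})$ for $s\in[0,1]$; the segment lies in $[0,1]^d$ by convexity. Since $f\in\mathcal{H}^\zeta$, all derivatives of order up to $t$ exist and are bounded, and $g$ is $t$ times differentiable with
\[
g^{(k)}(s)=\sum_{\|\boldsymbol{\alpha}\|_1=k}\frac{k!}{\boldsymbol{\alpha}!}\,\partial^{\boldsymbol{\alpha}}f(\boldsymbol{x_0}+s\boldsymbol{h})\,\boldsymbol{h}^{\boldsymbol{\alpha}}.
\]

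The case $t=0$ is immediate: the sum reduces to $f(\boldsymbol{x_0})$, and the Hölder condition gives $|f(\boldsymbol{x})-f(\boldsymbol{x_0})|\le B\|\boldsymbol{h}\|_\infty^{\sigma}$, which equals $Bd^0\|\boldsymbol{h}\|_\infty^\zeta$.

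For $t\ge1$, the one-dimensional Taylor formula with Lagrange remainder gives some $\tau\in(0,1)$ such that
\[
f(\boldsymbol{x})=\sum_{\|\boldsymbol{\alpha}\|_1\le t-1}\frac{\partial^{\boldsymbol{\alpha}}f(\boldsymbol{x_0})}{\boldsymbol{\alpha}!}\boldsymbol{h}^{\boldsymbol{\alpha}}+\sum_{\|\boldsymbol{\alpha}\|_1=t}\frac{\partial^{\boldsymbol{\alpha}}f(\boldsymbol{\xi})}{\boldsymbol{\alpha}!}\boldsymbol{h}^{\boldsymbol{\alpha}},\qquad \boldsymbol{\xi}=\boldsymbol{x_0}+\tau\boldsymbol{h}.
\]
Subtracting the full Taylor polynomial of order $t$, the quantity to bound becomes
\[
\Bigl|\sum_{\|\boldsymbol{\alpha}\|_1=t}\frac{\partial^{\boldsymbol{\alpha}}f(\boldsymbol{\xi})-\partial^{\boldsymbol{\alpha}}f(\boldsymbol{x_0})}{\boldsymbol{\alpha}!}\boldsymbol{h}^{\boldsymbol{\alpha}}\Bigr|.
\]
Each difference is at most $B\|\boldsymbol{\xi}-\boldsymbol{x_0}\|_\infty^\sigma\le B\|\boldsymbol{h}\|_\infty^\sigma$, and $|\boldsymbol{h}^{\boldsymbol{\alpha}}|\le\|\boldsymbol{h}\|_\infty^{t}$. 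The bound is therefore $B\|\boldsymbol{h}\|_\infty^{t+\sigma}\sum_{\|\boldsymbol{\alpha}\|_1=t}1/\boldsymbol{\alpha}!$.

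The only delicate step is the combinatorial constant. By the multinomial theorem, $\sum_{\|\boldsymbol{\alpha}\|_1=t}t!/\boldsymbol{\alpha}!=d^t$, so $\sum_{\|\boldsymbol{\alpha}\|_1=t}1/\boldsymbol{\alpha}!=d^t/t!\le d^t$. This gives the claimed bound $Bd^t\|\boldsymbol{x}-\boldsymbol{x_0}\|_\infty^\zeta$, with the extra factor $1/t!$ to spare.

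If one prefers to avoid the Lagrange form, the integral remainder $\int_0^1 (1-s)^{t-1}/(t-1)!\,[g^{(t)}(s)-g^{(t)}(0)]\,ds$ gives the same estimate. In either form, boundary points of $[0,1]^d$ need no special treatment, because derivatives on the closed cube are understood as one-sided or continuous extensions, as in the definition of $\mathcal{H}^\zeta$.
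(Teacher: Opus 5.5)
Your proof is correct and follows essentially the same route as the paper. Both restrict $f$ to the segment via $g(s)=f(\boldsymbol{x_0}+s\boldsymbol{h})$, apply one-dimensional Taylor's theorem, bound the order-$t$ derivative differences by the H\"older condition, and use the multinomial identity $\sum_{\|\boldsymbol{\alpha}\|_1=t}t!/\boldsymbol{\alpha}!=d^t$. The only differences are that the paper uses the integral form of the remainder where you use the Lagrange form (equivalent here), and that your separate handling of $t=0$ fills a small gap, since the paper's kernel $(1-u)^{t-1}/(t-1)!$ is undefined in that case.
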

\begin{proof}
We denote $g(u) = f(\boldsymbol{x}_0+u\boldsymbol{h})$, where $\boldsymbol{h} = \boldsymbol{x}-\boldsymbol{x}_0$ and $u \in [0,1]$. By repeated applications of the chain rule, we have $g^{(k)}(u) = \sum_{\|\boldsymbol{\alpha}\|_1 = k} k!h^{\boldsymbol{\alpha}}\partial^{\boldsymbol{\alpha}} f(\boldsymbol{x}_0+u\boldsymbol{h})/\boldsymbol{\alpha}!$, therefore
\begin{equation}
    | f(\boldsymbol{x}) - \sum_{\|\boldsymbol{\alpha}\|_1 \leq t} \partial^{\boldsymbol{\alpha}} f(\boldsymbol{x_0}) (\boldsymbol{x} - \boldsymbol{x_0})^{\boldsymbol{\alpha}}/\boldsymbol{\alpha}!|
     \leq \int_{0}^{1} (1-u)^{t-1} |g^{(t)}(u)-g^{(t)}(0)| du/(t-1)!, \label{reminder}
\end{equation}
and for any $u \in [0,1]$, with $f\in\mathcal H^\beta([0,1]^d,B)$, we have
\begin{align}
    |g^{(t)}(u)-g^{(t)}(0)| 
    &= \sum_{\|\boldsymbol{\alpha}\|_1 = t} t!h^{\boldsymbol{\alpha}}
    |\partial^{\boldsymbol{\alpha}} f(\boldsymbol{x}_0+u\boldsymbol{h})-\partial^{\boldsymbol{\alpha}} f(\boldsymbol{x}_0)|/\boldsymbol{\alpha}! \nonumber
    \\
    & \leq \sum_{\|\boldsymbol{\alpha}\|_1 = t} t!\boldsymbol{h}^{\boldsymbol{\alpha}}
    B \|u\boldsymbol{h}\|_{\infty}^{\sigma}/\boldsymbol{\alpha}!
     \leq Bd^t \|\boldsymbol{x}-\boldsymbol{x}_0\|_{\infty}^{\zeta}. \label{g-g}
\end{align}
Combining \eqref{reminder} with \eqref{g-g} yields the following local Taylor remainder bound: for any $\boldsymbol{x}_0 \in [0,1]^d$ and $\boldsymbol{x}$ in a neighborhood of $\boldsymbol{x}_0$
\begin{align*}
    | f(\boldsymbol{x}) - \sum_{\|\boldsymbol{\alpha}\|_1 \leq t} \partial^{\boldsymbol{\alpha}} f(\boldsymbol{x_0}) (\boldsymbol{x} - \boldsymbol{x_0})^{\boldsymbol{\alpha}}/\boldsymbol{\alpha}!|
    &\leq \int_{0}^{1} (1-u)^{t-1}  Bd^t \|\boldsymbol{x}-\boldsymbol{x}_0\|_{\infty}^{\zeta}du/(t-1)! 
    \\
    &=   Bd^t \|\boldsymbol{x}-\boldsymbol{x}_0\|_{\infty}^{\zeta}/t!
    \leq Bd^t \|\boldsymbol{x}-\boldsymbol{x}_0\|_{\infty}^{\zeta}.
\end{align*} 
Thus we complete the proof of Lemma \ref{Lemma 3}.
\end{proof}

We then give the proof of Theorem \ref{Theorem approx}, which establishes an upper bound for the approximation error. The proof consists of three steps: step 1 constructs a simplicial partition of $[0,1]^d$, step 2 introduces the corresponding barycentric partition of unity and step 3 combines local Taylor approximations to derive the global approximation and error bound.

\begin{proof} In step 1, we give a simplicial partition of $[0,1]^d$.
Fix an integer $N \ge 1$ and define the mesh size $1/N$. We use the Freudenthal triangulation \citep{freudenthal1942}
to form a partition of unity of the domain $[0,1]^d$. We then introduce the definition of simplex in $\mathbb{R}^d$.

For any points $\boldsymbol{v}_1,\boldsymbol{v}_2,\cdots,\boldsymbol{v}_m\in\mathbb{R}^d$, we can denote a convex hull $\mathrm{Conv}(\boldsymbol{v}_1,\boldsymbol{v}_2,\cdots,\boldsymbol{v}_m)$ as the set of all convex combination of points $\boldsymbol{v}_i$:
\begin{equation*}
    \mathrm{Conv}(\boldsymbol{v}_1,\boldsymbol{v}_2,\cdots,\boldsymbol{v}_m):=\Big\{\boldsymbol{x}=\sum_{i=1}^m\lambda_i\boldsymbol{v}_i\::\:\sum_{i=1}^m\lambda_i\equiv 1,\:\lambda_i\ge 0,\:\forall i \Big\}.
\end{equation*}
A set $S\subseteq \mathbb{R}^d$ is called a simplex if it is the closed convex hull of $d+1$ affinely independent points $\{\boldsymbol v_0,\boldsymbol{v}_1,\cdots,\boldsymbol{v}_d\}$, i.e.,
$S=\mathrm{Conv}\left(\boldsymbol{v}_0,\boldsymbol{v}_1,\cdots,\boldsymbol{v}_d\right)=\{\boldsymbol{x}\::\:\boldsymbol{x}=\sum_{i=0}^d\lambda_i(\boldsymbol{x})\boldsymbol{v}_i\}$,
these points $\boldsymbol{v}_i$ are the vertices of $S$ and these affine functions $\lambda_i(\boldsymbol{x})$ are the barycentric coordinate functions.

Let $V := \{i/N:i=0,1,\ldots,N\}^d$ be the set of grid vertices and let $ L:=\{i/N:i=0,1,\ldots,N-1\}^d$ be the set of lower left corners of the grid cubes. For any $\boldsymbol{v} \in L$ and any permutation $\pi$ of $\{1,2,\ldots,d\}$, we denote the simplices with size $1/N$ in $[0,1]^d$ as 
    $ S_{\boldsymbol{v},\pi} := \left\{\boldsymbol{x} \in [0,1]^d : 0 \leq x_{\pi(1)} - v_{\pi(1)} \leq \cdots \leq x_{\pi(d)} - v_{\pi(d)} \leq 1/N\right\}$.
    
The corresponding uniform simplicial mesh $\mathcal{S}_N$ with size $1/N$ on $[0,1]^d$ is  
    $ 
    \mathcal{S}_N := \{S_{\boldsymbol{v},\pi} : \boldsymbol{v} \in L, \pi \text{ is a permutation of } \{1,2,\ldots,d\}\}  
    $,
satisfying $\bigcup_{S_{\boldsymbol{v},\pi}\in\mathcal S_N} S_{\boldsymbol{v},\pi}= [0,1]^d$ and $S_{\boldsymbol{v},\pi}^\circ \cap (S_{\boldsymbol{v}',\pi'})^\circ = \varnothing$
for all $S_{\boldsymbol{v},\pi},S_{\boldsymbol{v}',\pi'} \in \mathcal S_N,\ S_{\boldsymbol{v},\pi} \neq S_{\boldsymbol{v}',\pi'}$, where $A^\circ$ denotes the interior of $A$. Moreover, each simplex has diameter of order $N^{-1}$:
    $    \operatorname{diam}(S_{\boldsymbol{v},\pi})
  := \sup_{\boldsymbol{x},\boldsymbol{y}\in S_{\boldsymbol{v},\pi}}\|\boldsymbol{x}-\boldsymbol{y}\|_\infty
  \le N^{-1},
  ~ \forall S_{\boldsymbol{v},\pi}\in\mathcal S_N.$

In step 2, we give the barycentric partition of unity. Let $\operatorname{Vert}(S) = \{\boldsymbol{v}_0,\dots,\boldsymbol{v}_d\}\subset V$ be a vertex set of a simplex $S$ in $\mathcal S_N$.
For any $\boldsymbol{x}\in S$ and $\boldsymbol{v}_i \in \operatorname{Vert}(S)$, $\boldsymbol{x}$ admits a unique barycentric representation: $ \boldsymbol{x} = \sum_{i=0}^{d} \psi_{\boldsymbol{v}_i}(\boldsymbol{x})\, {\boldsymbol{v}}_i$, $0 \leq \lambda_i(\boldsymbol{x}) \leq 1$, $\sum_{i=0}^{d}\lambda_i(\boldsymbol{x}) \equiv 1$.

For each grid vertex $\boldsymbol{v}_i\in V$, we define 
$\psi_{\boldsymbol{v}_i}(\cdot):[0,1]^d \to [0,1]$ as follows:
\begin{align*}
  \psi_{\boldsymbol{v}_i}(\boldsymbol{x})
  :=
  \begin{cases}
    \lambda_i(\boldsymbol{x}),
      & \text{if } \boldsymbol{x}\in S\in\mathcal S_N
        \text{ and } \boldsymbol{v}_i \in \operatorname{Vert}(S),\\[3pt]
    0,
      & \text{otherwise},
  \end{cases}
\end{align*}
where $S$ may be chosen as any simplex in $\mathcal S_N$ containing $x$. 
This definition is independent of the choice of $S$, since barycentric coordinates agree on common faces of adjacent simplices. Hence $\psi_{\boldsymbol{v}_i}(\cdot)$ is globally well-defined and continuous on $[0,1]^d$. 

For a fixed $\boldsymbol{x}$, only $\psi_{\boldsymbol{v}_i}(\boldsymbol{x})$ corresponding to the $d+1$ vertices of the simplex containing $\boldsymbol{x}$ are non-zero, while $\psi_{\boldsymbol{v}_i}(\boldsymbol{x})$ are zero for all other grid points $\boldsymbol{v}_i$. Then for any $\boldsymbol{x}\in[0,1]^d$, let $S\in\mathcal S_N$ denote the
simplex containing $\boldsymbol{x}$, we obtain that
\begin{equation*}
  \sum_{\boldsymbol{v}_i\in V} \psi_{\boldsymbol{v}_i}(\boldsymbol{x}) 
  = \sum_{\boldsymbol{v}_i\in \operatorname{Vert}(S)} \psi_{\boldsymbol{v}_i}(\boldsymbol{x}) 
  = \sum_{i=0}^{d} \psi_{\boldsymbol{v}_i}(\boldsymbol{x})
  \equiv 1,
\end{equation*}
thus the collection $\{\psi_{\boldsymbol{v}} : \boldsymbol{v}\in V\}$ forms a global partition of unity $\sum_{\boldsymbol{v}\in V} \psi_{\boldsymbol{v}}(\boldsymbol{x}) \equiv 1$ subordinate to the simplicial mesh $\mathcal S_N$.

In step 3, we give a local taylor approximation.
For each $\boldsymbol{v}\in V$, we define the $t$-th order Taylor polynomial of $f$ at $\boldsymbol{x} = \boldsymbol{v}$ by
\begin{align*}
  T_{\boldsymbol{v}}(\boldsymbol{x})
  :=
  \sum_{\|\boldsymbol{\alpha}\|_1\le t}
    \tau_{\boldsymbol{v},\boldsymbol{\alpha}}\,(\boldsymbol{x}-\boldsymbol{v})^{\boldsymbol{\alpha}},
  \quad
  \tau_{\boldsymbol{v},\boldsymbol{\alpha}}
  :=
  \partial^{\boldsymbol{\alpha}} f(\boldsymbol{v})/\boldsymbol{\alpha}!,
  \quad
  \boldsymbol{\alpha}:= (\alpha_1,\alpha_2,\cdots,\alpha_d) \in \mathbb{N}^d,
\end{align*}
where $t=\lfloor  \zeta \rfloor $, $\boldsymbol{\alpha}!=\alpha_{1}!\cdots \alpha_{d}!$. 
Using the barycentric partition of unity $\{\psi_{\boldsymbol{v}}:\boldsymbol{v}\in V\}$, we define
  $p(\boldsymbol{x})
  :=
  \sum_{\boldsymbol{v}\in V}
  \psi_{\boldsymbol{v}}(\boldsymbol{x})T_{\boldsymbol{v}}(\boldsymbol{x}).$
  \label{eq:global-p}

For a fixed point $\boldsymbol{x} \in [0,1]^d$ contained in a simplex $S$ and $f\in\mathcal H^\beta([0,1]^d,B)$, using Lemma \ref{Lemma 3}, we have 
\begin{align}
    |f(\boldsymbol{x}) - p(\boldsymbol{x})|
  = &
  |\sum_{\boldsymbol{v}\in V}
  \psi_{\boldsymbol{v}}(\boldsymbol{x})f(\boldsymbol{x})-\sum_{\boldsymbol{v}\in V}
  \psi_{\boldsymbol{v}}(\boldsymbol{x})T_{\boldsymbol{v}}(\boldsymbol{x})| \nonumber 
  \leq 
  \sum_{\boldsymbol{v}\in V}
  \psi_{\boldsymbol{v}}(\boldsymbol{x})|(f(\boldsymbol{x})-T_{\boldsymbol{v}}(\boldsymbol{x})| 
  \\
  \leq&
  \sum_{\boldsymbol{v}\in \operatorname{Vert}(S)}|f(\boldsymbol{x})-\sum_{\|\boldsymbol{\alpha}\|_1\le t}
  \tau_{\boldsymbol{v},\boldsymbol{\alpha}}\,(\boldsymbol{x}-\boldsymbol{v})^{\boldsymbol{\alpha}}|\nonumber
  \leq
  (d+1)Bd^t\|\boldsymbol{x}-\boldsymbol{v}\|_\infty^{\,\zeta} 
  \\
  \leq&
  (d+1)Bd^tN^{-\zeta}.
  \label{eq:|f-p|}
\end{align}

Compared with the error bound in \citet{feng2024deep}, \eqref{eq:|f-p|} replaces the exponential overlap factor $2^d$ by the polynomial term $d+1$, yielding a dimensionally much more favorable approximation constant.
Using similar proof in \citet{feng2024deep}, there exists a neural network $\hat{f}(\boldsymbol{x})=\sum_{\boldsymbol{v}\in V}\sum_{\substack{\|\boldsymbol{\alpha}\|_1 \le t}} \tau_{\boldsymbol{v},\boldsymbol{\alpha}}\tilde f_{\boldsymbol{v},\boldsymbol{\alpha}}(\boldsymbol{x})$ that can be an approximation of $p(\boldsymbol{x})$:
\begin{align}
|p(\boldsymbol{x})-\hat{f}(\boldsymbol{x})| & =|\sum_{\boldsymbol{v}\in V}
  \psi_{\boldsymbol{v}}(\boldsymbol{x})\sum_{\|\boldsymbol{\alpha}\|_1\le t}
    \tau_{\boldsymbol{v},\boldsymbol{\alpha}}\,(\boldsymbol{x}-\boldsymbol{v})^{\boldsymbol{\alpha}}-\sum_{\boldsymbol{v}\in V}\sum_{\substack{\|\boldsymbol{\alpha}\|_1 \le t}} \tau_{\boldsymbol{v},\boldsymbol{\alpha}}\tilde f_{\boldsymbol{v},\boldsymbol{\alpha}}(\boldsymbol{x})| \nonumber\\
 & \leq \sum_{\boldsymbol{v}\in V}\sum_{\substack{\|\boldsymbol{\alpha}\|_1 \le t}} |\tau_{\boldsymbol{v},\boldsymbol{\alpha}}||\psi_{\boldsymbol{v}}(\boldsymbol{x}) (\boldsymbol{x}-\boldsymbol{v})^{\boldsymbol{\alpha}}- \tilde f_{\boldsymbol{v},\boldsymbol{\alpha}}(\boldsymbol{x})| \nonumber\\
 & \leq \sum_{\boldsymbol{v}\in V} (t+1)d^tB\max_{\|\boldsymbol{\alpha}\|_1 \le t}|\psi_{\boldsymbol{v}}(\boldsymbol{x}) (\boldsymbol{x}-\boldsymbol{v})^{\boldsymbol{\alpha}}- \tilde f_{\boldsymbol{v},\boldsymbol{\alpha}}(\boldsymbol{x})| \nonumber\\
 & \leq (t+1)d^tB \sum_{\boldsymbol{v}\in \operatorname{Vert}(S)}\max_{\|\boldsymbol{\alpha}\|_1 \le t}|\psi_{\boldsymbol{v}}(\boldsymbol{x}) (\boldsymbol{x}-\boldsymbol{v})^{\boldsymbol{\alpha}}- \tilde f_{\boldsymbol{v},\boldsymbol{\alpha}}(\boldsymbol{x})| \nonumber\\
 & \leq (t+1)d^tB(d+1) \max_{\boldsymbol{v}\in \operatorname{Vert}(S)}\max_{\|\boldsymbol{\alpha}\|_1 \le t} |\psi_{\boldsymbol{v}}(\boldsymbol{x}) (\boldsymbol{x}-\boldsymbol{v})^{\boldsymbol{\alpha}}- \tilde f_{\boldsymbol{v},\boldsymbol{\alpha}}(\boldsymbol{x})|,\nonumber \label{(6)}
\end{align}
where the last inequality follows from the fact that a $d$-dimensional simplex has $d+1$ vertices. Denoted by $p_{\boldsymbol{v},\boldsymbol{\alpha}}(\boldsymbol{x})
:= \psi_{\boldsymbol{v}}(\boldsymbol{x})\,(\boldsymbol{x}-\boldsymbol{v})^{\boldsymbol{\alpha}}$, using the similar construction and results of \citet{yarotsky2017error} and \citet{feng2024deep}(lemma 18), we can define the neural network $\tilde f_{\boldsymbol{v},\boldsymbol{\alpha}}(\boldsymbol{x})$ with the depth
and size not larger than $c_1(d+t)\ln{(1/\delta)}$ for some constants $c_1=c_1(d,t)$, satisfying:
    $|\psi_{\boldsymbol{v}}(\boldsymbol{x}) (\boldsymbol{x}-\boldsymbol{v})^{\boldsymbol{\alpha}}- \tilde f_{\boldsymbol{v},\boldsymbol{\alpha}}(\boldsymbol{x})| \leq (d+t)\delta,$
hence we obtain $|p(\boldsymbol{x})-\hat{f}(\boldsymbol{x})|\leq (t+1)d^tB(d+1)(d+t)\delta =(t+1)(d+1)(d+t)Bd^t\delta$.

These show that the constants in front of $N^{-\zeta}$ and $\delta$
depend on the dimension $d$ only polynomially, as $(d+1)(d+t) d^t$, rather
than exponentially as $2^d d^t$ in the \citet{feng2024deep}. Consequently, to reach a target accuracy $\varepsilon>0$, it suffices to choose
\begin{equation}
    N =\lceil \varepsilon^{-1/\zeta}(2(d+1)Bd^t)^{1/\zeta}\rceil
    \quad\text{and}\quad
    \delta = \varepsilon (2(t+1)(d+1)(d+t)Bd^t)^{-1},
    \label{eq:N,delte}
\end{equation}
so that the required network complexity grows only polynomially in $d$. Therefore, we have 
\begin{center}
    $|f(\boldsymbol{x})-\hat{f}(\boldsymbol{x})|=|f(\boldsymbol{x})-p(\boldsymbol{x})+p(\boldsymbol{x})-\hat{f}(\boldsymbol{x})| \leq |f(\boldsymbol{x})-p(\boldsymbol{x})| + |p(\boldsymbol{x})-\hat{f}(\boldsymbol{x})| \leq \varepsilon$.
\end{center}
Thus we complete the proof of Theorem \ref{Theorem approx}
\end{proof}

Here we illustrate the parameters of the ReLU FNNs $\hat{f}=\sum_{\boldsymbol{v}\in V}\sum_{\substack{\|\boldsymbol{\alpha}\|_1 \le t}} \tau_{\boldsymbol{v},\boldsymbol{\alpha}}\tilde f_{\boldsymbol{v},\boldsymbol{\alpha}}(\boldsymbol{x})$. $\hat{f}$ is the linear combination of $d^t(N+1)^d$ local networks. According to the lemma 19 of \citet{feng2024deep}, the depth and size of $\hat{f}$ are $c_2\ln(1/\delta) + 1$ and $d^t(N+1)^d(c_2\ln(1/\delta) + 1)$, respectively, $c_2 = c_2(d,t)$. For $N$ and $\delta$ in (\ref{eq:N,delte}), then $\hat{f}$ has the depth at most $c_2\ln(1/\varepsilon)+c_3$, the size at most $\varepsilon ^{-d/\zeta}(c_2\ln(1/\varepsilon)+c_3) + c_4$, $c_3 = c_3(d,t)$, $c_4=c_4(d,t,\zeta)$ and the weights bound at most $c_5\varepsilon ^{-d/\zeta}$, $c_5=c_5(d,t,\zeta)$. All the constants which depend on the dimension $d$ only polynomially (through $(d+1)(d+t)d^t$).

\subsubsection{Proof of Theorem \ref{Theorem 3rb}}

\begin{proof}
We first decompose the mean squared error. Reviewing the previous definition of $L(\cdot)$ and $\widehat{L}(\cdot)$, for each $v \in \mathcal{V}$
\begin{align*}
    & L(v):=\mathbb{E}_{\boldsymbol{X}\sim P_{\boldsymbol{X}}}(v(\boldsymbol{X})^2)/2-\mathbb{E}_{\boldsymbol{X}\sim Q_{\boldsymbol{X}}}(v(\boldsymbol{X})) ,
    \\&
    \widehat{L}(v):=\sum_{i=n_1^{P}+1}^{n^{P}}\sum_{j=1}^{m_{i}^{P}}v(\boldsymbol{X}_{ij}^{P})^{2}/(2N_{2}^{P})-\sum_{i=1}^{n^{Q}}\sum_{j=1}^{m_{i}^{Q}}v(\boldsymbol{X}_{ij}^{Q})/N^{Q}.
\end{align*}
Through straightforward calculations and the definition of $\hat{r}(\cdot)$ in \eqref{r_hat}, we obtain 
\begin{align*}
    &\mathbb{E}_{\boldsymbol{X}\sim P_{\boldsymbol{X}}}(\hat{r}(\boldsymbol{X})-r(\boldsymbol{X}))^2 =\mathbb{E}_{\boldsymbol{X}\sim P_{\boldsymbol{X}}}(\hat{r}(\boldsymbol{X})-r(\boldsymbol{X}))^2-\mathbb{E}_{\boldsymbol{X}\sim P_{\boldsymbol{X}}}(r(\boldsymbol{X})-r(\boldsymbol{X}))^2\\
    =&\mathbb{E}_{\boldsymbol{X}\sim P_{\boldsymbol{X}}}\hat{r}(\boldsymbol{X})^2-2\mathbb{E}_{\boldsymbol{X}\sim Q_{\boldsymbol{X}}}\hat{r}(\boldsymbol{X})-\mathbb{E}_{\boldsymbol{X}\sim P_{\boldsymbol{X}}}r(\boldsymbol{X})^2+2\mathbb{E}_{\boldsymbol{X}\sim Q_{\boldsymbol{X}}}r(\boldsymbol{X})\\
    =& 2L(\hat{r})-2L(r) =2L(r)-4\widehat{L}(\hat{r})+2L(\hat{r})+4\widehat{L}(\hat{r})-4L(r) \\
     \leq &2L(r)-4\widehat{L}(\hat{r})+2L(\hat{r})+4\widehat{L}(v)-4L(r).
\end{align*}
Taking expectations on both sides followed by taking the infimum over $v\in \mathcal{V}$ yields
\begin{align}
    \mathbb{E}[\|\hat{r}-r\|_{L_2(P_{\boldsymbol{X}})}^2]&\leq\inf_{v\in \mathcal{V}}\mathbb{E}[2L(r)-4\widehat{L}(\hat{r})+2L(\hat{r})+4\widehat{L}(v)-4L(r)]\nonumber\\
    & = \mathbb{E}[2L(r)-4\widehat{L}(\hat{r})+2L(\hat{r})]+\inf_{v\in \mathcal{V}}\mathbb{E}[4\widehat{L}(v)-4L(r)]\nonumber\\
    & :=(I)+(II).
    \label{T3(I)(II)}
\end{align}
Consequently, the risk can be decomposed into a stochastic error term (I) and an approximation error term (II).
To control the stochastic error term (I), we utilize the result of Theorem \ref{Theorem rstochastic}. Setting the depth $\mathcal{D}=O(\log N_r)$, the weights $\mathcal{B}=O((N_r)^{d/(2\alpha+d)})$ and the size $\mathcal{S}=O((N_r)^{d/(2\alpha+d)}\log N_r)$, then we obtain
\begin{align}
    (I)\lesssim \Gamma(\Gamma+1)\mathcal{DS}\log\mathcal{S}\log N_r/N_r\lesssim\Gamma(\Gamma+1)(N_r)^{-2\alpha/(d+2\alpha)}(\log N_r)^5.
    \label{T3(I)result}
\end{align}
To control the approximation error term (II), we first simplify it as
\begin{align*}
    &(II)=\inf_{v\in \mathcal{V}}\left(4L(v)-4L(r)\right)\nonumber =\inf_{v\in \mathcal{V}}(2\mathbb{E}_{\boldsymbol{X}\sim P_{\boldsymbol{X}}}(v(\boldsymbol{X})^2-r(\boldsymbol{X})^2)-4\mathbb{E}_{\boldsymbol{X}\sim Q_{\boldsymbol{X}}}(v(\boldsymbol{X})-r(\boldsymbol{X})))\nonumber\\
    & =\inf_{v\in \mathcal{V}}(2\mathbb{E}_{\boldsymbol{X}\sim P_{\boldsymbol{X}}}(v(\boldsymbol{X})-r(\boldsymbol{X}))^2+4\mathbb{E}_{\boldsymbol{X}\sim P_{\boldsymbol{X}}}r(\boldsymbol{X})(v(\boldsymbol{X})-r(\boldsymbol{X}))-4\mathbb{E}_{\boldsymbol{X}\sim Q_{\boldsymbol{X}}}(v(\boldsymbol{X})-r(\boldsymbol{X})))\nonumber\\
    & =2\inf_{v\in \mathcal{V}}\|v-r\|_{L_2(P_{\mathbf{X}})}^2.
\end{align*}
By using the result of Theorem \ref{Theorem approx}, when we choose $\varepsilon=(N_r)^{-\alpha/(2\alpha+d)}$, the approximation error satisfies
\begin{align}
 (II) \lesssim N_r^{-{2\alpha}/(d+2\alpha)}\log N_r.
    \label{T3(II)result}
\end{align}
Finally, substituting \eqref{T3(I)result} and \eqref{T3(II)result} into \eqref{T3(I)(II)}, we complete the proof.
\end{proof}

\subsubsection{Proof of Theorem \ref{Theorem r_xi_hat}}

\begin{proof}
We first decompose the mean squared error. Through straightforward calculations, for each $v\in T_{\xi}\mathcal{V}$, we similarly obtain that
\begin{align*}
    \mathbb{E}_{\boldsymbol{X}\sim P_{\boldsymbol{X}}}(\hat{r}_{\xi}(\boldsymbol{X})-r(\boldsymbol{X}))^2& =\mathbb{E}_{\boldsymbol{X}\sim P_{\boldsymbol{X}}}(\hat{r}_{\xi}(\boldsymbol{X})-r(\boldsymbol{X}))^2-\mathbb{E}_{\boldsymbol{X}\sim P_{\boldsymbol{X}}}(r(\boldsymbol{X})-r(\boldsymbol{X}))^2\\
    & =\mathbb{E}_{\boldsymbol{X}\sim P_{\boldsymbol{X}}}\hat{r}_{\xi}(\boldsymbol{X})^2-2\mathbb{E}_{\boldsymbol{X}\sim Q_{\boldsymbol{X}}}\hat{r}_{\xi}(\boldsymbol{X})-\mathbb{E}_{\boldsymbol{X}\sim P_{\boldsymbol{X}}}r(\boldsymbol{X})^2+2\mathbb{E}_{\boldsymbol{X}\sim Q_{\boldsymbol{X}}}r(\boldsymbol{X})\\
    & =2L(\hat{r}_{\xi})-2L(r)\\
    & =2L(r)-4\widehat{L}(\hat{r}_{\xi})+2L(\hat{r}_{\xi})+4\widehat{L}(\hat{r}_{\xi})-4L(r) \\
    & \leq 2L(r)-4\widehat{L}(\hat{r}_{\xi})+2L(\hat{r}_{\xi})+4\widehat{L}(v)-4L(r).
\end{align*}
Taking expectations on both sides followed by taking the infimum over $v\in T_{\xi}\mathcal{V}$ yields
\begin{align}
    \mathbb{E}[\|\hat{r}_{\xi}-r\|_{L_2(P_{\boldsymbol{X}})}^2]&\leq\inf_{v\in T_\xi\mathcal{V}}\mathbb{E}[2L(r)-4\widehat{L}(\hat{r}_{\xi})+2L(\hat{r}_{\xi})+4\widehat{L}(v)-4L(r)]\nonumber\\
    & = \mathbb{E}[2L(r)-4\widehat{L}(\hat{r}_{\xi})+2L(\hat{r}_{\xi})]+\inf_{v\in T_\xi\mathcal{V}}\mathbb{E}[4\widehat{L}(v)-4L(r)]\nonumber\\
    & :=(I)+(II).
    \label{T5(I)(II)}
\end{align}
Thus, the risk can be decomposed into a stochastic error term (I) and an approximation error term (II). 
By the result of \eqref{T3(I)result}, we obtain that (I) can be bounded by $O(\xi^2\mathcal{DS}\log \mathcal{S}\log N_r/{N_r})$. To control the approximation error term (II), we first simplify it as
\begin{align}
    &(II)=\inf_{v\in T_\xi\mathcal{V}}(4L(v)-4L(r))\nonumber\\
    & =\inf_{v\in T_\xi\mathcal{V}}2[\mathbb{E}_{\boldsymbol{X}\sim P_{\boldsymbol{X}}}(v(\boldsymbol{X})^2-r(\boldsymbol{X})^2)-2\mathbb{E}_{\boldsymbol{X}\sim Q_{\boldsymbol{X}}}(v(\boldsymbol{X})-r(\boldsymbol{X}))]\nonumber\\
    & =\inf_{v\in T_\xi\mathcal{V}}2[\mathbb{E}_{\boldsymbol{X}\sim P_{\boldsymbol{X}}}(v(\boldsymbol{X})-r(\boldsymbol{X}))^2+2\mathbb{E}_{\boldsymbol{X}\sim P_{\boldsymbol{X}}}r(\boldsymbol{X})(v(\boldsymbol{X})-r(\boldsymbol{X}))-2\mathbb{E}_{\boldsymbol{X}\sim Q_{\boldsymbol{X}}}(v(\boldsymbol{X})-r(\boldsymbol{X}))]\nonumber\\
    & =\inf_{v\in T_\xi\mathcal{V}}2\|v-T_{\xi}r+T_{\xi}r-r\|_{L_2(P_{\mathbf{X}})}^2\nonumber\\
    & \leq4\inf_{v\in T_\xi\mathcal{V}}\|v-T_{\xi}r\|_{L_2(P_{\mathbf{X}})}^2+4\|T_{\xi}r-r\|_{L_2(P_{\mathbf{X}})}^2:=(A)+(B).
    \label{T5(II)(A)(B)}
\end{align}
We can control the first term (A) by Theorem \ref{Theorem approx}. Then using the Assumption \ref{assumption rub} and Markov inequality, we obtain
\begin{align}
    (B)& =4\mathbb{E}_{\boldsymbol{X}\sim P_{\boldsymbol{X}}}(r(\boldsymbol{X})-\xi)^2I(r\geq\xi)\nonumber\\
    & \leq8\mathbb{E}_{\boldsymbol{X}\sim P_{\boldsymbol{X}}}r(\boldsymbol{X})^2I(r\geq\xi)+8\mathbb{E}_{\boldsymbol{X}\sim P_{\boldsymbol{X}}}\xi^2I(r\geq\xi)\nonumber\\
    & \leq8\mathbb{E}_{\boldsymbol{X}\sim P_{\boldsymbol{X}}}[r(\boldsymbol{X})^2r(\boldsymbol{X})^{\delta}/{\xi^{\delta}}]+8\xi^2\mathbb{E}_{\boldsymbol{X}\sim P_{\boldsymbol{X}}}[r(\boldsymbol{X})^{\delta+2}/{\xi^{\delta+2}}]\nonumber\\
    & =16\kappa_{\delta}/{\xi^{\delta}}.
    \label{T5(IIB)}
\end{align}
Combining the result of \eqref{T5(I)(II)}, \eqref{T5(II)(A)(B)} and \eqref{T5(IIB)}, we have
\begin{center}
    $\mathbb{E}[\|\hat{r}_{\xi}-r\|_{L_2(P_{\boldsymbol{X}})}^2]\lesssim\xi^2\mathcal{DS}\log \mathcal{S}\log N_r/{N_r}+16\kappa_{\delta}/{\xi^{\delta}}$,
\end{center}
then setting $\xi\asymp(\kappa_{\delta}N_r/\mathcal{DS}\log \mathcal{S}\log N_r)^{1/(2+\delta)}$, the size $\mathcal{S}=O((N_r)^{d/(d+(4/\delta+2)\alpha)}\log N_r)$, the depth $\mathcal{D}=O(\log N_r)$, and weights bound $\mathcal{B}=O((N_r)^{d/(d+(4/\delta+2)\alpha)})$, it follows that
\begin{center}
    $\mathbb{E}[\|\hat{r}_{\xi}-r\|_{L_2(P_{\boldsymbol{X}})}^2]\lesssim \kappa_{\delta}^{2/(2+\delta)}(N_r)^{-2\alpha/(d+(2+4/\delta)\alpha)}(\log N_r)^5$.
\end{center}
Thus we complete the proof.
\end{proof}

\subsection{Proofs for Target Regression Estimators with an Unknown Density Ratio}\label{ProofURE}
In this section, we give the proofs of Theorems \ref{Theorem Ef^_0_r^} and \ref{Theorem Ef^_0_r^_zeta}, which establish the non-asymptotic error upper bounds for the target regression estimator with an unknown density ratio under bounded and unbounded density ratio settings, respectively.

\subsubsection{Proof of Theorem \ref{Theorem Ef^_0_r^}}

\begin{proof}
First, we similarly decompose the error term. We define $R(f)=\mathbb{E}_{(\boldsymbol{X},Y)\thicksim P_{\boldsymbol{X},Y}}[r(\boldsymbol{X})\\(Y-f(\boldsymbol{X}))^2]$ for any measurable function $f(\cdot)$ and $f^*(\cdot)\in\underset{f\in\mathcal{F}}{\operatorname*{\mathrm{argmin}}}\|f-f_0\|_{L_2(P_X)}^2$. Then we obtain
\begin{align*}
R(f)  =&\mathbb{E}_{(\boldsymbol{X},Y)\thicksim P_{\boldsymbol{X},Y}}[r(\boldsymbol{X})(Y-f_{0}(\boldsymbol{X})+f_{0}(\boldsymbol{X})-f(\boldsymbol{X}))^{2}] \\
  =&\mathbb{E}_{(\boldsymbol{X},Y)\thicksim P_{\boldsymbol{X},Y}}[r(\boldsymbol{X})(Y-f_0(\boldsymbol{X}))^2]+\mathbb{E}_{\boldsymbol{X}\thicksim P_{\boldsymbol{X}}}[r(\boldsymbol{X})(f_0(\boldsymbol{X})-f(\boldsymbol{X}))^2]\\
 &+2\mathbb{E}_{(\boldsymbol{X},Y)\thicksim P_{\boldsymbol{X},Y}}[r(\boldsymbol{X})(Y-f_0(\boldsymbol{X}))(f_0(\boldsymbol{X})-f(\boldsymbol{X}))] \\
  =&\mathbb{E}_{\boldsymbol{X}\thicksim P_{\boldsymbol{X}}}[r(\boldsymbol{X})Var(Y|\boldsymbol{X}))]+\mathbb{E}_{\boldsymbol{X}\thicksim P_{\boldsymbol{X}}}[r(\boldsymbol{X})(f_{0}(\boldsymbol{X})-f(\boldsymbol{X}))^{2}].
\end{align*}
Through straightforward calculations, we have
\begin{align}
    \mathbb{E}[\|\hat{f}_{0,\hat{r}}-f_0\|_{L_2(Q_{\boldsymbol{X}})}^2]\nonumber
     &=\mathbb{E}[R(\hat{f}_{0,\hat{r}})-R(f_0)]\nonumber\\
     &=\mathbb{E}[\mathbb{E}_{(\boldsymbol{X},Y)\thicksim P_{\boldsymbol{X},Y}}[(r(\boldsymbol{X})-\hat{r}(\boldsymbol{X}))((Y-\hat{f}_{0,\hat{r}}(\boldsymbol{X}))^2-(Y-f_0(\boldsymbol{X}))^2)]]\nonumber\\
    &~~~+\mathbb{E}[\mathbb{E}_{(\boldsymbol{X},Y)\thicksim P_{\boldsymbol{X},Y}}[\hat{r}(\boldsymbol{X})((Y-\hat{f}_{0,\hat{r}}(\boldsymbol{X}))^2-(Y-f_0(\boldsymbol{X}))^2)]]\nonumber\\
     &:=(I)+(II).
    \label{T4(I)(II)}
\end{align}
To control term (I), we aplly the GM-QM inequality. Under the Assumptions \ref{assumption rb0}, \ref{assumption f0B1} and \ref{assumption fiepsilon}, for any $\epsilon_{0}>0$, we have
\begin{align*}
    &\mathbb{E}[\mathbb{E}_{(\boldsymbol{X},Y)\thicksim P_{\boldsymbol{X},Y}}[(r(\boldsymbol{X})-\hat{r}(\boldsymbol{X}))((Y-\hat{f}_{0,\hat{r}}(\boldsymbol{X}))^2-(Y-f_0(\boldsymbol{X}))^2)]]\\
     \leq&\epsilon_{0}\mathbb{E}[\mathbb{E}_{(\boldsymbol{X},Y)\thicksim P_{\boldsymbol{X},Y}}[r(\boldsymbol{X})((Y-\hat{f}_{0,\hat{r}}(\boldsymbol{X}))^2-(Y-f_0(\boldsymbol{X}))^2)^2]]/4\\
    \quad&+\mathbb{E}[\mathbb{E}_{\boldsymbol{X}\thicksim P_{\boldsymbol{X}}}[(r(\boldsymbol{X})-\hat{r}(\boldsymbol{X}))^2/r(\boldsymbol{X})]]/\epsilon_{0}\\
     \leq&\epsilon_{0}\mathbb{E}[\mathbb{E}_{(\boldsymbol{X},Y)\thicksim P_{\boldsymbol{X},Y}}[r(\boldsymbol{X})(f_0(\boldsymbol{X})-\hat{f}_{0,\hat{r}}(\boldsymbol{X}))^2(2Y-\hat{f}_{0,\hat{r}}(\boldsymbol{X})-f_0(\boldsymbol{X}))^2]]/4\\
     \quad&+\mathbb{E}[\|r-\hat{r}\|_{L_2(P_{\boldsymbol{X}})}^2]/(\epsilon_{0} M)\\
     \leq&4\epsilon_{0}(B_1^2+B_2^2+B_3^2)\mathbb{E}[\|\hat{f}_{\hat{r},\mathbb{D}}-f_0\|_{L_2(Q_{\boldsymbol{X}})}^2]+\mathbb{E}[\|r-\hat{r}\|_{L_2(P_{\boldsymbol{X}})}^2]/(\epsilon_{0} M).
\end{align*}
Then, setting $\epsilon_{0}^{-1}=8(B_1^2+B_2^2+B_3^2)$, we obtain
\begin{align}
    (I)\leq8(B_1^2+B_2^2+B_3^2)\mathbb{E}[\|r-\hat{r}\|_{L_2(P_{\boldsymbol{X}})}^2]/M+\mathbb{E}[\|\hat{f}_{0,\hat{r}}-f_0\|_{L_2(Q_{\boldsymbol{X}})}^2]/2.
    \label{T4(I)result}
\end{align}
Next, we bound the term (II). For each $f(\cdot)\in \mathcal{F}$, we define
\begin{align*}
    & L_{\hat{r}}(f)=\mathbb{E}_{\boldsymbol{X},Y\sim P_{\boldsymbol{X},Y}}[\hat{r}(\boldsymbol{X})((Y-f(\boldsymbol{X}))^2-(Y-f_0(\boldsymbol{X}))^2)],\\
    & \widehat{L}_{\hat{r}}(f)=\sum_{i=1}^{n_1^{P}}\sum_{j=1}^{m^{P}_{i}}
   \hat{r}(\boldsymbol{X}_{ij}^{P})((Y_{ij}^{P} - f(\boldsymbol{X}_{ij}^{P}))^2-(Y_{ij}^{P} - f_0(\boldsymbol{X}_{ij}^{P}))^2)/N_1^{P}.
\end{align*}
Through straightforward calculations and the definition of $\hat{f}_{0,\hat{r}}$ in \eqref{f_hat_r_hat}, for any $f(\cdot)\in \mathcal{F}$, we obtain
$L_{\hat{r}}(\hat{f}_{0,\hat{r}})
  =L_{\hat{r}}(\hat{f}_{0,\hat{r}})-2\widehat{L}_{\hat{r}}(\hat{f}_{0,\hat{r}})+2\widehat{L}_{\hat{r}}(\hat{f}_{0,\hat{r}})
 \leq L_{\hat{r}}(\hat{f}_{0,\hat{r}})-2\widehat{L}_{\hat{r}}(\hat{f}_{0,\hat{r}})+2\widehat{L}_{\hat{r}}(f),$
taking the infimum over $f\in \mathcal{F}$ followed by taking expectations on both sides yields
\begin{align}
    \mathbb{E}[L_{\hat{r}}(\hat{f}_{0,\hat{r}})]\leq&\mathbb{E}[L_{\hat{r}}(\hat{f}_{0,\hat{r}})-2\widehat{L}_{\hat{r}}(\hat{f}_{0,\hat{r}})]+\mathbb{E}[\inf_{f\in \mathcal{F}}2\widehat{L}_{\hat{r}}(f)]\nonumber\\
     \leq&\mathbb{E}[L_{\hat{r}}(\hat{f}_{0,\hat{r}})-2\widehat{L}_{\hat{r}}(\hat{f}_{0,\hat{r}})]+\mathbb{E}[2\widehat{L}_{\hat{r}}(f^*)]\nonumber\\
     =&\mathbb{E}[L_{\hat{r}}(\hat{f}_{0,\hat{r}})-2\widehat{L}_{\hat{r}}(\hat{f}_{0,\hat{r}})]+2L_{\hat{r}}(f^*)\nonumber\\
     =&\mathbb{E}[L_{\hat{r}}(\hat{f}_{0,\hat{r}})-2\widehat{L}_{\hat{r}}(\hat{f}_{0,\hat{r}})]+2\mathbb{E}_{\boldsymbol{X},Y\sim P_{\boldsymbol{X},Y}}[\hat{r}(\boldsymbol{X})((Y-f^*(\boldsymbol{X}))^2-(Y-f_0(\boldsymbol{X}))^2)]\nonumber\\
     =&\mathbb{E}[L_{\hat{r}}(\hat{f}_{0,\hat{r}})-2\widehat{L}_{\hat{r}}(\hat{f}_{0,\hat{r}})]+2\mathbb{E}_{\boldsymbol{X},Y\sim P_{\boldsymbol{X},Y}}[\hat{r}(\boldsymbol{X})(f_0(\boldsymbol{X})-f^*(\boldsymbol{X}))^2]\nonumber\\
    & +4\mathbb{E}_{\boldsymbol{X},Y\sim P_{\boldsymbol{X},Y}}[\hat{r}(\boldsymbol{X})(f_i(\boldsymbol{X})+\varepsilon)]\nonumber\\
     \leq&\mathbb{E}[L_{\hat{r}}(\hat{f}_{0,\hat{r}})-2\widehat{L}_{\hat{r}}(\hat{f}_{0,\hat{r}})]+2\Gamma\inf_{f\in \mathcal{F}}\|f-f_0\|_{L_2(P_{\mathbf{X}})}^2.
     \label{T4(II)sta}
\end{align}
Let $Y_{ij}^{\varepsilon,P}=f_0(\boldsymbol{X}_{ij}^P)+\varepsilon_{ij}$, then we define 
\begin{align*}
 & L^{\varepsilon}_{\hat{r}}(f)=\mathbb{E}[\hat{r}(\boldsymbol{X}_{ij}^P)((Y_{ij}^{\varepsilon,P}-f(\boldsymbol{X}_{ij}^P))^{2}-(Y^{\varepsilon,P}_{ij}-f_0(\boldsymbol{X}_{ij}^P))^{2})],\\
 & \hat{L}_{\hat{r}}^{\varepsilon}(f)=\sum_{i=1}^{n_1^{P}}\sum_{j=1}^{m^{P}_{i}}[\hat{r}(\boldsymbol{X}_{ij}^P)((Y_{i j}^{\varepsilon,P}-f(\boldsymbol{X}_{ij}^P))^{2}-(Y_{i j}^{\varepsilon,P}-f_0(\boldsymbol{X}_{ij}^P))^{2})]/N_1^{P},\\
 & L^{g}_{\hat{r}}(f)=\mathbb{E}_{\boldsymbol{X}\sim P_{\boldsymbol{X}}}[2\hat{r}(\boldsymbol{X})f_{i}(\boldsymbol{X})(f_0(\boldsymbol{X})-f(\boldsymbol{X}))],\\
 & \widehat{L}_{\hat{r}}^{g}(f)=2\sum_{i=1}^{n_1^{P}}\sum_{j=1}^{m^{P}_{i}}\hat{r}(\boldsymbol{X}_{i j}^P)f_{i}(\boldsymbol{X}_{ij}^P)(f_0(\boldsymbol{X}_{ij}^P)-f(\boldsymbol{X}_{ij}^P))/N_1^{P}.
\end{align*}
Therefore, by substituting $Y=f_0(\boldsymbol{X})+f_i(\boldsymbol{X})+\varepsilon$ into \eqref{T4(II)sta}, We can further express (II)
\begin{align}
    (II)& \leq\mathbb{E}[L^{\varepsilon}_{\hat{r}}(\hat{f}_{0,\hat{r}})-2\widehat{L}_{\hat{r}}^{\varepsilon}(\hat{f}_{0,\hat{r}})]+\mathbb{E}[L_{\hat{r}}^{g}(\hat{f}_{0,\hat{r}})-2\widehat{L}_{\hat{r}}^{g}(\hat{f}_{0,\hat{r}})]+2\Gamma\inf_{f\in \mathcal{F}}\|f-f_0\|_{L_2(P_{\mathbf{X}})}^2\nonumber\\
    & :=(A)+(B)+2\Gamma\inf_{f\in \mathcal{F}}\|f-f_0\|_{L_2(P_{\mathbf{X}})}^2.
    \label{T4(II)AB}
\end{align}
To bound (A), we break this part into three terms and analyze them separately. First, we truncate the independent noise terms, with the truncation function defined as follows: $\varepsilon_{ij}^\tau=\max(\min(\varepsilon_{ij},\tau),-\tau),\tau>0$. Let $Y^{\varepsilon^{\tau},P}_{ij}=f_0(\boldsymbol{X}^P_{ij})+\varepsilon^{\tau}_{ij}$. Naturally, we introduce the definition of $L^{\varepsilon}_{\hat{r}}(f)$ and $\hat{L}_{\hat{r}}^{\varepsilon}(f)$ in its truncated form as followed
\begin{align*}
 & L^{\varepsilon,\tau}_{\hat{r}}(f)=\mathbb{E}[\hat{r}(\boldsymbol{X}_{ij}^P)((Y^{\varepsilon^\tau,P}_{ij}-f(\boldsymbol{X}^P_{ij}))^{2}-(Y^{\varepsilon^\tau,P}_{ij}-f_0(\boldsymbol{X}^P_{ij}))^{2})], \\
 & \hat{L}_{\hat{r}}^{\varepsilon,\tau}(f)=\sum_{i=1}^{n_1^{P}}\sum_{j=1}^{m^{P}_{i}}[\hat{r}(\boldsymbol{X}_{i j}^P)((Y^{\varepsilon^\tau,P}_{i j}-f(\boldsymbol{X}_{ij}^P))^{2}-(Y^{\varepsilon^\tau,P}_{i j}-f_0(\boldsymbol{X}_{ij}^P))^{2})]/N_1^{P}.
\end{align*}
Then we have
\begin{align}
    (A)& =\mathbb{E}[L^{\varepsilon}_{\hat{r}}(\hat{f}_{0,\hat{r}})-L^{\varepsilon,\tau}_{\hat{r}}(\hat{f}_{0,\hat{r}})]+\mathbb{E}[L_{\hat{r}}^{\varepsilon,\tau}(\hat{f}_{0,\hat{r}})-2\widehat{L}_{\hat{r}}^{\varepsilon,\tau}(\hat{f}_{0,\hat{r}})]+\mathbb{E}[2\widehat{L}_{\hat{r}}^{\varepsilon,\tau}(\hat{f}_{0,\hat{r}})-2\widehat{L}_{\hat{r}}^{\varepsilon}(\hat{f}_{0,\hat{r}})]\nonumber\\
    & :=A_1(\hat{f}_{0,\hat{r}})+A_2(\hat{f}_{0,\hat{r}})+A_3(\hat{f}_{0,\hat{r}}).
    \label{T4(A)A1A2A3}
\end{align}
We first control $A_1(\hat{f}_{0,\hat{r}})$ and $A_3(\hat{f}_{0,\hat{r}})$.
\begin{align*}
|A_3(\hat{f}_{0,\hat{r}})| & =|\mathbb{E}[2\sum_{i=1}^{n_1^{P}}\sum_{j=1}^{m^{P}_{i}}(2\varepsilon_{ij}^{\tau}-2\varepsilon_{ij})\hat{r}(\boldsymbol{X}_{ij}^P)(f_0(\boldsymbol{X}_{ij}^P)-\hat{f}_{0,\hat{r}}(\boldsymbol{X}_{ij}^P))/N_1^{P}]| \\
 & \leq8B_1\Gamma\sum_{i=1}^{n_1^{P}}\sum_{j=1}^{m^{P}_{i}}\mathbb{E}[|\varepsilon_{ij}^\tau-\varepsilon_{ij}|]/N_1^{P} \\
 & =8B_1\Gamma\sum_{i=1}^{n_1^{P}}\sum_{j=1}^{m^{P}_{i}}\mathbb{E}[(|\varepsilon_{ij}|-\tau)I(|\varepsilon_{ij}|>\tau)]/N_1^{P} \\
 & \leq8B_1\Gamma\sum_{i=1}^{n_1^{P}}\sum_{j=1}^{m^{P}_{i}}\mathbb{E}[|\varepsilon_{ij}|I(|\varepsilon_{ij}|>\tau)]/N_1^{P}.
\end{align*}
Applying the facts
\begin{center}
    $|\varepsilon_{ij}|\leq2B_3\exp\{|\varepsilon_{ij}|/(2B_3)\}\quad\mathrm{and}\quad I(|\varepsilon_{ij}|>\tau)\leq\exp\{(|\varepsilon_{ij}|-\tau)/(2B_3)\}$,
\end{center}
by Assumption \ref{assumption fiepsilon}, we have
\begin{center}
    $|A_3(\hat{f}_{0,\hat{r}})|\leq16\Gamma B_1B_3\mathbb{E}[\exp\{|\varepsilon_{ij}|/B_3-\tau/(2B_3)\}]\leq16\Gamma B_1B_3\exp\{-\tau/(2B_3)\}$.
\end{center}
Similarly, $|A_1(\hat{f}_{0,\hat{r}})|\leq8\Gamma B_1B_3\exp\{-\tau/(2B_3)\}$, and setting $\tau=2B_3\log N_1^P$, we have
\begin{equation}
    \label{T4A1A3result}
    A_1(\hat{f}_{0,\hat{r}})+A_3(\hat{f}_{0,\hat{r}})\leq24\Gamma B_1B_3/N_1^P.
\end{equation}

Then we use the the Bernstein concentration inequality to control $A_2(\hat{f}_{0,\hat{r}})$, which is similar to the method in the proof of Theorem \ref{Theorem rstochastic}. Define $\tilde{\ell}(f;\boldsymbol{W}_{ij}):=\hat{r}(\boldsymbol{X}_{i j}^P)((Y_{ij}^{\varepsilon^\tau,P}-f(\boldsymbol{X}_{ij}^P))^2-(Y_{ij}^{\varepsilon^\tau,P}-f_0(\boldsymbol{X}_{ij}^P))^2)$ with $\boldsymbol{W}_{i j}:=(\boldsymbol{X}_{ij}^P,\varepsilon_{ij})$. Let $\mathcal{W}^{\prime}:=\{(\boldsymbol{X}_{ij}^{P^{\prime}},\varepsilon_{ij}^{\prime}):1\leq i\leq n_1^P, 1\leq j\leq m_i^P\}$ and be independent copies of $\mathcal{W}:=\{(\boldsymbol{X}_{ij}^{P},\varepsilon_{ij}):1\leq i\leq n_1^P, 1\leq j\leq m_i^P\}$. Define $G(f;\boldsymbol{W}_{ij}):=\mathbb{E}_{\mathcal{W}^{\prime}}[\tilde{\ell}(f;\boldsymbol{W}_{ij}^{\prime})-2\tilde{\ell}(f;\boldsymbol{W}_{ij})]$. Then we have
\begin{align*}
A_{2}(\hat{f}_{0,\hat{r}}) & =\mathbb{E}[\sum_{i=1}^{n_1^{P}}\sum_{j=1}^{m^{P}_{i}}(\mathbb{E}_{\mathcal{W}^{\prime}}[\tilde{\ell}(\hat{f}_{0,\hat{r}};\boldsymbol{W}_{ij}^{\prime})]-2\tilde{\ell}(\hat{f}_{0,\hat{r}};\boldsymbol{W}_{ij}))/N_1^{P}] \\
 & =\mathbb{E}[\sum_{i=1}^{n_1^{P}}\sum_{j=1}^{m^{P}_{i}}G(\hat{f}_{0,\hat{r}};\boldsymbol{W}_{ij})/N_1^{P}].
\end{align*}
Let $\mathcal{N}_{\mathcal{F}}:=\mathcal{N}(\epsilon,\mathcal{F},\|\cdot\|_{L_{2}(P_{\boldsymbol{X}})})$ be the $\epsilon$-covering number of $\mathcal{F}$ with respect to $\|\cdot\|_{L_2(P_X)}$. Accordingly, there exists an $\epsilon$-cover $\mathcal{C}_{\mathcal{F}}=\{f_1,\ldots,f_{\mathcal{N}_{\mathcal{F}}}\}\subset\mathcal{F}$. Then, for any $f\in\mathcal{F}$, there exists a $\tilde{f}\in\mathcal{C}_{\mathcal{F}}$ such that 
\begin{align*}
 & \mathbb{E}|\tilde{\ell}(f;\boldsymbol{W}_{ij})-\tilde{\ell}(\tilde{f};\boldsymbol{W}_{ij})|\leq2\Gamma(2B_1+\tau)\epsilon, \\
 & \mathbb{E}[G(f;\boldsymbol{W}_{ij})-G(\tilde{f};\boldsymbol{W}_{ij})]\leq6\Gamma(2B_1+\tau)\epsilon.
\end{align*}
For each $\tilde{f}\in \mathcal{C}_{\mathcal{F}}$, we have
\begin{align*}
    &~~~~~~~~\mathbb{P}[\sum_{i=1}^{n_1^{P}}\sum_{j=1}^{m^{P}_{i}}G(\tilde{f};\boldsymbol{W}_{ij})/N_1^{P}>t]=\mathbb{P}[\sum_{i=1}^{n_1^{P}}\sum_{j=1}^{m^{P}_{i}}(\mathbb{E}_{\mathcal{W}^{\prime}}[\tilde{\ell}(\tilde{f};\boldsymbol{W}_{ij}^{\prime})]-2\tilde{\ell}(\tilde{f};\boldsymbol{W}_{ij}))/N_1^{P}>t]\\
    &=\mathbb{P}[\mathbb{E}_\mathcal{W}[\sum_{i=1}^{n_1^{P}}\sum_{j=1}^{m^{P}_{i}}\tilde{\ell}(\tilde{f};\boldsymbol{W}_{ij})/N_1^{P}]-\sum_{i=1}^{n_1^{P}}\sum_{j=1}^{m^{P}_{i}}\tilde{\ell}(\tilde{f};\boldsymbol{W}_{ij})/N_1^{P}>t/2+\mathbb{E}_{\mathcal{W}}[\sum_{i=1}^{n_1^{P}}\sum_{j=1}^{m^{P}_{i}}\tilde{\ell}(\tilde{f};\boldsymbol{W}_{ij})/N_1^{P}]].
\end{align*}
It is easy to show that $|\tilde{\ell}(\tilde{f};\boldsymbol{W}_{ij})|\leq4\Gamma B_1(\tau+2B_1)$, and $|\tilde{\ell}(\tilde{f};\boldsymbol{W}_{ij})-\mathbb{E}\tilde{\ell}(\tilde{f};\boldsymbol{W}_{ij})|\leq8\Gamma B_1(\tau+2B_1):=b_1$. Denote by $\sigma_1^2:=Var(\tilde{\ell}(\tilde{f},\boldsymbol{W}_{ij}))$, then we have
\begin{align*}
\sigma^{2}_1 &\leq\mathbb{E}[\tilde{\ell}(\tilde{f};\boldsymbol{W}_{ij})^{2}]=\mathbb{E}[\hat{r}(\boldsymbol{X}_{i j}^P)((f_{0}(\boldsymbol{X}_{ij}^P)+\varepsilon^{\tau}_{ij}-\tilde{f}(\boldsymbol{X}_{ij}^P))^{2}-(f_{0}(\boldsymbol{X}_{ij}^P)+\varepsilon^{\tau}_{ij}-f_0(\boldsymbol{X}_{ij}^P))^{2})]^{2} \\
 & =\mathbb{E}[\hat{r}(\boldsymbol{X}_{i j}^P)(\tilde{f}(\boldsymbol{X}_{ij}^P)-f_0(\boldsymbol{X}_{ij}^P))^2+2\hat{r}(\boldsymbol{X}_{i j}^P)\varepsilon^\tau_{ij}(f_0(\boldsymbol{X}_{ij}^P)-\tilde{f}(\boldsymbol{X}_{ij}^P))]^2 \\
 & 
\leq2\Gamma\mathbb{E}[\hat{r}(\boldsymbol{X}_{i j}^P)(\tilde{f}(\boldsymbol{X}_{ij}^P)-f_0(\boldsymbol{X}_{ij}^P))^4]+2\mathbb{E}[(2\hat{r}(\boldsymbol{X}_{i j}^P)\varepsilon^\tau_{ij}(f_0(\boldsymbol{X}_{ij}^P)-\tilde{f}(\boldsymbol{X}_{ij}^P)))^2]\\
 & \leq8\Gamma B_1^2\mathbb{E}[\hat{r}(\boldsymbol{X}_{i j}^P)(\tilde{f}(\boldsymbol{X}_{ij}^P)-f_0(\boldsymbol{X}_{ij}^P))^2]+8\Gamma\tau^2\mathbb{E}[\hat{r}(\boldsymbol{X}_{i j}^P)(\tilde{f}(\boldsymbol{X}_{ij}^P)-f_0(\boldsymbol{X}_{ij}^P))^2] \\
 & =8\Gamma(B_{1}^{2}+\tau^{2})\mathbb{E}[\tilde{\ell}(\tilde{f};\boldsymbol{W}_{ij})].
\end{align*}
Thus, we have $\mathbb{E}[\tilde{\ell}(\tilde{f};\boldsymbol{W}_{ij})]\geq\sigma^2_1/(8\Gamma(B_1^2+\tau^2))$. Let $u_1:=t/2+\sigma^2_1/(16\Gamma(B_1^2+\tau^2))$, it is easy to show that $\sigma^2/u_1\leq16\Gamma(B_1^2+\tau^2)$ and $u_1\geq t/2$. By Bernstein concentration inequality, we have
\begin{align*}
&\mathbb{P}[\sum_{i=1}^{n_1^{P}}\sum_{j=1}^{m^{P}_{i}}G(\tilde{f};\boldsymbol{W}_{ij})/N_1^{P}>t]\\
& ~
\leq\mathbb{P}[\mathbb{E}[\sum_{i=1}^{n_1^{P}}\sum_{j=1}^{m^{P}_{i}}\tilde{\ell}(\tilde{f};\boldsymbol{W}_{ij})/N_1^{P}]-\sum_{i=1}^{n_1^{P}}\sum_{j=1}^{m^{P}_{i}}\tilde{\ell}(\tilde{f};\boldsymbol{W}_{ij})/N_1^{P}>t/2+\sigma^2/(16\Gamma^2(B_1^2+\tau^2))] \\
 &~ =\mathbb{P}[\mathbb{E}[\sum_{i=1}^{n_1^{P}}\sum_{j=1}^{m^{P}_{i}}\tilde{\ell}(\tilde{f};\boldsymbol{W}_{ij})/N_1^{P}]-\sum_{i=1}^{n_1^{P}}\sum_{j=1}^{m^{P}_{i}}\tilde{\ell}(\tilde{f};\boldsymbol{W}_{ij})/N_1^{P}>u_1] \\
 & ~\leq\exp(-N_1^{P}u_1^2/2(\sigma_1^2+b_1u_1))\\
 & ~\leq\exp(-N_1^{P}u_1/2(b_1+16\Gamma(B_1^2+\tau^2))) \\
 & ~\leq\exp(-N_1^{P}t/32\Gamma(4B_1^2+B_1\tau+2\tau^2)).
\end{align*}
Setting $a=6\Gamma(2B_1+\tau)\epsilon+32\Gamma(4B_1^2+B_1\tau+2\tau^2)\log\mathcal{N}_{\mathcal{F}}/N_1^{P}$ with $\epsilon=B_1/N_1^{P}$ and $\tau=2B_3\log N_1^{P}$, we have
\begin{align}
&A_{2}(\hat{f}_{0,\hat{r}}) 
=\mathbb{E}[\sum_{i=1}^{n_1^{P}}\sum_{j=1}^{m^{P}_{i}}G(\hat{f}_{0,\hat{r}};\boldsymbol{W}_{ij})/N_1^{P}]\nonumber\\
 & \leq\mathbb{E}[\max_{f\in\mathcal{F}}\sum_{i=1}^{n_1^{P}}\sum_{j=1}^{m^{P}_{i}}G(f;\boldsymbol{W}_{ij})/N_1^{P}]  \leq\mathbb{E}[\max_{\tilde{f}\in\mathcal{C}_{\mathcal{F}}}\sum_{i=1}^{n_1^{P}}\sum_{j=1}^{m^{P}_{i}}G(\tilde{f};\boldsymbol{W}_{ij})/N_1^{P}]+6\Gamma(2B_{1}+\tau)\epsilon \nonumber\\
 & \leq a+\int_a^\infty\mathcal{N}_{\mathcal{F}}\exp\{-N_1^{P}(t-6\Gamma(2B_1+\tau)\epsilon)/(32\Gamma(4B_1^2+B_1\tau+2\tau^2))\}\mathrm{d}t \nonumber\\
 & \leq a+32\Gamma\mathcal{N}_{\mathcal{F}}(4B_1^2+B_1\tau+2\tau^2)\exp\{-N_1^{P}(a-6\Gamma(2B_1+\tau)\epsilon)/(32\Gamma(4B_1^2+B_1\tau+2\tau^2))\}/N_1^{P} \nonumber\\
 & =6\Gamma(2B_1+\tau)\epsilon+32\Gamma(4B_1^2+B_1\tau+2\tau^2)(\log\mathcal{N}_{\mathcal{F}}+1) /N_1^{P}\nonumber\\
 & \leq256\Gamma(B_3\log N_1^{P}+B_1)^2(\log\mathcal{N}_{\mathcal{F}}+2)/N_1^{P}.  
 \label{T4A2result}
\end{align}
Combining the results of \eqref{T4(A)A1A2A3}, \eqref{T4A1A3result} and \eqref{T4A2result}, we obtain
\begin{align}
    (A)\leq24\Gamma B_1B_3/{N_1^P}+256\Gamma(B_3\log N_1^P+B_1)^2(\log\mathcal{N}_{\mathcal{F}}+2)/{N_1^P}.
    \label{T4(A)result}
\end{align}

Next, we aim to bound $(B)$ in \eqref{T4(II)AB}. We can obtain 
\begin{align*}
    (B)=\mathbb{E}[L_{\hat{r}}^g(\hat{f}_{0,\hat{r}})-2\widehat{L}_{\hat{r}}^g(\hat{f}_{0,\hat{r}})]=\mathbb{E}[2\sum_{i=1}^{n_1^P}\sum_{j=1}^{m_i^P}\hat{r}(\boldsymbol{X}_{i j}^P)f_i(\boldsymbol{X}_{ij}^P)(\hat{f}_{0,\hat{r}}(\boldsymbol{X}_{ij}^P)-f_0(\boldsymbol{X}_{ij}^P))/N_1^P],
\end{align*}
according to the proof of Theorem 2.1 given in \citet{Yan02102025}, it follows that
\begin{align}
(B) & \leq\mathbb{E}[\|\hat{f}_{0,\hat{r}}-f_0\|_{L_2(Q_{\boldsymbol{X}})}^2]/16+c_{1}\Gamma(B_{1}^{2}+B_{2}^{2})/{n_1^P}+c_{2}\Gamma(B_{1}^{2}+B_{2}^{2}\operatorname{log}^{2}n_1^P)/{N_1^P} \nonumber\\
 & +c_3\Gamma B_2^2\log^2n_1^P(\log\mathcal{N}_{\mathcal{F}}+\log N_1^P)/{N_1^P},
 \label{T4(B)result}
\end{align} 
where $c_1$, $c_2$ and $c_3$ are constants. Combining the results of \eqref{T4(I)(II)}, \eqref{T4(I)result}, \eqref{T4(II)AB}, \eqref{T4(A)result} and \eqref{T4(B)result}, and then using the results of Theorems \ref{Theorem approx}, \ref{Theorem 3rb}, \eqref{logN_VC} and \eqref{VC_LW}, we have
\begin{align*}
    \mathbb{E}[\|\hat{f}_{0,\hat{r}}-f_0\|_{L_2(Q_{\boldsymbol{X}})}^2]& \lesssim 8(B_1^2+B_2^2+B_3^2)\mathbb{E}[\|r-\hat{r}\|_{L_2(P_{\boldsymbol{X}})}^2]/{ M}\\
    &+\Gamma/{n_1^P}+\Gamma\mathcal{DS}\log\mathcal{S}(\log N_1^P)^3/{N_1^P}.
\end{align*}
 Finally, setting the depth $\mathcal{D}=O(\log N_1^P)$, the weights $\mathcal{B}=O((N_1^P)^{d/(d+2\zeta)})$, and the size $\mathcal{S}=O((N_1^P)^{d/(2\zeta+d)}(\log N_1^P)^{-5d/(2\zeta+d)})$, we can obtain 
\begin{align*}
    \mathbb{E}[\|\hat{f}_{0,\hat{r}}-f_0\|_{L_2(Q_{\boldsymbol{X}})}^2] \lesssim & \Gamma((n_1^P)^{-1}+(N_1^P)^{-2\zeta/(2\zeta+d)}(\log N_1^P)^{10\zeta/(2\zeta+d)})
    \\ 
    &+\Gamma(\Gamma+1)(N_r)^{-2\alpha/(d+2\alpha)}(\log N_r)^{5}/{M}.
\end{align*}
Moreover, if $N_r\gtrsim((\Gamma+1)/{M})^{(d+2\alpha)/{2\alpha}}(N_1^P)^{\zeta(d+2\alpha)/(\alpha(d+2\zeta))}$, then we have
\begin{align*}
    \mathbb{E}[\|\hat{f}_{0,\hat{r}}-f_0\|_{L_2(Q_{\boldsymbol{X}})}^2] \lesssim \Gamma((n_1^P)^{-1}+(N_1^P)^{-2\zeta/(2\zeta+d)}(\log N_1^P)^{10\zeta/(2\zeta+d)}).
\end{align*}
Thus we complete the proof.
\end{proof}

\subsubsection{Proof of Theorem \ref{Theorem Ef^_0_r^_zeta}}
\begin{proof}
First, we similarly decompose the error term. Through straightforward calculations, we have
\begin{align}
    &\mathbb{E}[\|\hat{f}_{0,\hat{r}_{\xi}}-f_0\|_{L_2(Q_{\boldsymbol{X}})}^2]\nonumber
     =\mathbb{E}[R(\hat{f}_{0,\hat{r}_{\xi}})-R(f_0)]\nonumber\\
     =&\mathbb{E}[\mathbb{E}_{(\boldsymbol{X},Y)\thicksim P_{\boldsymbol{X},Y}}[(r(\boldsymbol{X})-\hat{r}_{\xi}(\boldsymbol{X}))((Y-\hat{f}_{0,\hat{r}_{\xi}}(\boldsymbol{X}))^2-(Y-f_0(\boldsymbol{X}))^2)]]\nonumber\\
    & +\mathbb{E}[\mathbb{E}_{(\boldsymbol{X},Y)\thicksim P_{\boldsymbol{X},Y}}[\hat{r}_{\xi}(\boldsymbol{X})((Y-\hat{f}_{0,\hat{r}_{\xi}}(\boldsymbol{X}))^2-(Y-f_0(\boldsymbol{X}))^2)]]\nonumber\\
     :=&(I)+(II).
     \label{T6(I)(II)}
\end{align}
According to \eqref{T4(I)result}, we similarly obtain
\begin{align}
    (I)\leq8(B_1^2+B_2^2+B_3^2)\mathbb{E}[\|r-\hat{r}_{\xi}\|_{L_2(P_{\boldsymbol{X}})}^2]/{ M}+\mathbb{E}[\|\hat{f}_{0,\hat{r}_{\xi}}-f_0\|_{L_2(Q_{\boldsymbol{X}})}^2]/2.
    \label{T6(I)result}
\end{align}
Then, using the same proof approach as in Theorem \ref{Theorem Ef^_0_r^}, we obtain 
\begin{align}
    (II)\lesssim &\xi( (B_2\log n_1^P)^2+(B_3\log N_1^P+B_1)^2)(\log\mathcal{N}_{\mathcal{F}}+\log N_1^P)/{N_1^P}\nonumber\\
    &+\xi\inf_{f\in \mathcal{F}}\|f-f_0\|_{L_2(P_{\mathbf{X}})}^2
    +\xi(B_{1}^{2}+B_{2}^{2})/{n_1^P}+\mathbb{E}[\|\hat{f}_{0,\hat{r}_{\xi}}-f_0\|_{L_2(Q_{\boldsymbol{X}})}^2]/16.
    \label{T6(II)result}
\end{align}
By the results of Theorems \ref{Theorem approx}, \ref{Theorem r_xi_hat}, \eqref{logN_VC} and \eqref{VC_LW}, combining the results of \eqref{T6(I)(II)}, \eqref{T6(I)result} and \eqref{T6(II)result}, we obtain
\begin{align*}
    \mathbb{E}[\|\hat{f}_{0,\hat{r}_{\xi}}-f_0\|_{L_2(Q_{\boldsymbol{X}})}^2] &\lesssim \xi((n_1^P)^{-1}+\mathcal{DS}\log\mathcal{S}(\log N_1^P)^3/N_1^P)\\
    & +\kappa_{\delta}^{2/(2+\delta)}(N_r)^{-2\alpha/(d+(2+4/\delta)\alpha)}(\log N_r)^5/M.
\end{align*}
Setting $\xi=O((1/n_1^P+\mathcal{DS}\log\mathcal{S}(\log N_1^P)^3/N_1^P)^{-1/(\delta+2)})$, the size $\mathcal{S}=O((N_1^P)^{d/(d+(2+2/(\delta+1))\zeta)}\\(\log N_1^P)^{-5d/(2\zeta+d)})$, the depth $\mathcal{D}=O(\log N_1^P)$ and the weights $\mathcal{B}=O((N_1^P)^{d/(d+(2+2/(\delta+1))\zeta)})$, we have
\begin{align*}
\mathbb{E}[\|\hat{f}_{0,\hat{r}_{\xi}}-f_{0}\|_{L_{2}(Q_{X})}^{2}] & \lesssim (n_1^P)^{-(\delta+1)/(\delta+2)}+(N_1^P)^{-2\zeta/(d+(2+2/(\delta+1))\zeta)}(\log N_1^P)^{10\zeta/(2\zeta+d)} \\
 & +\kappa_{\delta}^{2/(2+\delta)}(N_r)^{-2\alpha/(d+(2+4/\delta)\alpha)}(\log N_r)^5/M.
\end{align*}
Moreover, if $N_r\gtrsim M^{-\eta_{1}}{\kappa_{\delta}}^{\eta_2}(N_1^P)^{\eta_3}$ with $\eta_1=\{d\delta+2(\delta+2)\alpha\}/\{2\alpha\delta\}$, $\eta_2=\{\delta d+2(\delta+2)\alpha\}/\{\alpha\delta(\delta+2)\}$ and $\eta_3=\{2\zeta\alpha(\delta+2)(\delta+1)+d\zeta\delta(\delta+1)\}/\{2\zeta\alpha\delta(\delta+2)+d\alpha\delta(\delta+1)\}$, then we have
\begin{center}
    $\mathbb{E}[\|\hat{f}_{0,\hat{r}_{\xi}}-f_{0}\|_{L_{2}(Q_{X})}^{2}]\lesssim (n_1^P)^{-(\delta+1)/(\delta+2)}+(N_1^P)^{-2\zeta/(d+(2+2/(\delta+1))\zeta)}(\log N_1^P)^{10\zeta/(2\zeta+d)}$.
\end{center} 
Thus we complete the proof.
\end{proof}
\subsection{Proofs for Target Regression Estimators with a Known Density ratio}\label{ProofKRE}
In this section, we give the proofs of Theorems \ref{Theorem fhatr} and \ref{Theorem fhatr_xi}, which establish the non-asymptotic error upper bounds for the target regression estimator with a known density ratio under bounded and unbounded density ratio settings, respectively. These proofs follow similar approaches to those in Theorems \ref{Theorem Ef^_0_r^} and \ref{Theorem Ef^_0_r^_zeta}.

\subsubsection{Proof of Theorem \ref{Theorem fhatr}}

\begin{proof}
First, we similarly decompose the error term. Recall the definition of $R(f)=\mathbb{E}_{(\boldsymbol{X},Y)\thicksim P_{\boldsymbol{X},Y}}[r(\boldsymbol{X})(Y-f(\boldsymbol{X}))^2]$. Through straightforward calculations, we have $\mathbb{E}[\|\hat{f}_{0,r}-f_0\|_{L_2(Q_{\boldsymbol{X}})}^2]=\mathbb{E}[R(\hat{f}_{0,r})-R(f_0)]$. For each $f(\cdot)\in \mathcal{F}$, we define
\begin{align*}
    & L_{r}(f)=\mathbb{E}_{\boldsymbol{X},Y\sim P_{\boldsymbol{X},Y}}[r(\boldsymbol{X})((Y-f(\boldsymbol{X}))^2-(Y-f_0(\boldsymbol{X}))^2)],\\
    & \widehat{L}_{r}(f)=\sum_{i=1}^{n^{P}}\sum_{j=1}^{m^{P}_{i}}
   r(\boldsymbol{X}_{ij}^{P})((Y_{ij}^{P} - f(\boldsymbol{X}_{ij}^{P}))^2/N^{P}-(Y_{ij}^{P} - f_0(\boldsymbol{X}_{ij}^{P}))^2).
\end{align*}
Following the same procedure as in the derivation of \eqref{T4(II)sta}, we can analogously obtain
\begin{align}
    \mathbb{E}[L_{r}(\hat{f}_{0,r})]\leq\mathbb{E}[L_{r}(\hat{f}_{0,r})-2\widehat{L}_{r}(\hat{f}_{0,r})]+2\Gamma\inf_{f\in \mathcal{F}}\|f-f_0\|_{L_2(P_{\mathbf{X}})}^2.
     \label{T7fj1}
\end{align}
We alse let $Y^{\varepsilon,P}_{ij}=f_0(\boldsymbol{X}_{ij}^P)+\varepsilon_{ij}$ and define 
\begin{align*}
 & L^{\varepsilon}_{r}(f)=\mathbb{E}[r(\boldsymbol{X}_{ij}^P)((Y^{\varepsilon,P}_{ij}-f(\boldsymbol{X}_{ij}^P))^{2}-(Y^{\varepsilon,P}_{ij}-f_0(\boldsymbol{X}_{ij}^P))^{2})],\\
 & \hat{L}_{r}^{\varepsilon}(f)=\sum_{i=1}^{n^{P}}\sum_{j=1}^{m^{P}_{i}}[r(\boldsymbol{X}_{i j}^P)((Y_{i j}^{\varepsilon,P}-f(\boldsymbol{X}_{ij}^P))^{2}-(Y_{i j}^{\varepsilon,P}-f_0(\boldsymbol{X}_{ij}^P))^{2})]/N^{P},\\
 & L^{g}_{r}(f)=\mathbb{E}_{\boldsymbol{X}\sim P_{\boldsymbol{X}}}[2r(\boldsymbol{X})f_{i}(\boldsymbol{X})(f_0(\boldsymbol{X})-f(\boldsymbol{X}))],\\
 & \widehat{L}_{r}^{g}(f)=2\sum_{i=1}^{n^{P}}\sum_{j=1}^{m^{P}_{i}}r(\boldsymbol{X}_{i j}^P)f_{i}(\boldsymbol{X}_{ij}^P)(f_0(\boldsymbol{X}_{ij}^P)-f(\boldsymbol{X}_{ij}^P))/N^{P}.
\end{align*}
Therefore, by substituting $Y=f_0(\boldsymbol{X})+f_i(\boldsymbol{X})+\varepsilon$ into the \eqref{T7fj1}, We can further express $\mathbb{E}[L_{r}(\hat{f}_{0,r})]$ as
\begin{align*}
    \mathbb{E}[L_{r}(\hat{f}_{0,r})]& \leq\mathbb{E}[L^{\varepsilon}_{r}(\hat{f}_{0,r})-2\widehat{L}_{r}^{\varepsilon}(\hat{f}_{0,r})]+\mathbb{E}[L_{r}^{g}(\hat{f}_{0,r})-2\widehat{L}_{r}^{g}(\hat{f}_{0,r})]+2\Gamma\inf_{f\in \mathcal{F}}\|f-f_0\|_{L_2(P_{\mathbf{X}})}^2\nonumber\\
    & :=(A)+(B)+2\Gamma\inf_{f\in \mathcal{F}}\|f-f_0\|_{L_2(P_{\mathbf{X}})}^2.
\end{align*}

To bound (A) and (B), we similarly truncate the independent noise terms, with the truncation function defined as follows: $\varepsilon_{ij}^\tau=\max(\min(\varepsilon_{ij},\tau),-\tau),\tau>0$. Let $Y^{\varepsilon^\tau,P}_{ij}=f_0(\boldsymbol{X}_{ij}^P)+\varepsilon^{\tau}_{ij}$. Naturally, we introduce the definition of $L^{\varepsilon}_{r}(f)$ and $\hat{L}_{r}^{\varepsilon}(f)$ in its truncated form as followed
\begin{align*}
 & L^{\varepsilon,\tau}_{r}(f)=\mathbb{E}[r(\boldsymbol{X}_{ij}^P)((Y^{\varepsilon^\tau,P}_{ij}-f(\boldsymbol{X}_{ij}^P))^{2}-(Y^{\varepsilon^\tau,P}_{ij}-f_0(\boldsymbol{X}_{ij}^P))^{2})], \\
 & \hat{L}_{r}^{\varepsilon,\tau}(f)=\sum_{i=1}^{n^{P}}\sum_{j=1}^{m^{P}_{i}}[r(\boldsymbol{X}_{i j}^P)((Y^{\varepsilon^\tau,P}_{i j}-f(\boldsymbol{X}_{ij}^P))^{2}-(Y^{\varepsilon^\tau,P}_{i j}-f_0(\boldsymbol{X}_{ij}^P))^{2})]/N^{P}.
\end{align*}
Then we have
\begin{align*}
    (A)& =\mathbb{E}[L^{\varepsilon}_{r}(\hat{f}_{0,r})-L^{\varepsilon,\tau}_{r}(\hat{f}_{0,r})]+\mathbb{E}[L_{r}^{\varepsilon,\tau}(\hat{f}_{0,r})-2\widehat{L}_{r}^{\varepsilon,\tau}(\hat{f}_{0,r})]+\mathbb{E}[2\widehat{L}_{r}^{\varepsilon,\tau}(\hat{f}_{0,r})-2\widehat{L}_{r}^{\varepsilon}(\hat{f}_{0,r})]\nonumber\\
    & :=A_1(\hat{f}_{0,r})+A_2(\hat{f}_{0,r})+A_3(\hat{f}_{0,r}).
\end{align*}
Similarly, Define $\tilde{\ell}_{r}(f;\boldsymbol{W}_{ij}):=r(\boldsymbol{X}_{i j}^P)((Y_{ij}^{\varepsilon^\tau}-f(\boldsymbol{X}_{ij}^P))^2-(Y_{ij}^{\varepsilon^\tau}-f_0(\boldsymbol{X}_{ij}^P))^2)$ with $\boldsymbol{W}_{i j}:=(\boldsymbol{X}_{ij}^P,\varepsilon_{ij})$. Let $\mathcal{W}^{\prime}:=\{(\boldsymbol{X}_{ij}^{P^{\prime}},\varepsilon_{ij}^{\prime}):1\leq i\leq n^P, 1\leq j\leq m_i^P\}$ and be independent copies of $\mathcal{W}:=\{(\boldsymbol{X}_{ij}^{P},\varepsilon_{ij}):1\leq i\leq n^P, 1\leq j\leq m_i^P\}$. Define $G_{r}(f;\boldsymbol{W}_{ij}):=\mathbb{E}_{\mathcal{W}^{\prime}}[\tilde{\ell}_{r}(f;\boldsymbol{W}_{ij}^{\prime})-2\tilde{\ell}_{r}(f;\boldsymbol{W}_{ij})]$. Therefore, employing analogous arguments in the proof of Theorem \ref{Theorem Ef^_0_r^} and setting the depth $\mathcal{D}=O(\log N^P)$, the weights $\mathcal{B}=O((N^P)^{d/(d+2\zeta)})$ and the size $\mathcal{S}=O((N^P)^{d/(2\zeta+d)}(\log N^P)^{-5d/(2\zeta+d)})$, we can obtain 
\begin{center}
    $\mathbb{E}[\|\hat{f}_{0,r}-f_0\|_{L_2(Q_{\boldsymbol{X}})}^2] \lesssim \Gamma((n^P)^{-1}+(N^P)^{-2\zeta/(2\zeta+d)}(\log N^P)^{10\zeta/(2\zeta+d)})$.
\end{center}
Thus we complete the proof. 
\end{proof}

\subsubsection{Proof of Theorem \ref{Theorem fhatr_xi}}

\begin{proof}
 Through straightforward calculations, we have
\begin{align}
    &\mathbb{E}[\|\hat{f}_{0,r_{\xi}}-f_0\|_{L_2(Q_{\boldsymbol{X}})}^2]\nonumber
     =\mathbb{E}[R(\hat{f}_{0,r_{\xi}})-R(f_0)]\nonumber\\
     =&\mathbb{E}[\mathbb{E}_{(\boldsymbol{X},Y)\thicksim P_{\boldsymbol{X},Y}}[(r(\boldsymbol{X})-r_{\xi}(\boldsymbol{X}))((Y-\hat{f}_{0,r_{\xi}}(\boldsymbol{X}))^2-(Y-f_0(\boldsymbol{X}))^2)]]\nonumber\\
    & +\mathbb{E}[\mathbb{E}_{(\boldsymbol{X},Y)\thicksim P_{\boldsymbol{X},Y}}[r_{\xi}(\boldsymbol{X})((Y-\hat{f}_{0,r_{\xi}}(\boldsymbol{X}))^2-(Y-f_0(\boldsymbol{X}))^2)]]\nonumber\\
     :=&(I)+(II).
    \label{T8(I)(II)}
\end{align}
We first control term (I). Using Markov inequality and Assumption \ref{assumption rub}, we have
\begin{align}
 & \mathbb{E}_{(\boldsymbol{X},Y)\thicksim P_{\boldsymbol{X},Y}}[(r(\boldsymbol{X})-r_{\xi}(\boldsymbol{X}))((Y-f(\boldsymbol{X}))^{2}-(Y-f_{0}(\boldsymbol{X}))^{2})] \nonumber\\
  =&\mathbb{E}_{(\boldsymbol{X},Y)\thicksim P_{\boldsymbol{X},Y}}[(r(\boldsymbol{X})-\xi)I(r(\boldsymbol{X})\geq\xi)((Y-f(\boldsymbol{X}))^{2}-(Y-f_{0}(\boldsymbol{X}))^{2})]
 \nonumber\\
  =&\mathbb{E}_{\boldsymbol{X}\sim P_{\boldsymbol{X}}}[(r(\boldsymbol{X})-\xi)I(r(\boldsymbol{X})\geq\xi)(f_0(\boldsymbol{X})-f(\boldsymbol{X}))^2] \nonumber\\
  \leq&4B_1^2\mathbb{E}_{\boldsymbol{X}\sim P_{\boldsymbol{X}}}[(r(\boldsymbol{X})-\xi)I(r(\boldsymbol{X})\geq\xi)]\nonumber\\
 \leq&4B_1^2\mathbb{E}_{\boldsymbol{X}\thicksim P_{\boldsymbol{X}}}[r(\boldsymbol{X})r^{\delta+1}(\boldsymbol{X})/{\xi^{\delta+1}}]+4B_1^2\xi\mathbb{E}_{\boldsymbol{X}\thicksim P_{\boldsymbol{X}}}[r^{2+\delta}(\boldsymbol{X})/{\xi^{2+\delta}}] \nonumber\\
 =&16B_1^2\kappa_{\delta}/{\xi^{\delta+1}}.
 \label{T8(I)result}
\end{align}
Then, employing analogous arguments to in the proof of Theorem \ref{Theorem Ef^_0_r^}, (II) can be bounded by
\begin{align}
    (II)\lesssim \xi/{n^P}+\xi\mathcal{DS}\log\mathcal{S}(\log N^P)^3/{N^P}.
    \label{T8(II)result}
\end{align}
Combining the result of \eqref{T8(I)(II)}, \eqref{T8(I)result} and \eqref{T8(II)result}, we can obtain 
\begin{center}
    $\mathbb{E}[\|\hat{f}_{0,r_{\xi}}-f_0\|_{L_2(Q_{\boldsymbol{X}})}^2]\lesssim\kappa_{\delta}/{\xi^{\delta+1}}+\xi(1/{n^P}+\mathcal{DS}\log\mathcal{S}(\log N^P)^3/{N^P})$.
\end{center}
Setting $\xi=(1/{(n^P\kappa_{\delta})}+\mathcal{DS}\log\mathcal{S}(\log N^P)^{3}/{(N^P\kappa_{\delta})})^{-1/{(2+\delta)}}$, depth $\mathcal{D}=O(\log N^P)$, weights $\mathcal{B}=O((N^P)^{d/{(d+(2/(\delta+1)+2)\zeta)}})$, size $\mathcal{S}=O((N^P)^{d/{(d+(2/(\delta+1)+2)\zeta)}}(\log N^P)^{-5d/{(2\zeta+d)}})$, we have
\begin{center}
    $\mathbb{E}\|\hat{f}_{0,r_{\xi}}-f_0\|_{L_2(Q_{\boldsymbol{X}})}^2\lesssim \kappa_{\delta}^{1/{(2+\delta)}}((n^P)^{-(\delta+1)/{(\delta+2)}}+(N^P)^{-2\zeta/{(d+(2/(\delta+1)+2)\zeta)}}(\log N^P)^{10\zeta/{(2\zeta+d)}})$.
\end{center}
Thus we complete the proof.
\end{proof}

\section{Theoretical Analysis of $\hat{f}_{0}^{P}(\cdot)$}\label{f0P}
In this part, we develop a theoretical analysis of the naive estimator $\hat{f}_0^{P}(\cdot)$ in \eqref{hatf0P}, including non-asymptotic error upper bounds under bounded density ratio and finite second moment assumptions, together with the corresponding proofs.

\setcounter{theorem}{0}
\renewcommand{\thetheorem}{C.\arabic{theorem}}
\begin{theorem}
    \label{Theorem fPbounded}
    Suppose that Assumptions \ref{assumption rsupp}, \ref{assumption rb}, \ref{assumption f0B1} and \ref{assumption fiepsilon} are satisfied, and let $\mathcal{F}$ be the function class in \eqref{FNN_d_1} bounded by $B_1\geq1$ with the depth $\mathcal{D}=O(\log N^P)$, the weights $\mathcal{B}=O((N^P)^{d/(d+2\zeta)})$ and the size $\mathcal{S}=O((N^P)^{d/(2\zeta+d)}(\log N^P)^{-5d/(2\zeta+d)})$. Then, the mean squared error of the estimator $\hat{f}_{0}^{P}(\cdot)$ in \eqref{hatf0P} satisfies
    \begin{center}
        $\mathbb{E}[\|\hat{f}_{0}^{P}-f_0\|_{L_2(Q_{\boldsymbol{X}})}^2] \lesssim \Gamma((n^P)^{-1}+(N^P)^{-2\zeta/(2\zeta+d)}(\log N^P)^{10\zeta/(2\zeta+d)})$.
    \end{center}
\end{theorem}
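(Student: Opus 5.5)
The plan is to transfer a standard (unweighted) error bound from $P_{\boldsymbol{X}}$ to $Q_{\boldsymbol{X}}$ via the change of measure $\|g\|_{L_2(Q_{\boldsymbol{X}})}^2=\mathbb{E}_{P_{\boldsymbol{X}}}[r g^2]\le \Gamma\|g\|_{L_2(P_{\boldsymbol{X}})}^2$. This is valid by Assumptions \ref{assumption rsupp} and \ref{assumption rb}. Applied to $g=\hat{f}_0^P-f_0$, it gives
\begin{equation*}
\mathbb{E}[\|\hat{f}_0^P-f_0\|_{L_2(Q_{\boldsymbol{X}})}^2]\le \Gamma\,\mathbb{E}[\|\hat{f}_0^P-f_0\|_{L_2(P_{\boldsymbol{X}})}^2].
\end{equation*}
Under covariate shift, $\mathbb{E}(Y^P\mid \boldsymbol{X}^P)=f_0(\boldsymbol{X}^P)$, so $\hat{f}_0^P$ is an ordinary least squares estimator of $f_0$ in the source domain. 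It therefore suffices to show the unweighted source-domain rate
\begin{equation*}
\mathbb{E}[\|\hat{f}_0^P-f_0\|_{L_2(P_{\boldsymbol{X}})}^2]\lesssim (n^P)^{-1}+(N^P)^{-2\zeta/(2\zeta+d)}(\log N^P)^{10\zeta/(2\zeta+d)}.
\end{equation*}

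For this source-domain bound, I would repeat the proof of Theorem \ref{Theorem fhatr} with the weight $r(\cdot)$ replaced by the constant $1$, which corresponds to $\Gamma=1$ there. Define $L(f)=\mathbb{E}_{P}[(Y-f)^2-(Y-f_0)^2]=\|f-f_0\|_{L_2(P_{\boldsymbol{X}})}^2$ and its empirical counterpart $\widehat{L}(f)$. The minimizing property of $\hat{f}_0^P$ gives
\begin{equation*}
\mathbb{E}L(\hat{f}_0^P)\le \mathbb{E}[L(\hat{f}_0^P)-2\widehat{L}(\hat{f}_0^P)]+2\inf_{f\in\mathcal{F}}\|f-f_0\|_{L_2(P_{\boldsymbol{X}})}^2.
\end{equation*}
Substituting $Y=f_0+f_i+\varepsilon$ splits the stochastic term into a noise part (A) and a random-effect part (B).
\begin{itemize}
\item For (A), I truncate $\varepsilon_{ij}$ at $\tau=2B_3\log N^P$. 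The truncation remainders are $O(1/N^P)$ by the sub-exponential bound in Assumption \ref{assumption fiepsilon}. The truncated part is handled by the Bernstein-plus-covering argument of \eqref{T4A2result}, giving $O((\log N^P)^2\log\mathcal{N}_{\mathcal{F}}/N^P)$.
\item For (B), I invoke the argument of Theorem 2.1 of \citet{Yan02102025}, exactly as in \eqref{T4(B)result}. This yields $\tfrac1{16}\mathbb{E}\|\hat f_0^P-f_0\|^2_{L_2(P_{\boldsymbol X})}+O(1/n^P)+O((\log n^P)^2(\log\mathcal{N}_{\mathcal{F}}+\log N^P)/N^P)$, and the first term is absorbed into the left side.
\end{itemize}

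Next I bound $\log\mathcal{N}_{\mathcal{F}}\lesssim\mathcal{DS}\log\mathcal{S}\log N^P$ by \eqref{logN_VC} and \eqref{VC_LW}. The approximation term is controlled by Theorem \ref{Theorem approx}, which gives $\inf_f\|f-f_0\|_\infty^2\lesssim \mathcal{S}^{-2\zeta/d}$ up to logs. The stochastic error is then of order $\mathcal{DS}\log\mathcal{S}(\log N^P)^3/N^P$. With $\mathcal{D}\asymp\log N^P$ and $\mathcal{S}\asymp (N^P)^{d/(2\zeta+d)}(\log N^P)^{-5d/(2\zeta+d)}$, both terms balance at $(N^P)^{-2\zeta/(2\zeta+d)}(\log N^P)^{10\zeta/(2\zeta+d)}$. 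Multiplying by $\Gamma$ finishes the proof.

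The main obstacle is the random-effect term (B). The summands $f_i(\boldsymbol{X}_{ij}^P)$ are dependent within a subject, so Bernstein cannot be applied per observation. This term produces the $(n^P)^{-1}$ rate rather than $(N^P)^{-1}$. I would rely on the subject-level blocking and decoupling from \citet{Yan02102025}, as the paper already does in \eqref{T4(B)result}. The only check needed is that their bound holds with constant weight and the same network class, which is immediate since here the weight is $1\le\Gamma$.
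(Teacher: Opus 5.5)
Your proposal is correct and is essentially the paper's proof. The paper likewise obtains the unweighted source-domain rate \eqref{T9P} by rerunning the argument of Theorem~\ref{Theorem fhatr} with unit weight, and then transfers it to $Q_{\boldsymbol{X}}$ via $\|g\|_{L_2(Q_{\boldsymbol{X}})}^2\le\Gamma\|g\|_{L_2(P_{\boldsymbol{X}})}^2$. One bookkeeping caveat, which the paper shares, concerns your ``balance'' claim: the $\log(1/\varepsilon)$ factor in the size bound of Theorem~\ref{Theorem approx} leaves the approximation term, at the prescribed $\mathcal{S}$, of order $(N^P)^{-2\zeta/(2\zeta+d)}(\log N^P)^{10\zeta/(2\zeta+d)+2\zeta/d}$, so the two terms balance only up to an extra $(\log N^P)^{2\zeta/d}$.
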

\begin{theorem}
    \label{Theorem fPunbounded}
    Suppose that Assumptions \ref{assumption rsupp}, \ref{assumption f0B1}, \ref{assumption fiepsilon} and $\kappa_{0}:=\mathbb{E}_{\boldsymbol{X}\sim P_{\boldsymbol{X}}}[r^{2}(\boldsymbol{X})]<\infty$ are satisfied, and let $\mathcal{F}$ be the function class in \eqref{FNN_d_1} bounded by $B_1\geq1$ with the depth $\mathcal{D}=O(\log N^P)$, the weights $\mathcal{B}=O((N^P)^{d/(d+2\zeta)})$ and the size $\mathcal{S}=O((N^P)^{d/(2\zeta+d)}(\log N^P)^{-5d/(2\zeta+d)})$. Then, the mean squared error of the estimator $\hat{f}_{0}^{P}(\cdot)$ in \eqref{hatf0P} satisfies
    \begin{center}
        $\mathbb{E}[\|\hat{f}_{0}^{P}-f_0\|_{L_2(Q_{\boldsymbol{X}})}^2] \lesssim \kappa_{0}^{1/2}((n^P)^{-1}+(N^P)^{-2\zeta/(2\zeta+d)}(\log N^P)^{10\zeta/(2\zeta+d)})^{1/2}$.
    \end{center}
\end{theorem}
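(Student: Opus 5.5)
The statement to prove is the last one shown, Theorem \ref{Theorem fPunbounded}: the naive estimator $\hat f_0^P$ under only a finite second moment $\kappa_0=\mathbb{E}_{P_{\boldsymbol{X}}}[r^2]$. The plan is to reduce everything to the unweighted source-domain risk and then change measure with Cauchy--Schwarz.

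\textbf{Step 1 (change of measure).} Since $Q_{\boldsymbol{X}}$ has density $r$ with respect to $P_{\boldsymbol{X}}$ (Assumption \ref{assumption rsupp}),
\begin{align*}
\|\hat f_0^P-f_0\|_{L_2(Q_{\boldsymbol{X}})}^2=\mathbb{E}_{P_{\boldsymbol{X}}}\big[r(\boldsymbol{X})(\hat f_0^P-f_0)^2(\boldsymbol{X})\big]\le \kappa_0^{1/2}\Big(\mathbb{E}_{P_{\boldsymbol{X}}}\big[(\hat f_0^P-f_0)^4(\boldsymbol{X})\big]\Big)^{1/2}.
\end{align*}
Both $\hat f_0^P$ and $f_0$ are bounded by $B_1$, by the truncated class $\mathcal{F}$ and Assumption \ref{assumption f0B1}. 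Hence $(\hat f_0^P-f_0)^4\le 4B_1^2(\hat f_0^P-f_0)^2$, and
\[
\|\hat f_0^P-f_0\|_{L_2(Q_{\boldsymbol{X}})}^2\le 2B_1\kappa_0^{1/2}\|\hat f_0^P-f_0\|_{L_2(P_{\boldsymbol{X}})}.
\]
Taking expectations and applying Jensen gives
\[
\mathbb{E}\|\hat f_0^P-f_0\|_{L_2(Q_{\boldsymbol{X}})}^2\le 2B_1\kappa_0^{1/2}\big(\mathbb{E}\|\hat f_0^P-f_0\|_{L_2(P_{\boldsymbol{X}})}^2\big)^{1/2}.
\]

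\textbf{Step 2 (source-domain rate).} Next I bound $\mathbb{E}\|\hat f_0^P-f_0\|_{L_2(P_{\boldsymbol{X}})}^2$. This is the proof of Theorem \ref{Theorem fhatr} with the weight $r$ replaced by $1$, i.e.\ with $\Gamma=1$. I reuse the same chain of arguments:
\begin{itemize}
\item the oracle inequality \eqref{T7fj1} with weight $1$;
\item the split into the noise part (A) and the random-effect part (B);
\item truncation of $\varepsilon_{ij}$ at $\tau=2B_3\log N^P$, as in \eqref{T4A1A3result};
\item the Bernstein and covering argument giving \eqref{T4A2result};
\item the bound \eqref{T4(B)result}, taken from \citet{Yan02102025}, for the $f_i$ term. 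Its absorbed fraction is now in $L_2(P_{\boldsymbol{X}})$ rather than $L_2(Q_{\boldsymbol{X}})$, which only simplifies matters.
\end{itemize}
Combining these with \eqref{logN_VC}, \eqref{VC_LW} and Theorem \ref{Theorem approx}, under the stated $\mathcal{D},\mathcal{S},\mathcal{B}$, should give
\[
\mathbb{E}\|\hat f_0^P-f_0\|_{L_2(P_{\boldsymbol{X}})}^2\lesssim (n^P)^{-1}+(N^P)^{-2\zeta/(2\zeta+d)}(\log N^P)^{10\zeta/(2\zeta+d)}.
\]
Here the approximation error is $\varepsilon^2$, with $\varepsilon$ chosen so that $\mathcal{S}$ matches the stated size. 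Substituting into Step 1 yields the claimed bound.

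\textbf{Main obstacle.} Step 1 is elementary. The real work is verifying that the arguments behind Theorem \ref{Theorem fhatr} hold verbatim with unit weight, in particular the variance-to-mean inequality in the Bernstein step. With weight $1$ this inequality is the standard one, so it is easier than the weighted case. The only delicate bookkeeping is the log-exponent from $\mathcal{DS}\log\mathcal{S}(\log N^P)^3$ with the chosen $\mathcal{S}$. This matches the computation already performed for Theorem \ref{Theorem fhatr}, so I would cite it rather than redo it.
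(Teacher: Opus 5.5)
Your proposal is correct and is essentially the paper's proof. The paper likewise writes $\|\hat f_0^P-f_0\|_{L_2(Q_{\boldsymbol{X}})}^2=\mathbb{E}_{P_{\boldsymbol{X}}}[r(\hat f_0^P-f_0)^2]$, applies Cauchy--Schwarz with $\kappa_0$, and uses $|\hat f_0^P-f_0|\le 2B_1$ to get the factor $(4B_1^2\kappa_0)^{1/2}$. It then plugs in the unweighted source-domain rate \eqref{T9P}, obtained by rerunning the argument of Theorem~\ref{Theorem fhatr} with unit weight; the Jensen step you make explicit is used there implicitly.

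One minor bookkeeping caveat applies equally to the paper. The size bound in Theorem~\ref{Theorem approx} carries an extra $\log(1/\varepsilon)$, so with the stated $\mathcal{S}$ the approximation error is only of order $(N^P)^{-2\zeta/(2\zeta+d)}(\log N^P)^{10\zeta/(2\zeta+d)+2\zeta/d}$. The log exponent you inherit from \eqref{T9P} is therefore slightly optimistic, though the polynomial rate is unaffected.
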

Theorem \ref{Theorem fPbounded} establishes that, under a bounded density ratio, the naive estimator $\hat{f}_{0}^{P}(\cdot)$ attains the minimax optimal rate of $O((n^P)^{-1}+(N^P)^{-2\zeta/(2\zeta+d)})$ up to logarithmic factors. This indicates that mild covariate shift does not affect statistical optimality. In contrast to Theorem \ref{Theorem fPbounded}, Theorem \ref{Theorem fPunbounded} shows that under the finite second moment assumption, the estimator achieves the sub-optimal convergence rate of $O((n^P)^{-1/2}+(N^P)^{-\zeta/(2\zeta+d)})$ up to the constants and the logarithmic factors. We then give the proofs of the above theorems, following a similar approach to that in Theorem \ref{Theorem fhatr}. 

\subsection{Proof of Theorem \ref{Theorem fPbounded}}

\begin{proof}
First, we decompose the error term. We define $R_{P}(f)=\mathbb{E}_{(\boldsymbol{X},Y)\thicksim P_{\boldsymbol{X},Y}}[(Y-f(\boldsymbol{X}))^2]$. Through similarly straightforward calculations, we have $\mathbb{E}[\|\hat{f}_{0}^{P}-f_0\|_{L_2(P_{\boldsymbol{X}})}^2]=\mathbb{E}[R_{P}(\hat{f}_{0}^{P})-R_{P}(f_0)]$. For each $f(\cdot)\in \mathcal{F}$, we similar define
\begin{align*}
    & L_{P}(f)=\mathbb{E}_{\boldsymbol{X},Y\sim P_{\boldsymbol{X},Y}}[(Y-f(\boldsymbol{X}))^2-(Y-f_0(\boldsymbol{X}))^2],\\
    & \widehat{L}_{P}(f)=\sum_{i=1}^{n^{P}}\sum_{j=1}^{m^{P}_{i}}
[(Y_{ij}^{P} - f(\boldsymbol{X}_{ij}^{P}))^2-(Y_{ij}^{P} - f_0(\boldsymbol{X}_{ij}^{P}))^2]/N^{P}.
\end{align*}
Following the same procedure as in the derivation of Theorem \ref{Theorem fhatr}, we can analogously obtain
\begin{align}
    \mathbb{E}[\|\hat{f}_{0}^{P}-f_0\|_{L_2(P_{\boldsymbol{X}})}^2] \lesssim (n^P)^{-1}+(N^P)^{-2\zeta/(2\zeta+d)}(\log N^P)^{10\zeta/(2\zeta+d)}.
    \label{T9P}
\end{align}
Then by Assumption \ref{assumption rb} that $\Gamma:=\sup_{\boldsymbol{X}\in\mathcal{X}}r(\boldsymbol{X})<\infty$, there holds
\begin{align*}
    \mathbb{E}[\|\hat{f}_{0}^{P}-f_0\|_{L_2(Q_{\boldsymbol{X}})}^2]&\leq\Gamma\mathbb{E}[\|\hat{f}_{0}^{P}-f_0\|_{L_2(P_{\boldsymbol{X}})}^2]\\
    &\lesssim \Gamma((n^P)^{-1}+(N^P)^{-2\zeta/(2\zeta+d)}(\log N^P)^{10\zeta/(2\zeta+d)}).
\end{align*}
Thus we complete the proof.
\end{proof}

\subsection{Proof of Theorem \ref{Theorem fPunbounded}}

\begin{proof}
By the definition of $r(\boldsymbol{X})$, for each $f(\cdot)\in\mathcal{F}$, the Cauchy-Schwarz inequality yields
\begin{align*}
\|f-f_{0}\|_{L_{2}(Q_{X})}^{2} & =\mathbb{E}_{\boldsymbol{X}\sim P_{\boldsymbol{X}}}[ r(\boldsymbol{X})(f(\boldsymbol{X})-f_{0}(\boldsymbol{X}))^{2}] \\
 & \leq(\mathbb{E}_{\boldsymbol{X}\sim P_{\boldsymbol{X}}}[r^{2}(\boldsymbol{X})])^{1/2}(\mathbb{E}_{\boldsymbol{X}\sim P_{\boldsymbol{X}}}[(f(\boldsymbol{X})-f_{0}(\boldsymbol{X}))^{4}])^{1/2} \\
 & =\kappa_{0}^{1/2}(\mathbb{E}_{\boldsymbol{X}\sim P_{\boldsymbol{X}}}[(f(\boldsymbol{X})-f_{0}(\boldsymbol{X}))^{4}])^{1/2} \\
 & \leq(4B_{1}^{2}\kappa_{0})^{1/2}(\|f-f_{0}\|_{L_{2}(P_{X})}^{2})^{1/2}.
\end{align*}
Since $\hat{f}_0^P(\cdot)\in\mathcal{F}$, the above inequality can be applied to $f=\hat{f}_0^P$. Therefore, together with the result of \eqref{T9P}, we obtain
\begin{align*}
\mathbb{E}[\|\hat{f}_{0}^{P}-f_0\|_{L_2(Q_{\boldsymbol{X}})}^2] & \leq(4B_{1}^{2}\kappa_{0})^{1/2}(\mathbb{E}[\|\hat{f}_{0}^{P}-f_0\|_{L_2(P_{\boldsymbol{X}})}^2])^{1/2} \\
 & \lesssim \kappa_{0}^{1/2}((n^P)^{-1}+(N^P)^{-2\zeta/(2\zeta+d)}(\log N^P)^{10\zeta/(2\zeta+d)})^{1/2}.
\end{align*}
Thus we complete the proof.
\end{proof}

\end{document}